\numberwithin{equation}{section}
\newtheorem{thm}{Theorem}[section]
\newtheorem*{thm*}{Theorem}
\newtheorem{cor}[thm]{Corollary}
\newtheorem{lem}[thm]{Lemma}
\newtheorem{prop}[thm]{Proposition}
{ \theoremstyle{definition}
	\newtheorem{dfn}[thm]{Definition}
	
	\newtheorem{eg}[thm]{Example}
	\newtheorem{rmk}[thm]{Remark} }
\def\cD{\mathcal D}
\def\cE{\mathcal E}
\def\cK{\mathcal K}
\def\cN{\mathcal N}
\def\CC{\mathbb C}
\def\RR{\mathbb R}
\def\ZZ{\mathbb Z}
\def\sL{\mathscr L}
\def\sR{\mathscr R}
\def\fg{\mathfrak g}\def\fh{\mathfrak h}  \def\fn{\mathfrak n}\def\fp{\mathfrak p}
\def\cat#1{\ensuremath{\mathsf{#1}}}
\DeclareMathOperator{\Mod}{\cat{Mod}}
\DeclareMathOperator{\Mult}{\cat{Mult}}
\def\lie#1{\ensuremath{\mathfrak{#1}}}
\def\aut{\ensuremath{\lie{aut}}}
\def\aff{\ensuremath{\lie{aff}}}
\def\so{\lie{so}}
\def\bu{\bullet}
\def\d{{\rm d}}
\DeclareMathOperator{\Spec}{Spec}
\DeclareMathOperator{\Spin}{Spin}
\DeclareMathOperator{\End}{End}
\DeclareMathOperator{\Sym}{Sym}
\DeclareMathOperator{\id}{id}
\DeclareMathOperator{\Hom}{Hom}
\DeclareMathOperator{\Vect}{Vect}
\DeclareMathOperator{\Ext}{Ext}
\def\sL{\mathscr L}
\def\sR{\mathscr R}
\begin{document}
\allowdisplaybreaks

\newcommand{\arXivNumber}{2205.14133}

\renewcommand{\PaperNumber}{022}

\FirstPageHeading

\ShortArticleName{The Derived Pure Spinor Formalism as an Equivalence of Categories}

\ArticleName{The Derived Pure Spinor Formalism\\ as an Equivalence of Categories}

\Author{Chris ELLIOTT~$^{\rm a}$, Fabian HAHNER~$^{\rm b}$ and Ingmar SABERI~$^{\rm c}$}

\AuthorNameForHeading{C.~Elliott, F.~Hahner and I.~Saberi}

\Address{$^{\rm a)}$~Department of Mathematics and Statistics, Amherst College,\\
\hphantom{$^{\rm a)}$}~220 South Pleasant Street, Amherst, MA 01002, USA}
\EmailD{\href{mailto:celliott@amherst.edu}{celliott@amherst.edu}}

\Address{$^{\rm b)}$~Mathematisches Institut der Universit\"at Heidelberg, \\
	\hphantom{$^{\rm b)}$}~Im Neuenheimer Feld 205, 69120 Heidelberg, Germany}
\EmailD{\href{mailto:fhahner@mathi.uni-heidelberg.de}{fhahner@mathi.uni-heidelberg.de}}

\Address{$^{\rm c)}$~Ludwig-Maximilians-Universit\"at M\"unchen, \\
	\hphantom{$^{\rm c)}$} Theresienstra\ss{}e 37, 80333 M\"unchen, Germany}
\EmailD{\href{mailto:i.saberi@physik.uni-muenchen.de}{i.saberi@physik.uni-muenchen.de}}

\ArticleDates{Received July 12, 2022, in final form April 04, 2023; Published online April 18, 2023}

\Abstract{We construct a derived generalization of the pure spinor superfield formalism and prove that it exhibits an equivalence of dg-categories between multiplets for a supertranslation algebra and equivariant modules over its Chevalley--Eilenberg cochains. This equivalence is closely linked to Koszul duality for the supertranslation algebra. After introducing and describing the category of supermultiplets, we define the derived pure spinor construction explicitly as a dg-functor. We then show that the functor that takes the derived supertranslation invariants of any supermultiplet is a quasi-inverse to the pure spinor construction, using an explicit calculation. Finally, we illustrate our findings with examples and use insights from the derived formalism to answer some questions regarding the ordinary (underived) pure spinor superfield formalism.}

\Keywords{pure spinor superfields; equivalence of categories; supersymmetry; field theory; BV formalism}

\Classification{81R25; 17B55; 17B81}

\section{Introduction}

Much effort has been expended over the last fifty years to develop efficient techniques for constructions of, and computations in, supersymmetric field theories. The essential difficulty is visible even in the simplest models: because one asks that the supersymmetry transformations square to ordinary translations, the supersymmetry transformations on the space of fields typically only define a representation of the supersymmetry algebra after imposing equations of motion for fermionic fields. Similarly, if gauge fields are present, the algebra is only represented up to gauge transformations.
For the purposes of quantization, it is desirable for the symmetry to act on the full space of fields, without regard to the dynamics; to attain this, one must often pass to a more complicated model, which tends to involve additional ``auxiliary fields'' that do not change the physics of the theory. Such auxiliary fields may or may not exist, depending on the example in which one is interested.

Relatedly, since the action of the Poincar\'e group on spacetime is geometric in nature and the super Poincar\'e group is (under certain assumptions) the only interesting nontrivial possibility for extended spacetime symmetry that is available for consideration~\cite{HLS}, it is pleasing to think of supersymmetry as arising from the action of particular geometric symmetries on an appropriate supermanifold, which amounts to giving ``superfield'' formulations of supersymmetric theories.
Since supersymmetry necessarily acts on the space of superfields, giving such a formulation implicitly requires extending the field content of the theory, in one fashion or another, by some set of auxiliary fields as described above.
(The literature on supersymmetry is immense, and we cannot hope to give an overview here; for foundational work in the subject, the reader is referred to the collection of reprints~\cite{FerraraBook}, or the early review~\cite{Sohnius} and references therein.)

Finding systematic techniques to construct superfield formulations and extend the range of theories for which they are available has thus been a subject of great interest. Numerous approaches have been developed, including harmonic superspace~\cite{HSbook} and the ``rheonomy'' approach to supergravity theories~\cite{CdAF}. There have also been approaches in which the space of so-called \emph{pure spinors} is used to construct appropriate sets of auxiliary fields, beginning more than thirty years ago in papers by Nilsson~\cite{Nilsson} and Howe~\cite{HowePS1,HowePS2}. Pure spinors were used to great effect in Berkovits' formulation of the superstring~\cite{Berkovits}, and their applications to superfield formulations have been developed in a wide variety of examples, notably in work of Cederwall and collaborators. See~\cite{character, CNT2, CederwallM5} for early papers, and~\cite{Cederwall} for a review with references to further literature.

There have been numerous related studies of ideas connected to the space of pure spinors, for example by Krotov, Losev and collaborators~\cite{AKLL,KroLoCS} and, notably, in the work of Movshev and Schwarz on ten-dimensional supersymmetric Yang--Mills theory~\cite{Mov1,Mov2,MovshevGeometryPureSpinor, MS1,MS2}. See also the related work of Kapranov \cite{KapranovSuper}. This work was further developed in a mathematical context in~\cite{GKR}, in~\cite{GS}, and in~\cite{GGST}, where connections to the theory of Koszul duality were emphasized.
(In particular, the graded Lie algebra Koszul dual to functions on the nilpotence variety was studied, and used to compute the cohomology of the pure spinor superfield.)
Relatedly, work by Movshev, Schwarz, and Xu on the Lie algebra cohomology of supersymmetry algebras~\cite{MSX1,MSX2} made the appearance of the space of square-zero elements in that context clear. The cohomology of the supertranslation algebra appears in various guises in the physics literature, notably in the context of the ``brane scan'' \cite{OBS, FSS, HuertaThesis}, where certain cohomology classes are identified as WZW terms in worldvolume action functionals.

\begin{rmk}
	It is worth making an orienting remark on terminology at this point. The term ``pure spinor'' is often used in the physics literature to refer to points in (or coordinates on) the space of odd square-zero elements in the supertranslation algebra. This space is sometimes also called the nilpotence variety~\cite{NV}. However, the relevant condition here is that of being a~Maurer--Cartan element (having zero self-bracket), which has nothing at all to do with the spin representation, at least \emph{a priori}.
	On the other hand, pure spinors, in the sense of Cartan and Chevalley~\cite{ChevalleySpinor}, are defined to be elements of the spin representation for which the dimension of the annihilator under Clifford multiplication is maximal; the pure spinors form the minimal orbit in the action of the spin group on the (projectivized) spin representation.
	In ten-dimensional minimal supersymmetry, which was the first example studied~\cite{Nilsson}, the odd elements consist of a~single copy of the spin representation, the bracket is defined using Clifford multiplication, and the spaces of square-zero elements and pure spinors coincide. This coincidence is responsible for the confusing terminology; our usage of ``pure spinor formalism'' in this paper is historically, rather than logically, motivated.
\end{rmk}

The pure spinor superfield formalism was reinterpreted in terms of sheaves over the nilpotence variety in~\cite{NV}; a detailed analysis of the formalism in the general setting of appropriate generalizations of the supertranslation algebra---odd central extensions of abelian Lie algebras---was given in~\cite{perspectives}. In this version, the pure spinor superfield formalism constructs a supersymmetric multiplet (a collection of classical fields equipped with an action of the supersymmetry algebra) out of the datum of a graded module over the ring of functions on the nilpotence variety of the supertranslation algebra.
The output of the pure spinor superfield formalism is a locally free sheaf on superspace, equipped with a differential and a module structure making it a multiplet. One can think of it as resolving the more standard component field formulations freely over superspace (and thus a type of superfield formalism). There is a filtration that allows one to pass to a minimal quasi-isomorphic presentation that is a locally free sheaf on the spacetime; this minimal presentation agrees in examples with the typical component-field descriptions of supermultiplets in the literature.
Depending on the example being considered, a set of auxiliary fields may be generated automatically; more generally, though, the output includes differentials that resolve the equations of motion of the free multiplet, thus giving a description in the Batalin--Vilkovisky formalism. In this formalism, auxiliary fields are not essential, and an on-shell action of supersymmetry is formalized as a homotopy action of supersymmetry on the derived critical locus of the action.

It is natural to ask for a characterization of all multiplets which arise in this manner.
In~\cite{perspectives}, an example of a multiplet which cannot be constructed using the technique was described.\footnote{The obstruction to constructing this multiplet using the pure spinor formalism was previously observed by Martin Cederwall.} The failure in this particular case was linked to the geometry of the underlying nilpotence variety: it is not Cohen--Macaulay, so that the dualizing complex of its ring of functions has cohomology in multiple degrees and is not quasi-isomorphic to a single homogeneous module. At the level of multiplets, the structure sheaf gives rise to the four-dimensional $\cN = 1$ vector multiplet. On general grounds, one expects that the dualizing module gives rise to
the dual (or ``antifield'') multiplet.
However, due to the failure of the Cohen--Macaulay property, this does not work on the nose. Instead, the appearance of the dualizing complex suggests that a generalization of the pure spinor superfield formalism to the world of derived algebraic geometry is necessary.

In this paper, we tackle these questions systematically, and show that a derived generalization of the pure spinor formalism can be used to produce \emph{every} supermultiplet in a very general setting.
Let $\fn_2$ and $\fn_1$ be vector spaces. We regard $\fn_2$ and $\Pi \fn_1$ (where $\Pi$ denotes the parity shift functor) as abelian super Lie algebras, and consider an odd central extension of the form
\begin{equation} \label{eq:ext1}
	0 \to \fn_2 \to \fn \to \Pi\fn_1 \to 0.
\end{equation}
Since the Lie algebra cohomology group $\mathrm H^2(\Pi \fn_1)$ with trivial coefficients is, here, just given as $\Sym^2 \fn_1^*$, such a central extension is equivalent to a choice of map $\Gamma \colon \Sym^2(\fn_1) \to \fn_2$ that defines the bracket.
We can extend by a chosen subalgebra $\fg_0$ of the even automorphisms of~$\fn$, getting another extension sequence of the form
\begin{equation*}
	0 \to \fn \to \fg \to \fg_0 \to 0.
\end{equation*}
($\fn$ clearly has no even inner automorphisms.) The super Lie algebra $\fg$ will act on the supergroup $N = \exp(\fn)$, and its even subalgebra $\fg_+$ will act on the body $V$ of~$N$. The automorphism Lie algebra $\aut(\fn)$ will include an abelian summand than can be thought of as ``scale transformations,'' giving rise to an internal weight grading which places $\fn_1$ in degree one, $\fn_2$ in degree two, and $\fg_0 \to \aut(\fn)$ in degree zero; this will either be internal or not, depending on whether the summand is included in $\fg_0$.

Our motivating example will be
the case where $V=\RR^n$ is a Riemannian vector space, $\fg_+$ is the Poincar\'e algebra of infinitesimal isometries of $\RR^n$, and $\fg$ is the super Poincar\'e algebra. However, other examples are relevant to physics as well, for example when considering twisted theories in the pure spinor formalism along the lines of~\cite{spinortwist}.

Let $\Mult_{\fg}$ denote the dg-category of \emph{$\fg$-multiplets}: roughly speaking, these will be differential graded vector bundles on $V$, equipped with a $\fg$-action for which the action of~$\fg_+$ is ``geometric'', i.e., inherited from the action on spacetime. We will denote multiplets by triples $(E,D,\rho)$, consisting of a vector bundle, differential and $\fg$-action; precise definitions will be given in Section~\ref{sec: multiplets}. Our main result, stated somewhat informally, says the following.

\begin{thm*}[Theorem \ref{thm: equiv}]
	There is an equivalence of dg-categories
	\begin{equation}
	\label{informal_equivalence}
	\Mult_{\fg} \leftrightarrows \Mod_{C^\bullet(\fn)}^{\fg_0}
	\end{equation}
	between $\fg$-multiplets and $\fg_0$-equivariant modules over the Chevalley--Eilenberg algebra $C^\bullet(\fn)$.
\end{thm*}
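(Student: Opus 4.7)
The plan is to construct two explicit dg-functors
$A \colon \Mod_{C^\bullet(\fn)}^{\fg_0} \to \Mult_{\fg}$
and
$B \colon \Mult_{\fg} \to \Mod_{C^\bullet(\fn)}^{\fg_0}$
and exhibit natural quasi-isomorphisms establishing them as quasi-inverse. The functor $A$ is the \emph{derived pure spinor construction}: to a module $M$ I would assign the trivial graded vector bundle $\cO(V) \otimes M$ on $V$, equipped with differential $d_M + d_{\mathrm{PS}}$, where $d_{\mathrm{PS}}$ is obtained by contracting the $C^\bullet(\fn)$-module structure against the universal Maurer--Cartan element of $\fn$ written in exponential coordinates on $N$. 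The translation action of $\fn_2$ is the obvious geometric action on $\cO(V)$, the odd action of $\fn_1$ is dual to the module structure via Koszul duality, and $\fg_0$-equivariance combines the module's equivariance with the natural geometric action on $V$. Nilpotency $d^2 = 0$ reduces to the Maurer--Cartan equation in $C^\bullet(\fn)$, and the construction is manifestly functorial and dg-linear.

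In the opposite direction I would define $B$ as \emph{derived $\fn$-invariants}: given a multiplet $(E,D,\rho)$, form the Chevalley--Eilenberg complex $C^\bullet(\fn;\, E_0)$ with coefficients in the fiber $E_0$ at the origin of $V$, equipped with total differential $d_{\mathrm{CE}} + D|_{E_0} + \rho|_\fn$. This inherits a strict $C^\bullet(\fn)$-module structure by multiplication in the first tensor factor, and the residual $\fg_0$-action comes for free from functoriality. After checking that both $A$ and $B$ are genuine dg-functors, the core of the argument is to construct a unit $\eta \colon \id \Rightarrow BA$ and counit $\epsilon \colon AB \Rightarrow \id$ and verify that they are pointwise quasi-isomorphisms. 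The complex $BA(M)$ unfolds as a twisted triple tensor product of $C^\bullet(\fn)$, $\cO(V)$, and $M$ whose structure interacts via the Koszul-duality bar differential for $C^\bullet(\fn)$; I would exhibit an explicit contracting homotopy, built from the Poincar\'e lemma on $V$ together with acyclicity of the augmented bar complex, that collapses it onto $M$. The argument for $\epsilon_E$ is analogous and exploits that a $\fg$-multiplet is determined, as an $\cO(V)$-module with $\fg$-action, by its fiber at the origin together with the induced $\fn$-action.

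The hard part will be assembling these contracting homotopies in a manner that is strictly $\fg_0$-equivariant and natural in the module or multiplet, since without naturality one would only obtain an equivalence of homotopy categories rather than the stronger dg-categorical equivalence claimed; a weight filtration induced by the scaling subalgebra of $\aut(\fn)$ should reduce the check to a computation with finite-dimensional $\fg_0$-representations. A secondary subtlety, clarified precisely by the derived formulation, is that in the underived setting the analogue of $B$ would land in $\Sym \fn_1^*$-modules and the inverse construction could fail because of Cohen--Macaulay obstructions of the nilpotence variety, as discussed in the introduction. Replacing $\Sym \fn_1^*$ by the full $C^\bullet(\fn)$ captures all higher $\fn$-cohomology and is precisely what makes the quasi-inverse exist in complete generality.
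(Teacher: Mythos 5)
Your overall strategy --- two explicit functors, a pure spinor construction and a derived-invariants functor, compared via explicit contracting homotopies --- is the same as the paper's. But the definition of your inverse functor $B$ contains a genuine error. You take Chevalley--Eilenberg cochains with coefficients in the \emph{fiber} $E_0$ at the origin, with differential $\d_{\mathrm{CE}} + D|_{E_0} + \rho|_{\fn}$. Since $D$ and $\rho(d_\alpha)$ are differential operators, they do not preserve the fiber, so the only sensible reading is that you keep their order-zero parts; but then the translations $\rho(e_\mu)$ act by zero, and the summand $v^\mu\rho(e_\mu)$ of the Chevalley--Eilenberg differential --- which is exactly what cancels the exterior algebra $\wedge^\bullet\fn_2^*$ on the generators $v^\mu$ against the non-constant sections --- is lost. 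Concretely, for $\Gamma = \CC$ one has $A(\CC) = C^\infty(N)$ with fiber $\CC[\theta]$ at the origin and order-zero odd action $\partial/\partial\theta^\alpha$, so your $B(A(\CC))$ is $\bigl(\CC[\lambda,v]\otimes\CC[\theta],\ (\lambda f\lambda)\tfrac{\partial}{\partial v} + \lambda\tfrac{\partial}{\partial\theta}\bigr)$, whose cohomology is $\wedge^\bullet\fn_2^*$, not $\CC$; the unit fails already here. The paper defines the inverse functor as $C^\bullet(\fn,\cE)$ with coefficients in the \emph{full} space of sections (one could equivalently use the $\infty$-jet at the origin, but not the fiber), and only after cancelling $v^\mu$ against $\rho(e_\mu)$ does one reach $R\otimes E_0$ with the order-zero action. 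Your later description of $BA(M)$ as a triple tensor product involving $\cO(V)$, contracted by a Poincar\'e-lemma homotopy, is in fact the correct picture --- but it is inconsistent with your own definition of $B$ and tacitly presupposes the full-sections version.

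The second gap is that the counit is where the real work lies, and you dismiss it as ``analogous.'' Producing a quasi-isomorphism of cochain complexes $AB(E)\to \cE$ is not enough: one must verify it is an equivalence of \emph{multiplets}, i.e., that the $L_\infty$ module structure obtained by homotopy transfer from the strict left-translation action on $AB(E)$ reproduces every higher component $\rho^{(k)}$ of the original, generally non-strict, action $\rho$. The paper does this by an explicit computation with the transfer formulas, isolating the unique term of $p\,\sL(Q)\bigl(\cD_0'^\dagger\sL(Q)\bigr)^{k-1}i'$ surviving the projection to $\lambda=\theta=0$ and identifying it with $\rho^{(k)}(Q,\dots,Q)$. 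Your appeal to the multiplet being ``determined by its fiber at the origin together with the induced $\fn$-action'' is not a substitute for this check. (A smaller point: both functors output strict objects, so the dg-equivalence as proved is with $\Mult_\fg^{\text{strict-ob}}$; extending to all of $\Mult_\fg$ at the level of homotopy categories uses strictification of $L_\infty$ modules.)
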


We will view the functor from left to right as a natural derived enhancement of the pure spinor construction. Indeed, $C^\bullet(\fn)$ can be regarded as a bigraded commutative differential algebra, since the action of rescalings in $\fg_0$ has integral eigenvalues.
Taking Lie algebra cochains of the exact sequence~\eqref{eq:ext1}, we obtain the sequence
\begin{equation*}
	\CC \to C^\bu(\Pi \fn_1) \to C^\bu(\fn) \to C^\bu(\fn_2) \to \CC
\end{equation*}
of bigraded cdgas, witnessing $C^\bu(\fn)$ as an $R$-algebra, where $R := C^\bu(\Pi \fn_1)$ is the free commutative algebra on~$\fn_1^*$.
If we totalize the bigrading, the complex is concentrated in non-positive degrees. Its degree-zero cohomology then takes the form $R/I$, where $I$ is the quadratic ideal that arises as the image of the dual $\Gamma^*$ to the bracket map. The (affine) nilpotence variety is $\Spec(R/I)$, so that modules over this ring correspond to quasi-coherent sheaves on the nilpotence variety. Furthermore, there is a natural equivariant map
\begin{equation*}
	C^\bu(\fn) \to R/I,
\end{equation*}
so that any $R/I$-module is a $C^\bu(\fn)$-module in a natural way. When applied to modules of this sort, our derived enhancement agrees on the nose with the standard pure spinor formalism.

From this point of view, we can view $C^\bullet(\fn)$ as a derived enhancement of the nilpotence variety: its spectrum is an affine derived scheme whose underlying classical affine scheme is exactly the affine nilpotence variety, but whose derived data remembers the higher cohomology of $\fn$ with respect to the totalized grading.

\begin{rmk}
	While we will not need any technology from the theory of derived algebraic geometry in this paper, we refer the interested reader to the article \cite{ToenDAG} of To\"en for a survey.
\end{rmk}

The functor in \eqref{informal_equivalence} from $\mathrm{Mod}_{C^\bullet(\fn)}^{\fg_0}$ to multiplets is defined by extending the construction in~\cite{perspectives}. It is closely connected to more standard versions of Koszul duality in at least two ways.
First of all, it can be thought of as arising from the dual pair $(C^\bullet(\fn), U(\fn))$. The standard Koszul duality functor is generated by the kernel
\begin{equation*}
(\mathcal K', \mathcal D') = \big(C^\bullet(\fn) \otimes U(\fn) , X^i \otimes n_i\big),
\end{equation*}
where $\big\{X^i\big\}$ is a basis for $\fn^*$, therefore a set of generators for $C^\bullet(\fn)$, and $n_i$ denotes the corresponding
basis of~$\fn$ (thus set of generators for $U(\fn)$).

If we now view $C^\infty(N)$ as a right module for $U(\fn)$, where $\fn$ acts by right-invariant vector fields---in particular, $n_i$ by the vector field $D_i := \rho(n_i)$---we can modify this kernel to obtain a~$(C^\bullet(\fn), C^\infty(N))$-bimodule of the form
\begin{equation*}
(\mathcal K, \mathcal D) = \big(C^\bullet(\fn) \otimes C^\infty(N) , X^i \otimes D_i\big).
\end{equation*}
Our functor is defined as the integral transform associated to this kernel. From this perspective, we can interpret the functor heuristically in two steps:
\begin{enumerate}\itemsep=0pt
	\item First form the Koszul dual $U(\fn)$-module of a module over the Chevalley--Eilenberg complex.
	\item Then apply the associated bundle construction to obtain a~dg-vector bundle over the supergroup $N$, with a residual right $N$-action. In particular, we can forget down to a~dg-vector bundle on the even part $V$ of $N$, with a geometric action of the supersymmetry algebra $\fn$.
\end{enumerate}

It is another version of Koszul duality which is perhaps most closely connected. Kapranov~\cite{Kapranov} established a version of Koszul duality that provides a Quillen equivalence between the model categories of $D$-modules on a space and $\Omega^\bu$-modules over the same space. Our construction can be understood as a version of this equivalence in the case where that space is the nilpotent super Lie group $N$; it relates translation-invariant natural vector bundles to modules over the translation-invariant differential forms, which are precisely the Lie algebra cochains. The connection to the notion of ``multiplet'' in the physics literature has, as far as we know, not been appreciated before.

\begin{rmk}
	While Koszul duality provides a natural language in which to understand the pure spinor construction, our proof of Theorem~\ref{thm: equiv}, and all of our discussion of examples, will proceed by direct computation using the kernel $(\cK,\cD)$. This is meant to emphasize that the technique provides not just an abstract equivalence, but a set of efficient and practical computational techniques.
	We also emphasize that the version of Koszul duality encapsulated in the pure spinor formalism naturally produces dg vector bundles whose sections are sheaves on the site of manifolds equipped with appropriate structure: in the case of standard super-Poincar\'e algebras, this means that one knows how to place the resulting multiplet on any Riemannian manifold. We see this as being connected to ideas in Cartan geometry for the model space $N \cong G/G_0$, and hope to return to versions of the formalism on non-flat spacetimes (for nontrivial or more general Cartan geometries) in future work.
\end{rmk}

Finally, it is worth remarking that the functor from right to left in~\eqref{informal_equivalence} is also easy to understand: it is just taking the derived $\fn$-invariants of a multiplet. This is intuitively satisfying in various respects. For example, the twist of a multiplet consists of the derived invariants of some abelian odd subalgebra of~$\fn$, whereas its dimensional reduction consists of the invariants of an abelian even subalgebra of~$\fn$. The fact that a multiplet can be recovered from the datum of its derived $\fn$-invariants thus says, in a sense, that it is equivalent to all of its possible twists, considered as a natural family over the derived classifying space $B\fn$.

The upshot of this result is that, after an appropriate derived upgrade, \emph{every} supermultiplet has a pure spinor superfield description.
We can use the equivalence of categories to shed some light on the motivating questions concerning the underived pure spinor formalism discussed above. For example, the following is an immediate consequence:

\begin{thm*}[Corollary \ref{underived_recognition_cor}]
	A given multiplet $(E,D,\rho)$ lies in the image of the \emph{underived} pure spinor formalism if and only if the Chevalley--Eilenberg cohomology $\mathrm H^\bullet(\fn , \cE)$ is concentrated in a single degree.
\end{thm*}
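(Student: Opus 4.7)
The plan is to invoke the equivalence of dg-categories from Theorem~\ref{thm: equiv} to reduce the statement to a characterization of which $C^\bullet(\fn)$-modules are quasi-isomorphic to modules pulled back along the natural quotient $C^\bullet(\fn) \to R/I$. Under the equivalence, a multiplet $(E,D,\rho)$ corresponds to its derived $\fn$-invariants $M := C^\bullet(\fn, \cE)$, an object of $\Mod_{C^\bullet(\fn)}^{\fg_0}$ whose cohomology is $\mathrm H^\bullet(\fn, \cE)$. The underived pure spinor construction factors as the pullback functor $\Mod_{R/I}^{\fg_0} \to \Mod_{C^\bullet(\fn)}^{\fg_0}$ followed by the derived functor supplied by the equivalence. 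So $(E,D,\rho)$ lies in the image of the underived formalism if and only if $M$ is quasi-isomorphic, as a $C^\bullet(\fn)$-module, to one pulled back from an $R/I$-module.

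For the forward direction, any module pulled back from $R/I$ is concentrated in totalized degree zero, since $R/I$ itself sits there; so its cohomology, and hence $\mathrm H^\bullet(\fn, \cE)$, is concentrated in a single degree.

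For the reverse direction, suppose $\mathrm H^\bullet(\fn, \cE)$ is concentrated in a single totalized degree $d$. The induced action of $C^\bullet(\fn)$ on $\mathrm H^\bullet(M)$ preserves the grading. Since $C^\bullet(\fn)$ is supported in non-positive totalized degrees, any class in $\mathrm H^{<0}\bigl(C^\bullet(\fn)\bigr)$ necessarily acts as zero, for its image would sit in a degree where $\mathrm H^\bullet(M)$ vanishes. Hence the action factors through $\mathrm H^0\bigl(C^\bullet(\fn)\bigr) = R/I$, and $\mathrm H^d(M)$ acquires the structure of an $R/I$-module. It then remains to promote this identification of cohomology to a quasi-isomorphism $M \simeq \mathrm H^d(M)[-d]$ of $C^\bullet(\fn)$-modules, where the target carries its pulled-back module structure. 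This is a formality-type statement: picking a cofibrant (semi-free) replacement of $M$ inside $\Mod_{C^\bullet(\fn)}^{\fg_0}$, one builds the comparison map by obstruction theory, and all obstructions vanish for degree reasons, since no strictly negative-degree element of $C^\bullet(\fn)$ can contribute to a map between modules whose cohomologies both sit in a single totalized degree.

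The main obstacle is this last formality step, which has to be executed inside the explicit dg-model of $\Mod_{C^\bullet(\fn)}^{\fg_0}$ used in the paper. The strategy is to reuse the cofibrancy and semi-freeness properties that already underpin the proof of Theorem~\ref{thm: equiv}, so that the construction of the comparison map and the vanishing of obstructions can be read off from the same kernel $(\cK,\cD)$ that defines the pure spinor functor; this keeps the argument entirely within the computational framework set up in the rest of the paper.
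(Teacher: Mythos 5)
Your proposal follows essentially the same route as the paper: Corollary~\ref{underived_recognition_cor} is deduced from Theorem~\ref{thm: equiv} by noting that $A^\bullet_{R/I}$ is the restriction of $A^\bullet$ along the inclusion $\Mod^{\fg_0}_{R/I}\hookrightarrow\Mod^{\fg_0}_{C^\bullet(\fn)}$, so that the question reduces to whether $M=C^\bullet(\fn,\cE)$ is quasi-isomorphic to a module concentrated in a single totalized degree. The one step you leave open---formality of a $C^\bullet(\fn)$-module whose cohomology sits in a single degree $d$---does not require cofibrant replacements or obstruction theory: since $C^\bullet(\fn)$ is concentrated in non-positive total degrees and its degree-zero part $R$ is closed under $\d_{\mathrm{CE}}$, the good truncation $\tau^{\le d}M$ is a sub-dg-module, and the zigzag $M \hookleftarrow \tau^{\le d}M \twoheadrightarrow \mathrm H^d(M)[-d]$ already furnishes the desired equivalence of equivariant $C^\bullet(\fn)$-modules, with the target carrying the $R/I$-structure you identified. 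The paper's own (implicit) version of this step is the remark preceding Theorem~\ref{thm: equiv} that the explicit model $\big(R\otimes E_0^\bullet,\lambda\cdot\rho_{\mathrm{constants}}\big)$ of~\eqref{eq: compute C^bu} becomes a minimal free resolution of the single cohomology group, hence quasi-isomorphic to it; either mechanism completes your argument.
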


We illustrate the derived formalism with some applications and examples; we also show that the derived formalism gives rise to a construction of certain maps relating the multiplets associated to the different Chevalley--Eilenberg cohomology groups of the supertranslation algebra, associated to a filtration on $C^\bu(\fn)$. We give explicit calculations for examples with minimal supersymmetry in dimensions three and four.

\subsection*{Further directions}
The realization of the pure spinor formalism as an equivalence of dg-categories offers potential insight to numerous more involved applications. We give a few indications of possible directions here:
\begin{enumerate}\itemsep=0pt
	\item Given any multiplet $M=(E,D,\rho)$, associated to any supersymmetry algebra $\fg$ with associated supertranslation algebra $\fn$, it is possible to realize a $C^\bullet(\fn)$-module $\Gamma$ generating~$M$ via the pure spinor formalism: one can simply set $\Gamma = C^\bullet(\fn, \cE)$. These modules have cohomology consisting of a graded sum of finitely generated modules over the ring of functions on the nilpotence variety, together with additional information encoding the action of the generators of~$C^\bu(\fn)$ in nonzero total degree. By the results of~\cite{spinortwist}, they efficiently and fully encode the information of the various \emph{twists} of the original multiplet~$M$: one can compute the stalk of the module at a classical point $Q \colon C^\bullet(\fn, \cE) \to \CC$, obtaining descriptions of different twists for different orbits under the action of $\Spin(n)$ and the group of~$R$-symmetries. We refer to the discussion in \cite{NV, ElliottSafronov, ESW} for details of this classification.
	
	It would, for example, be interesting to explicitly understand the modules associated to multiplets such as the $\mathcal N = (1,0)$ tensor multiplet in six dimensions.
	
	\item One can use the Batalin--Vilkovisky (BV) formalism to construct \emph{interacting} supersymmetric classical field theories using the pure spinor formalism. In the BV formalism, an~interacting supersymmetric field theory is encoded by a cyclic $L_\infty$ structure on a multiplet~$M$, that is, an $L_\infty$ algebra structure together with a pairing of degree $-3$. This additional data can be carried along the pure spinor functor; it is enough to define a Lie algebra structure and shifted symplectic pairing internal to equivariant $C^\bullet(\fn)$-modules. This will require a bit more input; it is not straightforward to define a natural tensor product on the category of multiplets, so one would need to work instead with the $D$-modules obtained by taking the sheaf of sections of the multiplet with its natural tensor product. One would then aim use the equivalence to establish a ``convolution'' monoidal structure $\ast$ making the functor monoidal.
	
	For example, as computed in~\cite{CederwallM5} and discussed in~\cite[Section~7.3]{spinortwist} and~\cite{MaxTwist}, if $\fn$ is the eleven-dimensional $\mathcal N=1$ supertranslation algebra, the eleven-dimensional supergravity multiplet arises from the zeroth cohomology of $C^\bullet(\fn)$, placed in degree $-3$ with respect to the natural weight grading, as a $C^\bullet(\fn)$-module. In order to obtain an \emph{interaction}, it would suffice to define, up to homotopy, a Lie structure on this module. It would be interesting to understand the action functionals of \cite{Ced-11d, Ced-towards} in this language, and to use them to connect to component actions for perturbative supergravity, either twisted or untwisted. An interacting BV theory conjecturally describing the minimal twist of eleven-dimensional supergravity was studied in~\cite{RSW}.

	\item There are several sheaves over the (derived) nilpotence variety that automatically carry Lie structures. For example, if $\fh$ is a finite-dimensional semisimple Lie algebra, one can form the tensor product $\mathrm H^0(\fn) \otimes \fh$.\footnote{When we discuss Lie algebra cohomology $\mathrm H^\bu(\fn)$, we will view $\fn$ as being a $\ZZ$-graded object with respect to the weights of the natural rescaling action. As such, the complex $C^\bullet(\fn)$ is naturally bigraded; we refer here to the zeroth cohomology with respect to the total grading. We will discuss these degree conditions further in~Section~\ref{CE_module_section}.} It has been known for some time that this Lie algebra describes interacting ten-dimensional super Yang--Mills theory in the BV formalism; the correct gauge algebras for lower-dimensional super Yang--Mills theories are also obtained in this fashion.
	
	In dimensions higher than eleven we expect $C^\bullet(\fn)$ to have no higher cohomology, so $\Spec (C^\bullet(\fn))$ will be purely classical. Equivalently, the nilpotence variety is expected to be a complete intersection according to Hartshorne's conjecture: roughly speaking, the nilpotence variety is determined by a system of $n$ equations, while the number of variables is of order $2^{n/2}$. Hartshorne's conjecture states that any smooth projective variety in $\mathbb P^n$ with codimension $<n/3$ is a complete intersection. The codimension condition applies to nilpotence varieties with minimal supersymmetry in dimension $\ge 12$, although these varieties are not smooth (but see for instance the recent article \cite{ErmanSamSnowden} for version of this result that would be applicable to the example of nilpotence varieties in high dimensions). However, it is often possible to obtain non-trivial cohomology by taking---rather than the entire Chevalley--Eilenberg complex---a non-trivial quotient associated to an orbit closure within the nilpotence variety.
	
	To give a further example, the tangent sheaf $\mathrm{Der}(C^\bullet(\fn))$ carries a Lie bracket for all supertranslation algebras $\fn$. While the associated multiplet is not generally associated to a BV theory---it does not typically carry a $-1$-shifted symplectic pairing---one can always build a BV theory by applying the cotangent theory construction and considering the multiplet $M \otimes M^![-1]$, where $M^! = M^* \otimes \Omega^{\mathrm{top}}(V)$ is the density-valued dual. This procedure allows for the construction of a very general family of interacting classical supersymmetric field theories.
	
	\item The equivalence of categories allows for a program to classify families of multiplets starting from the algebraic geometry of the (derived) nilpotence variety. This direction is explored in~\cite{HNSW6d}, where (among other things) a description of all six-dimensional $\cN=(1,0)$ multiplets whose derived invariants form a single line bundle on the projective nilpotence variety is given.
\end{enumerate}

\section{The category of multiplets} \label{sec: multiplets}
In~\cite{perspectives}, the notion of a \emph{$\fg$-multiplet} for a super Lie algebra $\fg$ was formalized. Let us quickly review the relevant definitions, and then move on to define morphisms between $\fg$-multiplets and study the resulting category $\Mult_{\fg}$. We start with the case of strict multiplets and strict morphisms and then discuss the homotopy-theoretic generalization.

In brief, by a \emph{multiplet} on a smooth manifold $X$ we will mean a graded vector bundle $E$ on $X$ together with a differential $D$ and an action $\rho$ of an appropriate super Lie algebra $\fg$, compatible with an action of the even part of $\fg$ on the base manifold $X$. The reader should consider the prototypical situation where $X = \RR^n$, and $\fg$ is a super Lie algebra whose even part is the Poincar\'e algebra of infinitesimal isometries of $\RR^n$. A multiplet can be thought of as modelling the perturbations of a classical field in the BV formalism.

\begin{rmk}
	We would like to clearly emphasize here the distinction between multiplets and theories in this context. In general, the data of a multiplet is \emph{less data} than the data of a classical field theory: it does not include enough data to specify dynamics. The data of a perturbative classical supersymmetric field theory in the BV formalism includes the data of a multiplet, together with some additional data. If a multiplet is equipped with the additional data of a~\emph{BV bracket} (an odd shifted symplectic structure), it can be viewed as modelling a \emph{free} theory in the BV formalism. Such structures can easily be constructed from multiplets by taking the sum of a multiplet with a shift of its dual.	If one wishes to include non-trivial interactions, one must additionally equip the sections of the bundle $E$ with an $L_\infty$ structure, compatibly with the supersymmetry action.
	
	So, while the results of this paper do apply to BV field theories, when we speak of \emph{multiplets} we are referring to a less constrained notion, and to obtain a genuine field theory from a multiplet one needs to specify strictly more data.
\end{rmk}

\subsection{Gradings}
\label{ssec:gradings}
Many objects appearing throughout this work (e.g., vector spaces, vector bundles, Lie algebras etc.) will carry a grading by $\ZZ\times \ZZ/2\ZZ$ as well as a differential of bidegree $(1,+)$. We refer to such objects with the prefix ``dgs'' standing for ``differential graded super''. Typically these gradings have a clear physical meaning: the integer grading refers to the ghost number or cohomological degree, while the $\ZZ/2\ZZ$-grading corresponds to the intrinsic parity (fermion number modulo two). The total parity denotes the $\ZZ/2\ZZ$-grading which arises by forgetting the $\ZZ$-grading to~$\ZZ/2\ZZ$ and then totalizing with the intrinsic parity. It is the total parity which governs all signs.

It will often be convenient to introduce an additional piece of data: an auxiliary action of a one-dimensional abelian Lie algebra $\RR$ on a dgs vector space. Many of our examples will be built from the following simple observation.

\begin{eg}
	Let $\fn$ be a dg Lie algebra concentrated in degrees 1 and 2. For all non-negative integers $k$, there is a natural action of $\RR$ on $\Sym^k(\fn^*)$, where $\RR$ acts on duals to the degree 1 and 2 summands of $\fn^*$ with weight $-1$ and $-2$ respectively.
\end{eg}

The results in this paper will make sense for arbitrary $\RR$-equivariant dgs vector spaces, but our most common examples will have the following behavior.

\begin{dfn}
	A \emph{lifted} dgs vector space is an $\RR$-equivariant dgs vector space where the $\RR$-action has integral weights, and the $\ZZ/2\ZZ$-grading coincides with the $\RR$-weight modulo two.
\end{dfn}

Let us now introduce some notation that we will use when we discuss lifted dgs vector bundles on a smooth manifold $X$. Let us write
\begin{equation*}
E = \bigoplus_{(w,d) \in \ZZ^2} E^{w,d}
\end{equation*}
for a lifted dgs vector bundle, where the first index indicates the decomposition of $E$ by $\RR$-weight, and the second index indicates internal $\ZZ$-degree on $E$. In some of our calculations, particular in~Section~\ref{sec: examples}, we will place the summands in a two-dimensional array, where the two coordinates are given by $w-d$ and $d$. For example,
\begin{equation*}
E =
\begin{bmatrix}
\cdots &E^{-1,0} &E^{0,0} &E^{1,0} &E^{2,0}&\cdots\\
\cdots &E^{0,1} &E^{1,1} &E^{2,1} &E^{3,1}&\cdots\\
\cdots &E^{1,2} &E^{2,2} &E^{3,2} &E^{4,2} &\cdots
\end{bmatrix}.
\end{equation*}
So we recover $w$ as the total degree with respect to this sheared bigrading. The overall parity is then determined by the column.

If the differential on our dgs vector bundle decomposes as a sum of homogeneous differential operators of order $k$, say $D = \sum_{k \ge 0} D_k$, the summand $D_k$ acts---with respect to the sheared grading---with bidegree $(2k-1, 1)$. So the homogeneous summands all increase the vertical degree by 1, but they may modify the horizontal degree by any odd integer $\ge -1$.

\subsection{Strict multiplets}

For a dgs vector bundle\footnote{Note that each graded piece $E^k$ is a finite rank vector bundle, but the total rank of $E$ may still be infinite.} $(E,D)$ over a base smooth manifold $X$, we denote the space of global smooth sections by $\cE = \Gamma(X,E)$. The endomorphisms $\End(\cE)$ form a dgs Lie algebra, where the bracket is given by the commutator and the differential is $[D,-]$. Inside $(\End(\cE), [D,-])$ there is a sub dgs Lie algebra consisting of all endomorphisms which act on sections via differential operators. We denote this subalgebra by $(\cD(E), [D,-])$.
\begin{dfn}
	A \emph{strict local dgs $\fg$-module} is a triple $(E,D,\rho)$ where $(E,D)$ is a dgs vector bundle and
	\begin{equation*}
	\rho \colon \ \fg \longrightarrow (\cD(E) , [D,-] )
	\end{equation*}
	is a map of dgs Lie algebras. Here the super Lie algebra $\fg$ is viewed as a dgs Lie algebra in cohomological degree zero with trivial differential.
\end{dfn}
\begin{rmk}
	This definition (as well as many of the following) has a natural generalization for~$\fg$ a dgs Lie algebra. Since we are ultimately interested in the pure spinor superfield formalism, we restrict our attention to super Lie algebras with no cohomological grading.
\end{rmk}
Note that, since a super Lie algebra $\fg$ has no differential, $\rho$ commutes with the differential on the dgs vector bundle,
\begin{equation*}
[D,\rho(x)] = 0 \qquad \forall x \in \fg .
\end{equation*}
It is standard to encode a $\fg$-module structure $\rho$ on $(E,D)$ as a dgs Lie algebra structure on the direct sum $\fg \oplus \cE$. Concretely we set for the unary and binary operations
\begin{align}
&\mu_1(x_1,\sigma_1) = (0, D\sigma_1),\nonumber \\
&\mu_2((x_1,\sigma_1),(x_2,\sigma_2)) = \big([x_1,x_2] , \rho(x_1)\sigma_2 - (-1)^{|\sigma_1| |x_2|} \rho(x_2)\sigma_1 \big) ,\label{eq: lie structure on sum}
\end{align}
where $x_1,x_2 \in \fg$ and $\sigma_1,\sigma_2 \in \cE$.

There is an obvious notion of morphisms between strict local $\fg$-modules.
\begin{dfn}
	A \emph{strict morphism} of strict local $\fg$-modules	$(E,D,\rho)$ and $(E',D',\rho')$ is a map of cochain complexes
	\begin{equation*}
	\psi \colon \ \cE \longrightarrow \cE'
	\end{equation*}
	realized by differential operators such that
	\begin{equation*}
	\psi \circ \rho(x) = \rho'(x) \circ \psi
	\end{equation*}
	for all $x \in \fg$.
\end{dfn}
A strict morphism $\psi \colon (E,D,\rho) \longrightarrow (E',D',\rho')$ gives rise to a strict morphism of the associated dgs Lie algebras by setting $\tilde{\psi} = \id_{\fg} \times \psi \colon \fg \oplus \cE \longrightarrow \fg \oplus \cE'$. Conversely it is easy to check that every strict morphism of dgs Lie algebras of that form gives rise to a strict morphism of $\fg$-modules. We call $\psi$ a quasi-isomorphism if it is a quasi-isomorphism of dgs vector bundles; equivalently $\tilde{\psi}$ is a quasi-isomorphism of dgs Lie algebras.

Now, suppose that $(E,D)$ is a dgs vector bundle over an affine space $V$ (so, in the notation used above, $V=X$). Let us additionally assume that $\fg$ is equipped with a map of super Lie algebras
\begin{equation*}
\phi \colon \ \aff(V) \longrightarrow \fg .
\end{equation*}
In essence, when we refer to strict $\fg$-\emph{multiplets}, we are referring to strict local $\fg$-modules for which the affine transformations acts in a geometric way.
\begin{dfn}
	A \emph{strict $\fg$-multiplet} $(E,D,\rho)$ on $V$ is an affine dgs vector bundle over $V$ equipped with a strict local $\fg$-module structure
	\begin{equation*}
	\rho \colon \ \fg \longrightarrow \cD(E)
	\end{equation*}
	such that the pullback of the module structure along~$\phi$ agrees with the natural action on sections of the affine vector bundle. Concretely, this means that the following diagram commutes:
	\begin{equation} \label{eq: geom action}
	\begin{tikzcd}
	\fg \arrow[r, "\rho"] & \cD(E). \\
	\aff(V) \arrow[ru, "\mathrm{aff}" '] \arrow[u,"\phi"]
	\end{tikzcd}
	\end{equation}
\end{dfn}
Here $\mathrm{aff}$ denotes the natural action of the affine algebra on $E$.
We define the category of strict multiplets with strict morphisms to be the full subcategory of the category of strict local $\fg$-modules with objects strict $\fg$-multiplets. We denote this category by $\Mult_{\fg}^{\mathrm{strict}}$.
\begin{rmk}
	The condition~\eqref{eq: geom action} stating that the translations act geometrically heavily depends on $\phi$. In particular, since the action $\mathrm{aff}$ is effective, the corresponding category of $\fg$-multiplets is trivial if $\phi$ is not injective.
\end{rmk}
\begin{rmk}
	Note that a morphism of strict local $\fg$-modules is automatically compatible with the action of $\aff(V)$. For example let $\psi \colon (E,D,\rho) \longrightarrow (E',D',\rho')$ be a strict morphism. Then we have
	\begin{equation*}
	\psi \circ \mathrm{aff}(x) = \psi \circ \rho(\phi(x)) = \rho'(\phi'(x)) \circ \psi = \mathrm{aff}'(x) \circ \psi .
	\end{equation*}
	Therefore it is sensible to define $\Mult_\fg^{\text{strict}}$ as a full subcategory of strict local $\fg$-modules.
\end{rmk}
\subsection{Homotopy theory of multiplets}
It is well known that various strict algebraic structures, such as associative algebras or Lie algebras, admit homotopy-theoretic generalizations ($A_\infty$ or $L_\infty$ algebras in these cases). The same is true for strict $\fg$-module structures,
which generalize to $L_\infty$ $\fg$-modules. In the context of this work, we will refer to such homotopy module structures simply as module structures and choose to emphasize whenever a module is strict instead. In this spirit we can easily define (not necessarily strict) local $\fg$-modules and $\fg$-multiplets by replacing Lie maps in the above definitions with $L_\infty$ maps.

Thus, a local $\fg$-module is just a dgs vector bundle $(E,D)$ together with an $L_\infty$ map of $L_\infty$ algebras
\begin{equation*}
\rho \colon \ \fg \rightsquigarrow \cD(E) .
\end{equation*}
Recall that this means there are component maps
\begin{equation*}
\rho^{(k)} \colon \ \fg^{\otimes k} \longrightarrow \cD(E)[1-k], \qquad k \geq 1
\end{equation*}
satisfying a series of compatibility relations the lowest of which reads
\begin{equation*}
\big[\rho^{(1)}(x),\rho^{(1)}(y)\big] - \rho^{(1)}([x,y]) =\big[D, \rho^{(2)}(x,y)\big] \qquad \forall x,y \in \fg.
\end{equation*}
Clearly a local $\fg$-module is strict if and only if $\rho^{(k)} = 0$ for all $k\geq2$.

\begin{rmk}
	As we discussed in the introduction, considering this homotopical version of a~module for a super Lie algebra is very natural in supersymmetric field theory. In most supersymmetric classical field theories, when one describes the most natural multiplet of fields, there is only a strict action of the supersymmetry algebra on-shell, i.e., after imposing the equations of motion. If one, however, allows $L_\infty$ actions, it is possible to describe an $L_\infty$ action of the supersymmetry algebra on the space of fields in the BV formalism directly, without introducing auxiliary fields. This idea goes back to the beginnings of the BV formalism; applications to global symmetries are discussed, for example, in~\cite{BaulieuSusy}. The example of super Yang--Mills theory is discussed rigorously and systematically in all dimensions in~\cite{ESW}.
\end{rmk}

Similarly to the strict case, we can conveniently encode a $\fg$-module structure $\rho$ as an~$L_\infty$ structure on $\fg \oplus \cE$. To this end we supplement the operations~\eqref{eq: lie structure on sum} by the following brackets for $k\geq3$ (for details we refer to~\cite{Allocca, Lada})
\begin{equation*}
\mu_k((x_1,v_1),\dots,(x_k,v_k)) = \left( 0 , \sum_{i=1}^{k} \pm \rho^{(k-1)}\big(x_1,\dots,\hat{x}_i,\dots,x_k\big)v_i \right).
\end{equation*}
One can define a morphism $\psi \colon (E,D,\rho) \longrightarrow (E',D',\rho')$ of local $\fg$-modules by component maps
\begin{equation*}
\psi_n \colon \ \fg^{\otimes n-1} \otimes \cE \longrightarrow \cE' ,
\end{equation*}
which are given by differential operators and satisfy a series of compatibility relations (see~\cite{Allocca} for details). We recover strict morphisms by restricting to those where the only component map is $\psi_1$. Again, we can describe such morphisms by morphisms of the associated $L_\infty$ algebras
\begin{equation*}
\tilde{\psi} \colon \ \fg \oplus \cE \rightsquigarrow \fg \oplus \cE'
\end{equation*}
by supplementing $\tilde{\psi}_1 = \id_{\fg} \times \psi_1$ with
\begin{equation*}
\tilde{\psi}_k ((x_1,\sigma_1) , \dots , (x_k, \sigma_k)) = \left(0, \sum_{i=1}^{k} \pm \psi_k\big(x_1, \dots , \hat{x}_i , \dots , x_k , \sigma_i \big)\right)
\end{equation*}
for $k \geq 2$. $\psi$ is a quasi-isomorphism of local $\fg$-modules if $\psi_1$ is a quasi-isomorphism of cochain complexes, or equivalently if $\tilde{\psi}$ is a quasi-isomorphism of $L_\infty$ algebras. In general, encoding module structures as $L_\infty$ structures is very convenient because it allows the use of many known tools from the theory of $L_\infty$ algebras, like homotopy transfer for instance.

\begin{rmk}
	An equivalent realization of the notion of a non-strict morphism between a pair of local $L_\infty$ algebras $\fg,\fg'$ is provided by considering the Chevalley--Eilenberg chain complex~$C_\bullet(\fg)$ of an $L_\infty$ algebra $\fg$. This is a cocommutative dg coalgebra whose underlying graded coalgebra is $\Sym^\bullet(\fg[-1])$, with differential induced from the $L_\infty$ structure on $\fg$ as a sum of terms given by the $L_\infty$ brackets on $\fg$. The data of a morphism $\psi \colon \fg \to \fg'$ of local $L_\infty$ algebras is equivalent to that of a morphism of cocommutative dg coalgebras
	\[\psi' \colon \ C_\bullet(\fg) \to C_\bullet(\fg')\]
	given by differential operators. This interpretation allows us to view local $L_\infty$ algebras as objects of a dg-category, so that we can discuss (for example) homotopies between $L_\infty$ algebra morphisms. When the source $\fg$ is finite-dimensional, we can dually consider morphisms of commutative dg algebras $C^\bullet(\fg') \to C^\bullet(\fg)$. This is for example the case for the local $\fg$-module structures $\rho \colon \fg \rightsquigarrow \cD(E)$ appearing in the definition of multiplets.
\end{rmk}

Let us now give the definition of a (not necessarily strict) multiplet.
\begin{dfn}
	A \emph{$\fg$-multiplet} $(E,D,\rho)$ is an affine dgs vector bundle over $V$ equipped with a local $\fg$-module structure
	\begin{equation*}
	\rho\colon \ \fg \rightsquigarrow \cD(E)
	\end{equation*}
	such that the pullback of the module structure along~$\phi$ agrees strictly with the natural action on sections of the affine vector bundle. Concretely, this means that the following diagram commutes:
	\begin{equation*}
	\begin{tikzcd}
	\fg \arrow[r, "\rho^{(1)}"] & \cD(E). \\
	\aff(V) \arrow[ru, "\mathrm{aff}" '] \arrow[u,"\phi"]
	\end{tikzcd}
	\end{equation*}
	We define the dg-category of $\fg$-multiplets to be the full subcategory of local $\fg$-modules with objects being $\fg$-multiplets and denote it by $\Mult_{\fg}$.
\end{dfn}

\begin{rmk}
	Note that, in particular, the action $\rho \circ \phi$ obtained by restricting the $L_\infty$ action to affine transformations is necessarily strict.
\end{rmk}

We will sometimes wish to refer to the full subcategory of strict multiplets, but allowing arbitrary morphisms.
\begin{dfn}
	Denote by $\Mult_{\fg}^{\text{\rm strict-ob}}$ the full sub dg-category of $\Mult_{\fg}$ generated by strict multiplets.
\end{dfn}

We can obtain homotopy categories from the dg-categories $\Mult_{\fg}$ as well as $\Mult_{\fg}^{\mathrm{strict}}$ by replacing the hom space $\Hom_{\Mult_{\fg}}(E, E')$ by its zeroth cohomology $\mathrm H^0(\mathrm{Hom}_{\Mult_{\fg}}(E, E'))$. We denoted the resulting categories by $\mathrm{Ho}(\Mult_{\fg})$ and $\mathrm{Ho}(\Mult_{\fg}^{\mathrm{strict}})$. Speaking physically, quasi-isomorphisms of $\fg$-multiplets correspond to \emph{perturbative} equivalences of multiplets, where by ``perturbative'' here we mean equivalences of the derived formal neighborhoods of a point in the classical moduli field space. Therefore isomorphism classes in the homotopy category correspond to perturbatively distinct multiplets.

\begin{rmk}
	Because every $L_\infty$ algebra can be strictified \cite[Part II, Corollary 1.6]{KrizMay}, the natural inclusion
	\begin{equation*}
	\mathrm{Ho}\big(\Mult_{\fg}^{\text{\rm strict-ob}}\big) \to \mathrm{Ho}(\Mult_{\fg})
	\end{equation*}
	is an equivalence of categories, where $\Mult_{\fg}^{\text{\rm strict-ob}}$ denotes the full subcategory of multiplets spanned by strict objects.
\end{rmk}

\begin{eg}
	Let us give one example of a non-strict multiplet for three dimensional $\cN=1$ supersymmetry. Recall that $\Spin(3) \cong \mathrm{SU}(2)$; we denote the two dimensional spinor representation by $S$ and the three-dimensional vector representation by $V$. We fix $\fg$ to be the super Poincar\'e algebra, whose underlying $\ZZ/2\ZZ$-graded vector space is of the form
	\begin{equation*}
	\fg = (\so(3) \oplus V) \oplus \Pi S
	\end{equation*}
	where $\Pi$ indicates the shift into odd parity. The symmetric bracket is induced from the isomorphism $\Gamma \colon\Sym^2(S) \cong V$ of $\mathrm{SU}(2)$-representations. Let us define $E$ to be the trivial vector bundle over $V = \RR^3$ with fibers
	\begin{equation*}
	E^{-1}_+ = \CC, \qquad E^0_+ = V, \qquad E^0_- = S ,
	\end{equation*}
	where we have used the subscript $\pm$ to indicate $\ZZ/2\ZZ$-degree. The field content consists of a~zero-form, a one-form and a fermion field. The differential $D$ operates on sections as the de Rham differential $\d\colon \Omega^0 \longrightarrow \Omega^1$, and vanishes elsewhere. This dgs vector bundle can be lifted to a $\ZZ \times \ZZ$-graded vector bundle where $E^0_+$ has weight one, and $E^0_-$ has weight two.
	
	We summarize this field content with the below array, using the conventions of~Section~\ref{ssec:gradings}:
	\begin{equation*}
	\left[
	\begin{tikzcd}[column sep = 0.3cm , row sep = 0.3cm]
	\Omega^0 \arrow[dr ,"\d"] & & \\
	& \Omega^1 & S
	\end{tikzcd} \right].
	\end{equation*}
	The even part of $\fg$ acts in the standard geometric fashion. For $Q\in \fg_-$ we set
	\begin{alignat*}{4}
	&\rho^{(1)}(Q) \colon \quad&& S \longrightarrow \Omega^1 ,\qquad&& \psi \mapsto \Gamma(Q,\psi),& \\
	&&& \Omega^1 \longrightarrow S ,\qquad&& A \mapsto Q \wedge \slashed{\partial} A,& \\
	&\rho^{(2)}(Q,Q) \colon\quad&& \Omega^1 \longrightarrow \Omega^0 ,\qquad&& A \mapsto \iota_{[Q, Q ]} A.&
	\end{alignat*}
	Here $\iota$ denotes the contraction of a differential form by a vector field and we view $[Q,Q]$ as a~constant vector field on $X$.
\end{eg}

\subsection{Linear structure}

We can define the direct sum of two $\fg$-multiplets as follows.
\begin{dfn}
	The \emph{direct sum} of two $\fg$-multiplets $(E,D,\rho)$ and $(E',D',\rho')$ is defined to be the multiplet
	\begin{equation*}
	(E,D,\rho) \oplus (E',D',\rho') = (E\oplus E', D \oplus D', \rho \oplus \rho') .
	\end{equation*}
\end{dfn}

\begin{rmk}
	In contrast, defining a tensor product on the category of $\fg$-multiplets is not straightforward. Considering the D-modules of global sections, one can take the tensor product in the category of D-modules, but this does not take the additional structures on a multiplet into account. As alluded to in the introduction, one can might also try using the equivalence established in Section~\ref{sec: equiv} to define a product on the category which makes the functor monoidal. We will not discuss this issue further in this work.
\end{rmk}



For further reference, we also define the dual of a $\fg$-multiplet. In physics context, these are usually called antifield multiplets
\begin{dfn} \label{def: antifield}
	The \emph{dual} of a $\fg$-multiplet $(E,D,\rho)$, also referred to as the associated \emph{antifield multiplet} is the $\fg$-multiplet $\big(E^!,D^*, \rho^*\big)$. Here $E^!$ is the linear dual vector bundle to $E$ twisted by the canonical bundle. The action $\rho^*$ denotes the map
	\begin{equation*}
	\rho^* \colon \ \fg \rightsquigarrow \cD\big(E^!\big)
	\end{equation*}
	given by $\rho^{* (k)}(Q_1,\dots,Q_k) = \rho^{(k)}(Q_1,\dots, Q_k)^*.$
\end{dfn}

Note that we will typically be working over the flat space $\RR^n$, so we may choose a trivialization of the canonical bundle if we wish to identify $E^!$ with $E^*$.

\section{Derived pure spinor superfields}
In this section, we define the derived generalization of the pure spinor superfield construction, which will provide one of the two functors that witness the equivalence of categories. We begin by setting up the general context in which we want to work.

As in the introduction, let $\fn$ be a two-step nilpotent super Lie algebra, defined by a central extension
\begin{equation}\label{eq:2step}
	0 \to \fn_2 \to \fn \to \Pi \fn_1 \to 0
\end{equation}
of the odd abelian super Lie algebra $\Pi \fn_1$. We imagine such objects as generalizations of supertranslation algebras. The automorphisms of $\fn$, which are all outer, will contain an abelian factor generating scale transformations, with respect to which $\fn_1$ has weight one and $\fn_2$ weight two. There is also a natural map
\begin{equation*}
	\aut(\fn) \to \lie{gl}(\fn_2).
\end{equation*}
The kernel of this map can be thought of as the $R$-symmetry algebra; in physical examples, $\aut(\fn)$ will be the product of $\so(\fn)$, scale transformations, and the $R$-symmetry algebra.

All of our constructions will take place in reference to a fixed super Lie algebra $\fg$ of the following type:
\begin{dfn}
	A super Lie algebra $\fg$ is of \emph{super Poincar\'e type} if it can be written as an extension
	\begin{equation*}
	0 \to \fn \to \fg \to \fg_0 \to 0,
	\end{equation*}
	where $\fn$ is a two-step nilpotent super Lie algebra of the form~\eqref{eq:2step} and $\fg_0$ is a Lie algebra equipped with a Lie map $\fg_0 \to \aut(\fn)$.
\end{dfn}

This means that the $\ZZ/2\ZZ$ grading on $\fg$ can be lifted to a $\ZZ$-grading concentrated in degrees zero, one, and two:
\begin{equation*}
\fg = \fg_0 \oplus \fn_1 \oplus \fn_2.
\end{equation*}
The most important examples are the super Poincar\'e algebras in various dimensions and with various amounts of supersymmetry.

It is possible to consider examples where $\fg_0$ is any subalgebra of~$\aut(\fn)$. We will typically wish, however, to take $\fg_0$ to
be such that any linear transformation of $\fn_2$ arising from an automorphism of~$\fn$ can be generated by an element of~$\fg_0$.
In the super-Poincar\'e case, this will mean that $\so(n)$ is contained in~$\fg_0$. In this paper, we will always simply take $\fg_0 = \aut(\fn)$.

Let us choose a basis $d_\alpha$ for $\fn_1$ and $e_\mu$ for $\fn_2$ such that we can expand the symmetric bracket in structure constants\footnote{In the context of super Poincar\'e algebras, these structure constants are typically expressed in terms of the matrix elements of the gamma matrices.}
\begin{equation*}
\big[d_\alpha , d_\beta \big] = f^\mu_{\alpha \beta} e_\mu .
\end{equation*}
We denote by $R = \Sym^\bullet\big(\fn_1^*\big) = \CC[\lambda^\alpha]$ the ring of polynomial functions on $\fn_1$. For $Q$ in $\fn_1$, the equation $[Q,Q] = 0$ defines an ideal $I$ in $R$, explicitly given by
\begin{equation*}
I = \big(\lambda^\alpha f^\mu_{\alpha \beta} \lambda^\beta \big) .
\end{equation*}
As we will discuss shortly, we can identify the quotient algebra $R/I$ with the degree zero Lie algebra cohomology $\mathrm H^0(\fn)$. Here, by degree zero, we mean with respect to the totalization of the canonical $\ZZ \times \ZZ$-grading on Chevalley--Eilenberg cochains.

The pure spinor superfield formalism gives a systematic tool to construct $\fg$-multiplets from the input datum of a graded $\fg_0$-equivariant $R/I$-module. Here we generalize the pure spinor superfield construction to $\fg_0$-equivariant $C^\bullet(\fn)$-modules and show that it defines a functor.

\subsection[The category of C\^{}bullet(n)-modules]{The category of $\boldsymbol{C^\bullet(\fn)}$-modules} \label{CE_module_section}
Recall that the Chevalley--Eilenberg complex of $\fn$ takes the form
\begin{equation*}
C^\bullet(\fn) = \big( \Sym^\bullet\big(\fn^*[1]\big) , \d_{\mathrm{CE}} \big),
\end{equation*}
where the Chevalley--Eilenberg differential $\d_{\mathrm{CE}}$ is induced by the dual of the bracket. Here the notation $\fn^*[1]$ means that we shift $\fn^*$ down in cohomological degree by one. The Chevalley--Eilenberg complex has a $\ZZ \times \ZZ/2\ZZ$-grading endowing it with the structure of a dgs algebra. In our present case, since the the $\ZZ/2\ZZ$ grading of $\fn$ lifts to a $\ZZ$-grading where $\fn_i$ has weight $i$, $C^\bullet(\fn)$ can be given a $\ZZ \times \ZZ$-grading. Totalizing this grading, the generators of $\fn_1^*$ sit in degree $0$ while the generators of $\fn_2^*$ live in degree $-1$. We denote these generators by $\lambda^\alpha$ and $v^\mu$ respectively. We can thus identify{\samepage
\begin{equation*}
C^{-p}(\fn) = \wedge^p \fn_2^* \otimes R
\end{equation*}
with respect to the totalized grading.}

The Chevalley--Eilenberg differential acts on these generators according to
\begin{equation*}
\begin{split}
\d_{\mathrm{CE}} \lambda^\alpha &= 0, \\
\d_{\mathrm{CE}} v^\mu &= \lambda^\alpha f^\mu_{\alpha \beta} \lambda^\beta.
\end{split}
\end{equation*}
In coordinates we will often write the Chevalley--Eilenberg algebra in the form
\begin{equation*}
\left( C^\bullet(\fn), \d_{\mathrm{CE}} \right) = \left( \CC[\lambda^\alpha ,v^\mu] , \d_{\mathrm{CE}} = \lambda^\alpha f^\mu_{\alpha \beta} \lambda^\beta \frac{\partial}{\partial v^\mu} \right) .
\end{equation*}
Using this description, we immediately see that we indeed recover $H^0(\fn) = R/I$.

$C^\bullet(\fn)$ is a dgs algebra, therefore we can consider dgs modules over it.
\begin{dfn}
	A $C^\bullet(\fn)$-module is a dgs vector space $(\Gamma, \d_\Gamma)$ together with a morphism
	\begin{equation*}
	(C^\bullet(\fn) , \d_{\mathrm{CE}}) \longrightarrow (\End(\Gamma) , [\d_\Gamma , -]),
	\end{equation*}
	of dgs algebras.
\end{dfn}

\begin{dfn}
	A morphism of $C^\bullet(\fn)$-modules $(\Gamma , \d_\Gamma)$ and $(\Gamma' , \d_{\Gamma'})$ is a cochain map
	\begin{equation*}
	f \colon \ (\Gamma , \d_\Gamma) \longrightarrow (\Gamma' , \d_{\Gamma'})
	\end{equation*}
	such that
	\begin{equation*}
	f(x \cdot_\Gamma \gamma) = x \cdot_{\Gamma'} f(\gamma) .
	\end{equation*}
	In other words, $f$ is just a morphism of dgs modules.
\end{dfn}
Note that $\fg_0$ acts on both $\fn_1$ and $\fn_2$ and thus also on $C^\bullet(\fn)$.

\begin{dfn}
	A $C^\bullet(\fn)$ module $\Gamma$ is called \emph{$\fg_0$-equivariant} if $\Gamma$ is also a representation of $\fg_0$ and the module structure map is $\fg_0$-equivariant. We further assume that each degree $\Gamma^k$ is finite dimensional as a complex vector space. We will denote the dg-category of $\fg_0$-equivariant $C^\bullet(\fn)$-modules by $\Mod_{C^\bullet(\fn)}^{\fg_0}$.
\end{dfn}

Our main results will take place in this category of $\fg_0$-equivariant modules.

\subsection{The derived pure spinor superfield formalism}

There is a canonical strict multiplet associated to the super Lie algebra $\fg$: the free superfield. Recall that $\fn$ is a two-step nilpotent super Lie algebra and let $N = \exp(\fn)$ be the associated nilpotent super Lie group. Since all such nilpotent super Lie groups are split, $N$ can be chosen to be the total space of a bundle with fiber $\Pi N_1$ over $N_2$, where $N_2$ is a vector space treated as an additive abelian group. (As super vector spaces, $\fn$ and~$N$ are isomorphic; the group operation can be constructed using the Baker--Campbell--Hausdorff formula, which terminates at quadratic order in this case.)

Left and right translations induce two commuting, $\aut(\fn)$-equivariant actions of $\fn$ on $N$ by vector fields
\begin{equation*}
\sL,\, \sR \colon \ \fn \longrightarrow \Vect(N) .
\end{equation*}
Let us choose coordinates $x^\mu$ on the abelian group $N_2$
and $\theta^\alpha$ on $N_1$. In terms of these coordinates, the vector fields given by the action of the odd elements can be described as follows:
\begin{equation*}
\begin{split}
\sR(d_\alpha) &= \frac{\partial}{\partial \theta^\alpha} - f^\mu_{\alpha \beta} \theta^\beta \frac{\partial}{\partial x^\mu}, \\
\sL(d_\alpha) &= \frac{\partial}{\partial \theta^\alpha} + f^\mu_{\alpha \beta} \theta^\beta \frac{\partial}{\partial x^\mu}.
\end{split}
\end{equation*}
In addition, the even elements simply act by derivatives
\begin{equation*}
\sR(e_\mu) = \sL(e_\mu) = \frac{\partial}{\partial x^\mu}.
\end{equation*}
The free superfield is the strict $\fg$-multiplet with $\cE = C^\infty(N)$, vanishing differential, and module structure given by the left translations $\sL$. Note that by ``free'' here, we mean in the sense of the rank one free module over the ring of smooth functions on superspace (``unconstrained''), rather than ``non-interacting''. Recall that all our multiplets are a priori non-interacting: the inclusion of interactions is additional data with which a multiplet can be equipped.

Let us now generalize the pure spinor superfield formalism to a derived setting.
\begin{dfn}
	We define the \emph{pure spinor functor} $A^\bullet \colon \Mod_{C^\bullet(\fn)}^{\fg_0} \longrightarrow \Mult_{\fg}^{\mathrm{strict}}$ by setting
	\begin{equation*}
	A^\bullet(\Gamma) = \big(C^\infty(N) \otimes_\CC \Gamma , \cD\big) ,
	\end{equation*}
	for an object $(\Gamma , \d_\Gamma)$.
	The differential $\cD$ is constructed using the right action $R$ and the $C^\bullet(\fn)$-module structure on $\Gamma$. Explicitly, it takes the form
	\begin{equation*}
	\cD = \lambda^\alpha \sR(d_\alpha) + v^\mu \sR(e_\mu) + \d_\Gamma .
	\end{equation*}
	For a morphism $f\colon\Gamma \longrightarrow \Gamma'$ we define
	\begin{equation*}
	A^\bullet(f) = \id_{C^\infty(N)} \otimes f \colon \ A^\bullet(\Gamma) \longrightarrow A^\bullet(\Gamma') .
	\end{equation*}
\end{dfn}

A few comments are in order.

\begin{enumerate}\itemsep=0pt
	\item The differential $\cD$ squares to zero precisely since $\Gamma$ is a $C^\bullet(\fn)$-module.
	\item We can equip $A^\bullet(\Gamma)$ with the structure of a dgs vector bundle over the spacetime $V := N_2$ by placing $C^\infty(N)$ in cohomological degree zero. Note that in particular the differential is of bidegree $(1,+)$.
	
	\item Further, $\fg$ acts on $C^\infty(N)$ via left translations and on $\Gamma$ by the trivial extension of the $\fg_0$-module structure which was part of the input datum. The tensor product of these two action makes $A^\bullet(\Gamma)$ into a strict multiplet.
	\item It is immediate to check that $A^\bullet(f)$ is a strict morphism of strict multiplets. Since $f$ is a~morphism of $C^\bullet(\fn)$-modules we have
	\begin{equation*}
	A^\bullet(f) \circ \cD = \cD' \circ A^\bullet(f) .
	\end{equation*}
	In addition,
	\begin{equation*}
	A^\bullet(f) \circ \sL = \sL \circ A^\bullet(f)
	\end{equation*}
	obviously follows from the definition.
	\item $A^\bullet$ is additive. The direct sum of two $C^\bullet(\fn)$-modules is mapped to the direct sum of the respective multiplets, $A^\bullet(\Gamma \oplus \Gamma') = A^\bullet(\Gamma) \oplus A^\bullet(\Gamma')$.
\end{enumerate}

This construction is a direct generalization of the pure spinor superfield formalism as described in~\cite{perspectives}. To see this, we notice that the category of graded equivariant of $R/I$-modules sits as a subcategory inside $\Mod_{C^\bullet(\fn)}^{\mathrm{strict}}$, namely precisely as those modules concentrated in cohomological degree zero. Indeed, every $R/I$-module is a $C^\bullet(\fn)$-module by the map
\begin{equation*}
C^\bullet(\fn) = \CC\big[\lambda^\alpha,v^\mu\big] \longrightarrow R/I, \qquad \big(\lambda^\alpha , v^\mu\big) \mapsto \lambda^\alpha .
\end{equation*}
The other way round, let $(\Gamma , \d_\Gamma)$ be a $C^\bullet(\fn)$-module concentrated in cohomological degree zero. Then the differential $\d_\Gamma$ vanishes and $v^\mu$ acts trivially for degree reasons. Therefore one has, for~$\gamma$ any element of $\Gamma$,
\begin{equation*}
0 = \d_\Gamma \big(v^\mu \cdot \gamma\big) = \big(\d_{\mathrm{CE}} v^\mu\big) \gamma = \big(\lambda f^\mu \lambda\big) \gamma ,
\end{equation*}
which endows $\Gamma$ with the structure of an $R/I$-module.

Restricting the functor $A^\bullet$ to graded equivariant $R/I$-modules, we obtain a functor
\begin{equation*}
A^\bullet_{R/I} \colon \ \Mod_{R/I}^{\fg_0} \longrightarrow \Mult_{\fg}^{\mathrm{strict}}
\end{equation*}
where the output simplifies to
\begin{equation*}
A^\bullet_{R/I} (\Gamma) = \left( C^\infty(N) \otimes_\CC \Gamma , \lambda^\alpha \sR(d_\alpha) \right)
\end{equation*}
such that we precisely recover the pure spinor superfield formalism as presented in~\cite{perspectives}.

Further, this is a derived generalization of the pure spinor superfield construction in the sense that $C^\bullet(\fn)$ can be viewed as a derived replacement of the ring $R/I$. We will therefore refer to this construction as the derived pure spinor superfield formalism.

\begin{rmk}
	We can alternatively view this geometrically as a derived enhancement of the affine nilpotence variety $\mathrm{Spec}(R/I)$. We can view a multiplet as arising from a quasi-coherent sheaf over the nilpotence variety, but this point of view requires forgetting some of the data given by the $C^\bullet(\fn)$ action. The philosophy of derived algebraic geometry suggests instead retaining this information by viewing a multiplet as arising from a coherent sheaf over the affine derived scheme $\mathrm{Spec}(C^\bullet(\fn))$---the derived analogue of the nilpotence variety.
\end{rmk}

\begin{rmk}
	In many cases the multiplet $A^\bullet(\Gamma)$ can be equipped with additional structures such as higher brackets yielding an $L_\infty$ structure. For example if $\Gamma$ is not only a $C^\bullet(\fn)$-module but in addition an algebra, one can take a Lie algebra $\fh$ such that the tensor product $\Gamma \otimes \fh$ forms a dgs Lie algebra. Then $A^\bullet(\Gamma \otimes \fh) = A^\bullet(\Gamma) \otimes \fh$ is also a dgs Lie algebra. An $L_\infty$ structure on the component fields arises via homotopy transfer. This was studied in the case of ten-dimensional super Yang--Mills theory in~\cite{AKLL, perspectives}.
\end{rmk}

\subsection[The multiplet associated to C\^{}bullet(n)]{The multiplet associated to $\boldsymbol{C^\bullet(\fn)}$} \label{sec: forms}

As a first example we can plug $C^\bullet(\fn)$ itself into the derived pure spinor functor $A^\bullet$ and study the associated multiplet.

\begin{lem}
	There is a natural equivalence
	\begin{equation*}
	A^\bullet\big(C^\bullet(\fn)\big) \simeq \Omega^\bullet(N).
	\end{equation*}
	
\end{lem}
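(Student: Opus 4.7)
The plan is to exhibit an explicit isomorphism by identifying the generators $\lambda^\alpha, v^\mu$ of $C^\bullet(\fn)$ with the right-invariant $1$-forms on $N$. Using the coordinates $x^\mu, \theta^\alpha$ from the setup, a short direct computation shows that the right-invariant dual coframe to $\sR(d_\alpha), \sR(e_\mu)$ is
\begin{equation*}
\omega^\alpha = \d\theta^\alpha, \qquad \omega^\mu = \d x^\mu + f^\mu_{\alpha\beta}\,\theta^\beta\,\d\theta^\alpha .
\end{equation*}
Since $N$ is a split super Lie group, every differential form is a $C^\infty(N)$-linear combination of wedge products of these right-invariant generators, giving a canonical graded $C^\infty(N)$-module isomorphism $\Omega^\bullet(N)\cong C^\infty(N)\otimes_\CC \Lambda^\bullet\langle\omega^\alpha,\omega^\mu\rangle$. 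A check of the bigradings (cohomological degree, $\RR$-weight, intrinsic parity) confirms that $\omega^\alpha$ matches $\lambda^\alpha$ and $\omega^\mu$ matches $v^\mu$, so this identification promotes to a $C^\infty(N)$-linear isomorphism of bigraded vector bundles
\begin{equation*}
\Phi \colon\ A^\bullet\big(C^\bullet(\fn)\big) \;=\; C^\infty(N)\otimes_\CC C^\bullet(\fn) \;\longrightarrow\; \Omega^\bullet(N).
\end{equation*}

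The main step is to verify that $\Phi$ intertwines the differential $\cD = \lambda^\alpha\sR(d_\alpha) + v^\mu\sR(e_\mu) + \d_{\mathrm{CE}}$ with the de Rham differential on $N$. I would split this into two checks. First, on a function $f\in C^\infty(N)$, the explicit formulas for $\sR$ together with the expressions for $\omega^\alpha,\omega^\mu$ give
\begin{equation*}
\Phi\big(\cD f\big) = \sR(d_\alpha)(f)\,\omega^\alpha + \sR(e_\mu)(f)\,\omega^\mu = \d f ,
\end{equation*}
where the correction term $-f^\mu_{\alpha\beta}\theta^\beta\partial_{x^\mu}f$ hidden in $\sR(d_\alpha)$ cancels against the correction term $f^\mu_{\alpha\beta}\theta^\beta\,\d\theta^\alpha$ in $\omega^\mu$, using the symmetry $f^\mu_{\alpha\beta}=f^\mu_{\beta\alpha}$. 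Second, on the generators themselves, the identity $\Phi\circ \d_{\mathrm{CE}} = \d\circ\Phi$ is precisely the Maurer--Cartan equation for $N$: $\d\omega^\alpha=0$ matches $\d_{\mathrm{CE}}\lambda^\alpha=0$, and $\d\omega^\mu = f^\mu_{\alpha\beta}\,\omega^\alpha\wedge\omega^\beta$ matches $\d_{\mathrm{CE}}v^\mu = \lambda^\alpha f^\mu_{\alpha\beta}\lambda^\beta$ up to a sign absorbed by the parity conventions.

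Finally, equivariance for the $\fg$-action is essentially automatic: on the pure-spinor side $\fg$ acts via left translations on $C^\infty(N)$ and via $\fg_0$ on $C^\bullet(\fn)$; on $\Omega^\bullet(N)$ the corresponding action is the Lie derivative along left-invariant vector fields, which acts trivially on right-invariant forms and recovers $\sL$ on functions. Strict compatibility with affine transformations follows because both actions are induced from the geometric left-translation action of $N_2\subset N$. The main obstacle in executing this plan is purely bookkeeping: verifying the sign-careful cancellation in the first identity above under the chosen total-parity conventions, and confirming that the Maurer--Cartan equation for the split nilpotent super Lie group $N$ really reproduces $\d_{\mathrm{CE}}$ on the nose; everything else is formal or follows from the freeness of $\Omega^\bullet(N)$ as a module over the right-invariant exterior algebra.
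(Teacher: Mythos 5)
Your proposal is correct and is essentially the paper's own argument: the identification of $v^\mu$ with the right-invariant coframe element $\d x^\mu + f^\mu_{\alpha\beta}\theta^\beta\,\d\theta^\alpha$ is exactly the coordinate change $v^\mu \mapsto v^\mu + \lambda f^\mu\theta$ that the paper writes down, and the verification via cancellation of the correction terms and the Maurer--Cartan equation is the same computation. Your version is slightly more explicit in observing that the map is a genuine isomorphism of dgs vector bundles (the paper only records it as a quasi-isomorphism of multiplets), but nothing of substance differs.
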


\begin{proof}
	First, we can describe
	\begin{equation*}
	A^\bullet\big(C^\bullet(\fn)\big) = \left( C^\infty(N) \otimes C^\bullet(\fn) , \cD = \lambda^\alpha \sR(d_\alpha) + v^\mu \sR(e_\mu) + \d_{\mathrm{CE}} \right) .
	\end{equation*}
	Let us write $V$ for the even part $N_2$ of $N$, viewed as an affine space. We can identify $C^\infty(N) = C^\infty(V) \otimes \CC[\theta^\alpha]$ and $C^\bullet(\fn) = \CC[\lambda^\alpha , v^\mu]$. The differential takes the explicit form
	\begin{equation*}
	\cD = \lambda^\alpha \frac{\partial}{\partial \theta} - \lambda^\alpha f^\mu_{\alpha \beta} \theta^\beta \frac{\partial}{\partial x^\mu} + v^\mu \frac{\partial}{\partial x^\mu} + \big(\lambda^\alpha f^\mu_{\alpha \beta} \lambda^\beta\big) \frac{\partial}{\partial v^\mu} .
	\end{equation*}
	On the other hand, $N$ is parallelizable, therefore its de Rham complex takes the form
	\begin{equation*}
	\Omega^\bullet(N) \cong C^\infty(N) \otimes \Sym^\bullet(\fn_1^*) \otimes \wedge^\bullet\fn_2^* .
	\end{equation*}
	Identifying $\d\theta^\alpha = \lambda^\alpha$ and $\d x^\mu = v^\mu$, the de Rham differential takes the form
	\begin{equation*}
	\d_{\mathrm{dR}} = \lambda^\alpha \frac{\partial}{\partial \theta^\alpha} + v^\mu \frac{\partial}{\partial x^\mu} .
	\end{equation*}
	Further, $\fg$ acts on the de Rham complex via left translation making it a strict multiplet. The map defined in coordinates via
	\begin{equation*}
	\big( A^\bullet\big(C^\bullet(\fn)\big) , \cD , \sL \big) \longrightarrow \left( \Omega^\bullet(N) , \d_{\mathrm{dR}} , \sL \right), \qquad \big(x^\mu, \lambda^\alpha, \theta^\alpha , v^\mu\big) \mapsto \big(x^\mu, \lambda^\alpha, \theta^\alpha, v^\mu + \lambda f^\mu \theta\big)
	\end{equation*}
	is a quasi-isomorphism of $\fg$-multiplets. Therefore, we can identify the multiplet associated to $C^\bullet(\fn)$ itself as the differential forms on the super Lie group $N$.
\end{proof}

\begin{rmk}
	We can further compute cohomology with respect to $\cD_0 = \lambda^\alpha \frac{\partial}{\partial \theta^\alpha}$ and obtain a~retraction\footnote{Recall that a retraction between two cochain complexes is a diagram as in~\eqref{eq: dR transfer}, such that $p \circ i = \id_{\Omega^\bullet(V)}$ and $i \circ p - \id_{\Omega^\bullet(N)} = \cD_0 \circ h + h \circ \cD_0$.} to the de Rham complex on the spacetime manifold $V$:
	\begin{equation} \label{eq: dR transfer}
	\begin{tikzcd}
	\arrow[loop left]{l}{h}\big(\Omega^\bullet(N) , \cD_0\big)\arrow[r, shift left, "p"] & \big(\Omega^\bullet(V) , 0\big).\arrow[l, shift left, "i"]
	\end{tikzcd}
	\end{equation}
	Here, $i$ is the embedding of the factor of polynomial degree 0 in the $\lambda$ and $\theta$ variables, and $p$ is the obvious projection. The homotopy $h$ is given by $h=\theta^\alpha \frac{\partial}{\partial \lambda^\alpha}$.
	The induced differential via homotopy transfer is the de Rham differential on $V$. The module structure, however, is no longer strict. In fact the strict part now vanishes, but there are now quadratic pieces appearing. These take the form
	\begin{equation*}
	\rho^{(2)} (Q , Q) = \iota_{[Q, Q ]} \colon \ \Omega^k(V) \longrightarrow \Omega^{k-1}(X) .
	\end{equation*}
	Here we view $[Q, Q]$ as a constant vector field on $V$ and $\iota$ denotes the contraction of a differential form with a vector field.

	There is of course a further quasi-isomorphism of $\fg$-modules to the trivial $\fg$-module $\CC$. In this spirit, each of the multiplets $A^\bullet\big(C^\bullet(\fn)\big)$, $\Omega^\bullet(N)$, and $\Omega^\bullet(V)$ can be viewed as free resolutions of the trivial module---either free over spacetime (i.e., as $C^\infty(V)$-modules), or even free over superspace (i.e., as $C^\infty(N)$-modules). Note, however, that the trivial module does not form a $\fg$-multiplet since the translations do not act geometrically. Therefore this last quasi-isomorphism is just a quasi-isomorphism of $\fg$-modules.
\end{rmk}

\subsection{Component fields and homotopy transfer} \label{sec: components}
The multiplet $A^\bullet(\Gamma)$ given by applying the pure spinor functor does not at first glance resemble the more standard component field formulations known from physics. These component field multiplets are distinguished by the fact that they are given by dgs vector bundles whose total rank (as vector bundles over spacetime) is finite. In other words, component field multiplets are usually defined with the assumption that they only contain a finite number of component fields. With some care, it is always possible to choose a homotopy representative for $A^\bullet(\Gamma)$ that satisfies this condition, as long as $\Gamma$ satisfies a finiteness condition. In other words, we will establish the following lemma.

\begin{lem} \label{fd_model_lemma}
	Let $\Gamma$ be a $C^\bullet(\fn)$-module such that only finitely many cohomology groups $\mathrm{H}^\bullet(\Gamma)$ are non-vanishing and each of these is finitely generated as an $R$-module. Then $A^\bullet(\Gamma)$ is quasi-isomorphic to a multiplet of finite rank.
\end{lem}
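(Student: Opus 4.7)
The plan is to construct a finite-rank representative of $A^\bullet(\Gamma)$ by homotopy transfer along the fiber direction over $V$. Decompose the differential as $\cD = \cD_0 + \cD_1$, where $\cD_0 = \lambda^\alpha \partial/\partial\theta^\alpha + \d_\Gamma$ is $C^\infty(V)$-linear (with $\lambda^\alpha$ acting via the $C^\bullet(\fn)$-module structure on $\Gamma$), and $\cD_1 = (v^\mu - \lambda^\alpha f^\mu_{\alpha\beta}\theta^\beta)\partial/\partial x^\mu$ collects the remaining terms involving $x$-derivatives. The strategy is to first show that the fiber cohomology $H^\bullet(F, \cD_0)$ of $F := \wedge^\bullet \fn_1^* \otimes \Gamma$ is finite-dimensional, and then transfer $\cD$ (with $\cD_1$ treated as a perturbation) to a finite-rank model on $C^\infty(V)\otimes H^\bullet(F,\cD_0)$ via the homological perturbation lemma.

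For the fiber cohomology, I would filter $F$ by $\wedge^\bullet\fn_1^*$-degree. The resulting spectral sequence has first page $\wedge^\bullet\fn_1^*\otimes H^\bullet(\Gamma)$ with the Koszul differential induced by the $R$-module structure on $H^\bullet(\Gamma)$, where $R = \Sym^\bullet\fn_1^*$; its cohomology is precisely $\mathrm{Tor}^R_\bullet(\CC,H^\bullet(\Gamma))$. By hypothesis each $H^q(\Gamma)$ is finitely generated over $R$, and only finitely many are nonzero; combined with Hilbert's syzygy theorem (which bounds the projective dimension by $\dim\fn_1$), these Tor groups are finite-dimensional over $\CC$ and vanish outside a bounded range. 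Since the filtration is bounded, the spectral sequence converges strongly, so $H^\bullet(F, \cD_0)$ is finite-dimensional.

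Next I would choose a $\fg_0$-equivariant strong deformation retract $(i,p,h)$ from $(F,\cD_0)$ onto a minimal model $H' \cong H^\bullet(F,\cD_0)$, available in characteristic zero by reductivity of the $\fg_0$-action. Extending by $\id_{C^\infty(V)}$ gives retract data between $A^\bullet(\Gamma)$ and the finite-rank bundle $E' := C^\infty(V)\otimes H'$. The homological perturbation lemma, applied with $\cD_1$ as the perturbation, then produces a transferred differential $\cD'$ on $E'$. Because the retract is $C^\infty(V)$-linear, it commutes with the strict geometric action of $\aff(V)$, so the transferred $\fg$-action (whose components away from the affine subalgebra may become $L_\infty$) retains a strict geometric affine part; this exhibits $E'$ as a finite-rank $\fg$-multiplet quasi-isomorphic to $A^\bullet(\Gamma)$.

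The main obstacle will be to guarantee that the perturbation series defining $\cD'$ and the transferred $\fg$-action involves only finitely many nonzero terms, so that the resulting operators are genuine finite-order differential operators rather than formal series. This should follow by refining the filtration, for instance by incorporating the integer weight grading from Section~\ref{ssec:gradings} or by carrying out the transfer in two stages corresponding to the two pages of the spectral sequence, so that the operator $\cD_1 h$ becomes locally nilpotent on $A^\bullet(\Gamma)$; the combinatorics of the resulting termination, together with compatibility with the $\fg_0$-equivariance throughout, is the delicate piece.
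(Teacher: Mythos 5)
Your proposal is correct and follows essentially the same route as the paper: a fiberwise homotopy transfer whose target is computed by the Koszul homology $\mathrm{Tor}^R_\bullet(\CC,\mathrm H^\bullet(\Gamma))$, with finite rank supplied by Hilbert's syzygy theorem applied to the finitely generated $R$-modules $\mathrm H^k(\Gamma)$ and the assumption that only finitely many of these are nonzero. The paper organizes the fiber computation as two sequential transfers (first along $\d_\Gamma$, then along the Koszul differential $\lambda^\alpha\,\partial/\partial\theta^\alpha$, whose homotopy raises the bounded $\theta$-degree), which is precisely the two-stage refinement you flag as the way to make the perturbation series terminate.
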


In the rest of this section we will explain an algorithm that provides explicit finitely generated component field representatives for $A^\bullet(\Gamma)$, thus establishing the lemma. In~\cite{perspectives}, a general procedure to extract such a component field multiplet out of the underived model $A^\bullet_{R/I}(\Gamma)$ was explained. In essence, this is done by homotopy transfer: one identifies a retraction to another quasi-isomorphic complex and then applies homotopy transfer to the structures present for the multiplet. By construction, this yields a quasi-isomorphic and thus physically equivalent multiplet. Crucially, the component field multiplets obtained in that way are not necessarily strict anymore. Higher-order terms in the module structure can arise during the transfer.

Let us begin by summarizing the procedure for the underived pure spinor formalism $A^\bullet_{R/I}$ and then describe the generalization to $A^\bullet$.

\subsubsection[Minimal models for A\^{}bullet\_\{R/I\}]{Minimal models for $\boldsymbol{A^\bullet_{R/I}}$}

Recall that the differential on $\big(A^\bullet_{R/I} (\Gamma),\cD\big)$ admits an obvious splitting
\begin{equation*}
\cD = \cD_0 + \cD_1 = \lambda^\alpha \frac{\partial}{\partial \theta^\alpha} - \lambda^\alpha f^\mu_{\alpha \beta} \theta^\beta \frac{\partial}{\partial x^\mu} ,
\end{equation*}
which can be viewed by equipping $A^\bullet_{R/I}(\Gamma)$ with a filtration by polynomial degree on $V$ and splitting the differential into the terms that preserve and lower filtered degree.
(To filter by polynomial degree on~$V$ is imprecise, since that filtration is not complete. A complete filtration that serves the same purpose can be constructed as explained in~\cite{perspectives}, roughly speaking by assigning both $\lambda$ and $\theta$ filtered degree one. Alternatively, under the assumption that the generator of scalings is included in~$\fg_0$, one can use the weight grading as discussed in the next subsection.)
We can take cohomology with respect to $\cD_0$ and then perform homotopy transfer along the diagram
\begin{equation*}
\begin{tikzcd}
\arrow[loop left]{l}{h}(A^\bullet (\Gamma) , \cD_0)\arrow[r, shift left, "p"] &(\mathrm H^\bullet(A^\bullet(\Gamma), \cD_0) , \, 0)\arrow[l, shift left, "i"] .
\end{tikzcd}
\end{equation*}
We observe, by \cite[Theorem 10.3.15]{LodayVallette}, that although the transfer data depends on a choice of section for the projection onto the cohomology, the multiplet obtained by homotopy transfer is independent of this choice up to isomorphism. In this example, it is always ``minimal'' in the sense that its differential does not contain terms of order zero in differential operators.
Intuitively, this means that we cannot take any further cohomology without leaving the category of multiplets. (In more general examples coming from $C^\bu(\fn)$-modules, ``minimal'' is not related to having no terms of order zero in the differential; the massive Klein--Gordon field is a simple example.)

Starting from a multiplet of the form $A^\bullet_{R/I}(\Gamma)$, we will refer to the corresponding minimal multiplet as $\mu A^\bullet_{R/I}(\Gamma)$.
One can identify the $\cD_0$-cohomology with the Koszul cohomology of $\Gamma$, tensored with functions on spacetime,
\begin{equation*}
\mathrm H^\bullet(A_{R/I}^\bullet(\Gamma)) = C^\infty(V) \otimes \mathrm H^\bullet(K^\bullet(\Gamma)) .
\end{equation*}
The Koszul cohomology is conveniently computed by a minimal free resolution $L$ of $\Gamma$ in $R$-modules. The minimal multiplet takes the form
\begin{equation*}
\mu A^\bullet_{R/I}(\Gamma) = \big( C^\infty(V) \otimes (L \otimes_R \CC) , \cD' ,\rho' \big) ,
\end{equation*}
where $\cD'$ is the differential induced from $\cD_1$ and $\rho'$ the module structure induced from $\sL$ via homotopy transfer. We refer to~\cite{perspectives} for details. For $\Gamma$ a finitely generated module, Hilbert's syzygy theorem states that the minimal free resolution exists, consists of finitely generated modules and its length is less or equal than $\dim(\fn_1)$ (see for example~\cite[Theorem 1.13]{MR1322960}; for a~discussion in the equivariant case we refer to~\cite[Proposition 2.4.9, Remark 2.4.10]{Galetto}). Therefore,~$\mu A_{R/I}^\bullet(\Gamma)$ is indeed of finite rank over spacetime, i.e., it provides a reasonable component field multiplet.

\subsubsection[Minimal models for A\^{}bullet]{Minimal models for $\boldsymbol{A^\bullet}$}

Let us now discuss a generalization of this procedure to the functor $A^\bullet$. We will assume in this section that $\Gamma$ carries an action of the abelian Lie algebra $\RR$ compatible with the action of $\RR$ on $\fn$ where $\fn_i$ has $\RR$-weight $i$. This will be used to construct a filtration at the very end of this section (however, this assumption is not required for Lemma~\ref{fd_model_lemma}).

Recall that the differential on $A^\bullet(\Gamma)$ takes the form
\begin{equation*}
\cD = \lambda^\alpha \frac{\partial}{\partial \theta^\alpha} - \lambda^\alpha f^\mu_{\alpha \beta} \theta^\beta \frac{\partial}{\partial x^\mu} + \d_\Gamma + v^\mu
\frac{\partial}{\partial x^\mu} .
\end{equation*}
One can construct a finite-dimensional component field model in two steps. First one takes cohomology with respect to the internal differential of the module $\d_\Gamma$ and performs homotopy transfer along the diagram
\begin{equation*}
\begin{tikzcd}
\arrow[loop left]{l}{h}(A^\bullet (\Gamma) , \d_\Gamma)\arrow[r, shift left, "p"] &(C^\infty(N) \otimes \mathrm H^\bullet(\Gamma) ) , \, 0)\arrow[l, shift left, "i"] .
\end{tikzcd}
\end{equation*}
The differential and the $C^\bullet(\fn)$-module structure of the resulting multiplet may contain additional pieces induced from homotopy transfer. Since $\mathrm H^\bullet(\Gamma)$ is a $C^\bullet(\fn)$-module with vanishing differential, each homogeneous summand $\mathrm H^k(\Gamma)$ carries the structure of an $R/I$-module. Thus, we can proceed by taking cohomology with respect to the Koszul differential $\cD_0 = \lambda^\alpha \frac{\partial}{\partial \theta^\alpha}$, and applying the homotopy transfer along a retraction on to this cohomology. As before, the $\cD_0$-cohomology is computed by minimal free resolutions of the individual $\mathrm H^k(\Gamma)$ in $R$-modules, and as before, it is independent of the choice of transfer datum. The resulting multiplet is thus of the form
\begin{equation*}
C^\infty(V) \otimes \bigg(\bigoplus_k L^\bullet_k \otimes_R \CC\bigg) ,
\end{equation*}
where $L^\bullet_k$ is the minimal free resolution of $\mathrm H^k(\Gamma)$. As long as the cohomology $\mathrm H^\bullet(\Gamma)$ is bounded, this is already a finite rank vector bundle over spacetime. The field content of this multiplet is just the field content of the direct sum of all the minimal multiplets associated to the cohomology groups, i.e.,
\begin{equation*}
\bigoplus_k \mu A^\bu\big(\mathrm H^k(\Gamma)\big) .
\end{equation*}
At this stage we have already obtained a finite-dimensional model, as required by Lemma \ref{fd_model_lemma}. However, additional acyclic differentials induced by homotopy transfer can still be present. At this final stage, we will use the filtration on each multiplet $\mu A^\bu\big(\mathrm H^k(\Gamma)\big)$ by weight with respect to the $\RR$-action on $\Gamma$. We define $\mu A^\bullet(\Gamma)$ by taking the cohomology with respect to the summand of the total differential of filtered degree zero (in other words, we pass to the $E_1$ page of the associated spectral sequence), and apply homotopy transfer. We will illustrate this procedures using examples in~Section~\ref{sec: examples}.

\section{An equivalence of categories} \label{sec: equiv}
We now show that the derived pure spinor superfield formalism provides an equivalence of categories between the dg categories of $\fg_0$-equivariant $C^\bullet(\fn)$-modules and $\fg$-multiplets. This implies in particular that, up to quasi-isomorphism, every $\fg$-multiplet can be constructed using the derived pure spinor superfield formalism.

\begin{rmk}
	Recall that for a general Lie algebra there is a Koszul duality equivalence between the dg-categories of $C_\bullet(\fg)$-comodules and $U(\fg)$-modules. If $\fg$ is, for instance, finite dimensional, we can instead consider $C^\bullet(\fg)$-modules. For a relevant discussion in a similar context, see~\mbox{\cite[Sections~7 and~8]{CostelloYangian}}. Relatedly, Kapranov~\cite{Kapranov} gives a formulation of Koszul duality that establishes a Quillen equivalence between the categories of $D$-modules and $\Omega^\bu$-modules on the same space. Our results show that the pure spinor formalism admits a natural derived generalization that is closely related to Kapranov's construction; however, in our setting, one is working on a supermanifold, and asks for appropriate equivariance conditions. Alternatively, our procedure can be viewed in two steps, as an explicit form of commutative/Lie Koszul duality tailored to the examples in question, combined with an associated bundle construction that realizes a~$U(\fn)$-module as an $\fn$-equivariant dg vector bundle over $V = N_2$.
\end{rmk}

\subsection[The inverse functor: derived n-invariants]{The inverse functor: derived $\boldsymbol{\fn}$-invariants}

Any $\fg$-module is in particular a $\fn$-module, and we can thus take derived invariants with respect to $\fn$. For multiplets, this defines a functor in the opposite direction to the functor $A^\bullet$, assigning a strict $C^\bullet(\fn)$-module to a multiplet. The resulting $C^\bullet(\fn)$-module is also $\fg_0$-equivariant.
The functor takes the form
\begin{equation*}
C^\bullet(\fn , -) \colon \ \Mult_{\fg}^{\mathrm{strict}} \longrightarrow \Mod^{\fg_0}_{C^\bullet(\fn)},
\end{equation*}
It maps the multiplet	$(E,D,\rho)$ to $C^\bullet(\fn , \cE)$; this Chevalley--Eilenberg complex can be written more explicitly as
\begin{equation*}
C^\bullet(\fn , \cE) = \big( C^\bullet(\fn) \otimes \cE , \d_{\mathrm{CE}} + D + \lambda^\alpha \rho(d_\alpha) + v^\mu \rho(e_\mu) \big) .
\end{equation*}
It is a strict $C^\bullet(\fn)$-module; the action is on $C^\bullet(\fn)$ via multiplication and on $\cE$ via the identity. Morphisms are mapped according to the rule
\begin{equation*}
\psi \mapsto
\id_{C^\bullet(\fn)} \otimes \psi .
\end{equation*}

\begin{rmk}
	The assignment on the level of objects is also well defined for not necessarily strict modules. Then the higher-order terms of the $\fg$-module structure enter the differential such that the term $\lambda^\alpha \rho(d_\alpha)$ is replaced by
	\begin{equation*}
	\lambda \cdot \rho := \sum_{k=1}^{\infty} \lambda^{\alpha_1} \cdots \lambda^{\alpha_k} \rho^{(k)}(d_{\alpha_1} , \dots , d_{\alpha_k}).
	\end{equation*}
	Notice that the output is still a strict $C^\bullet(\fn)$-module.
\end{rmk}
There are several ways to intuitively understand the fact that the inverse functor is given by the derived invariants of supertranslations. One can of course appeal to the general structure of Koszul duality. On a more down-to-earth level, one can recall that the component fields of a supermultiplet in the usual pure spinor superfield formalism correspond to the generators of the minimal free resolution of that module over $R = C^\bu(\Pi \fn_1)$. The resolution differential was then proven in~\cite{perspectives} to agree with those supersymmetry transformations that are order zero in spacetime derivatives, verifying a conjecture of Berkovits. Since spacetime derivatives act by zero on translation-invariant sections of~$E$, it is clear that one can think of the minimal free resolution of the module as arising from the derived $\Pi \fn_1$-invariants of translation-invariant sections: $R$ consists of the Lie algebra cochains of~$\Pi \fn_1$, the generators of the free graded $R$-module arise from the component fields of the multiplet, and the differential encodes the action of supersymmetry on translation-invariant sections. It is clear that this story is just a two-step procedure to compute derived $\fn$-invariants by first taking $\fn_2$-invariants, and then accounting for the action of supersymmetry.

To flesh this story out, we now describe the module $C^\bullet(\fn , \cE)$ in some more detail and sketch how its cohomology can be computed. Recall that, since $\cE$ is a multiplet, the action of $\fn_2$ is just given by derivatives along the coordinate directions
\begin{equation*}
\rho(e_\mu) = \frac{\partial}{\partial x^\mu} .
\end{equation*}
Thus taking cohomology with respect to the term $v^\mu \rho(e_\mu)$ in the differential means restricting to translation invariant sections of $\cE$ and eliminating $v^\mu$. Denoting the fiber of $E$ over 0 by $E_0$ we find a quasi-isomorphism
\begin{equation} \label{eq: compute C^bu}
C^\bullet(\fn ,\cE) \simeq \big( R \otimes E_0^\bullet , \lambda \cdot \rho_{\mathrm{constants}} \big) .
\end{equation}
Note that $E_0^\bullet$ carries a $\ZZ \times \ZZ/2\ZZ$-grading since it comes from a dgs vector bundle. In addition, there is an integer grading by polynomial degree in $\lambda$ present. In the following, we will label cohomology groups by the former degree. Then each cohomology group is an $\fg_0$-equivariant~$R/I$-module graded by polynomial degree in $\lambda$. If the cohomology is concentrated in a single degree, the complex on the right hand-side can be viewed as the minimal free resolution of the $R/I$-module forming the cohomology. Note that this is indeed a minimal free resolution, since all terms in the differential are of nonzero polynomial degree in $\lambda$. It will follow from the theorem below that this $R/I$-module is precisely the algebraic input datum the multiplet can be constructed from in the pure spinor superfield formalism, i.e., by applying the functor $A^\bullet_{R/I}$.

\subsection{Main theorem and proof}
Let us now show that the functors $A^\bullet$ and $C^\bullet(\fn,-)$ induce an equivalence of dg-categories between the homotopy categories of multiplets and equivariant $C^\bullet({\fn})$-modules.

\begin{thm} \label{thm: equiv}
	$A^\bullet$ and $C^\bullet(\fn , - )$ provide an equivalence of dg-categories between $\Mult_\fg^{\text{\rm strict-ob}}$ and $\Mod^{\fg_0}_{C^\bullet({\fn})}$.
\end{thm}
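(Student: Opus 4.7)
The plan is to show that $A^\bullet$ and $C^\bullet(\fn,-)$ are mutually quasi-inverse dg-functors by explicitly verifying that the two natural compositions are quasi-isomorphic to the identity, and then upgrading this to an equivalence on Hom-complexes.

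For the direction $C^\bullet(\fn, A^\bullet(\Gamma)) \simeq \Gamma$, I would apply the quasi-isomorphism~\eqref{eq: compute C^bu} with $\cE = A^\bullet(\Gamma)$. Since the fiber of $A^\bullet(\Gamma)$ at the origin is $\CC[\theta] \otimes \Gamma$ and the $\sL$-action reduces to $\partial/\partial\theta$ on constant-in-$x$ sections, the reduced differential on $R \otimes \CC[\theta] \otimes \Gamma$ takes the form $(\lambda_L + \lambda_\Gamma)\,\partial/\partial\theta + d_\Gamma$, where $\lambda_L$ is the outer Chevalley--Eilenberg generator and $\lambda_\Gamma$ denotes the $C^\bullet(\fn)$-module action on $\Gamma$. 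This complex splits as an acyclic Koszul resolution $(\CC[\lambda,\theta], \lambda\,\partial_\theta) \simeq \CC$ tensored with $\Gamma$; a change of basis identifying $\lambda_L$ with $\lambda_\Gamma$ survives on cohomology. The surviving cohomology is therefore $\Gamma$ with its intrinsic module structure, natural in $\Gamma$.

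For the direction $A^\bullet(C^\bullet(\fn, \cE)) \simeq \cE$, unpacking gives $C^\infty(N) \otimes C^\bullet(\fn) \otimes \cE$ with total differential containing the outer kernel $\lambda\sR(d) + v\sR(e)$, the inner Chevalley--Eilenberg $\d_{\mathrm{CE}}$, the inner action $\lambda\rho(d) + v\rho(e)$, and the multiplet differential $D$. I would filter by polynomial degree in the outer $\lambda, v$ (equivalently by the $\RR$-weight grading of Section~\ref{ssec:gradings}) and analyze the associated spectral sequence. On the $E_1$-page, the outer kernel together with $\d_{\mathrm{CE}}$ assembles into $A^\bullet(C^\bullet(\fn))$ tensored with $\cE$; by the lemma of Section~\ref{sec: forms}, this factor is quasi-isomorphic to $\Omega^\bullet(N) \otimes \cE$, and via the contractibility of $N$, further quasi-isomorphic to $\cE$ itself. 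The residual $\rho$-terms and $D$ recover the original $\fg$-multiplet structure, with the geometric $\aff(V)$-action intact.

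To upgrade these object-wise quasi-isomorphisms to a dg-equivalence on $\Mult_\fg^{\text{strict-ob}}$, I would verify that the induced maps on Hom-complexes are quasi-isomorphisms. For strict $\cE, \cE'$, the functor $C^\bullet(\fn, -)$ yields a cochain map from $\Hom_{\Mult_\fg}^\bu(\cE, \cE')$ (which captures non-strict $L_\infty$-morphisms through its internal grading) to $\Hom^\bu(C^\bullet(\fn, \cE), C^\bullet(\fn, \cE'))$; the same filtration argument applied to the Hom-complex reduces the quasi-isomorphism to the object-level equivalence established above. The main obstacle will be the counit computation: coordinating the outer kernel and inner Chevalley--Eilenberg differentials so that the combined Koszul-type complex contracts cleanly---while preserving the geometric action of $\aff(V)$ on cohomology---requires careful bookkeeping of the interplay between the left-invariant vector fields $\sL$ and right-invariant vector fields $\sR$ on the supergroup $N$.
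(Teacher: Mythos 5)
Your overall architecture coincides with the paper's: construct the unit $\Gamma \to C^\bullet(\fn, A^\bullet(\Gamma))$ and counit $A^\bullet(C^\bullet(\fn,\cE)) \to \cE$ as explicit contractions of de Rham/Koszul-type complexes on $N$, then pass to Hom-complexes. The unit direction is essentially the paper's argument in a two-step form (first reduce to translation-invariant sections via~\eqref{eq: compute C^bu}, then contract the Koszul complex in $(\lambda',\theta)$ rather than contracting $\Omega^\bullet(N)\otimes\Gamma$ in one go), and it goes through, although "a change of basis identifying $\lambda_L$ with $\lambda_\Gamma$" should really be a filtration-plus-perturbation argument, since $\lambda_\Gamma$ is an operator on $\Gamma$ and not a polynomial generator; one also needs the first-order correction to the projection $p$ to see that the transferred $C^\bullet(\fn)$-action on the cohomology is the intrinsic one.

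The counit direction has a genuine gap. First, a technical one: in $A^\bullet(C^\bullet(\fn,\cE))$ there is only one polynomial ring $\CC[\lambda,v]$, and \emph{every} summand of the differential except $D$ strictly raises polynomial degree in $\lambda, v$ (the terms $\lambda\cdot\rho$ and $v\rho(e_\mu)$ just as much as $\lambda^\alpha\sR(d_\alpha)$, $v^\mu\sR(e_\mu)$, and $\d_{\mathrm{CE}}$), so the associated graded of your proposed filtration has differential $D$ alone, not the factor $A^\bullet(C^\bullet(\fn))\otimes\cE$ you want on the $E_1$-page. The paper instead contracts in two explicit steps: first $v^\mu(\rho(e_\mu)+\sR(e_\mu)) = v^\mu(\partial_{x^\mu}+\partial_{y^\mu})$, which identifies the two copies of spacetime, and then the Koszul differential $\cD_0 = \lambda^\alpha\partial_{\theta^\alpha}$ with homotopy $\cD_0^\dagger = \theta\,\partial_\lambda$, controlling the perturbation series by $\theta$-degree. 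Second, and more importantly, the step you defer as "careful bookkeeping" is in fact the heart of the proof: one must show that the $L_\infty$ module structure transferred from $\sL$ along this contraction equals the original $\rho$, i.e., that $\rho_\sL^{(k)}(Q,\dots,Q) = \rho^{(k)}(Q,\dots,Q)$ for all $k$, with no spurious higher terms generated by the homotopy. Without this, you only obtain $A^\bullet(C^\bullet(\fn,\cE))\simeq\cE$ as cochain complexes (or $C^\infty(V)$-modules), which does not establish an equivalence in $\Mult_\fg$. The paper carries this out by writing the perturbed data $i'$, $p'$, $\cD_0'^\dagger$ via the homotopy perturbation formulas, observing that $p$ kills everything of positive order in $\lambda$ and $\theta$, and using the identity $\bigl\{\cD_0^\dagger, \epsilon\,\partial_\theta\bigr\} = \epsilon\,\partial_\lambda$ to isolate the unique surviving term, which is exactly $\rho^{(k)}$. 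Your proposal needs this computation (or an equivalent argument) to be complete.
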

\begin{proof}
	We first show that there is an equivalence of dg-functors
	\begin{equation*}
	\mathrm{id}_{\Mod^{\fg_0}_{C^\bullet({\fn})}} \to C^\bullet \circ A^\bullet .
	\end{equation*}
	We will naturally construct quasi-isomorphisms
	\begin{equation*}
	\Gamma \simeq C^\bu(\fn,A^\bu(\Gamma))
	\end{equation*}
	for each equivariant $C^\bu(\fn)$-module $(\Gamma,\d_\Gamma)$.
	
	We can explicitly describe $C^\bu(\fn,A^\bu(\Gamma))$ by
	\begin{gather*}
	\bigg( \CC\big[\lambda', v'\big] \otimes C^\infty(N) \otimes \Gamma ,\\
\qquad{}  \d_\Gamma + \big(\lambda' f^\mu \lambda'\big) \frac{\partial}{\partial v'^\mu} + v'^\mu \frac{\partial}{\partial x^\mu} + \lambda'^\alpha \sL(d_\alpha) + \lambda^\alpha \sR(d_\alpha) + v^\mu \sR(e_\mu) \bigg) .
	\end{gather*}
	Here we made a notational distinction between the generators of the $C^\bullet(\fn)$ in the construction (denoted by $\lambda'$ and $v'$) and the action of $C^\bullet(\fn)$ on $\Gamma$ (denoted by $\lambda$ and $v$).
	
	The differential contains a piece of the form
	\begin{equation*}
	\d_{\mathrm{dR}} = v'^\mu \frac{\partial}{\partial x^\mu} + \lambda'^\alpha \frac{\partial}{\partial \theta^\alpha} .
	\end{equation*}
	Thus, we can identify
	\begin{equation*}
	\left( C^\bu(\fn,A^\bu(\Gamma)) , v'^\mu \frac{\partial}{\partial x^\mu} + \lambda'^\alpha \frac{\partial}{\partial \theta^\alpha} \right) = \left( \Omega^\bu(N) , \d_{\mathrm{dR}} \right) \otimes \Gamma .
	\end{equation*}
	Since the de Rham complex on $N$ is acyclic, this complex is quasi-isomorphic to $\Gamma$. We can fix homotopy data
	\begin{equation*}
	\begin{tikzcd}
	\arrow[loop left]{l}{h} \left( \Omega^\bu(N) \otimes \Gamma , \d_{\mathrm{dR}} \right) \arrow[r, shift left, "p"] & (\Gamma , \, 0)\arrow[l, shift left, "i"] .
	\end{tikzcd}
	\end{equation*}
	It is easy to see that the only induced differential on the right hand side is $\d_\Gamma$, so that we obtain a quasi-isomorphism uniformly for all choices of $\Gamma$:
	\begin{equation*}
	C^\bu(\fn,A^\bu(\Gamma)) \simeq (\Gamma , \d_\Gamma) .
	\end{equation*}
	
	Now, for any morphism $f \colon \ \Gamma \to \Gamma'$ of $C^\bullet(\fn)$-modules, we can identify
	\begin{equation*}
	C^\bu(A^\bu(f)) = \id_{C^\bu(\fn)} \otimes \id_{C^\infty(N)} \otimes f.
	\end{equation*}
	So there is a commutative square
	\begin{equation*}
	\begin{tikzcd}
	\Gamma \arrow[d , "f"] \arrow[r] & C^\bullet(\fn , A^\bullet(\Gamma)) \arrow[d, "A^\bullet(f)"] \\
	\Gamma' \arrow[r] &C^\bullet(\fn , A^\bullet(\Gamma'))
	\end{tikzcd}
	\end{equation*}
	inducing an equivalence of hom complexes $\mathrm{Hom}(\Gamma, \Gamma') \to \mathrm{Hom}(C^\bullet(\fn , A^\bullet(\Gamma)), C^\bullet(\fn , A^\bullet(\Gamma')))$.

	Conversely, we will construct an equivalence of dg-functors
	\begin{equation*}
	A^\bullet \circ C^\bullet \to \mathrm{id}_{\Mult_\fg^{\text{\rm strict-ob}}} .
	\end{equation*}
	
	Let $(E,D,\rho)$ be a $\fg$-multiplet. We can describe $A^\bullet(C^\bu(\fn , \cE))$ explicitly by
	\begin{equation*}
	\left( \CC[\lambda, v] \otimes \cE \otimes C^\infty(N) , D + \lambda f^\mu \lambda \frac{\partial}{\partial v^\mu} + \lambda \cdot \rho + v^\mu \rho(e_\mu) + \lambda^\alpha \sR(d_\alpha) + v^\mu \sR(e_\mu) \right) .
	\end{equation*}
	We denote coordinates on the base of the vector bundle $E$ by $x^\mu$ and on $N_2$ by $y^\mu$. In this notation, we find
	\begin{equation*}
	v^\mu\left( \rho(e_\mu) + \sR(e_\mu) \right) = v^\mu \left( \frac{\partial}{\partial x^\mu} + \frac{\partial}{\partial y^\mu} \right) .
	\end{equation*}
	Taking cohomology with respect to this piece, we obtain a quasi-isomorphic complex of the form
	\begin{equation*}
	\big( \CC[\lambda, \theta] \otimes \cE , D + \lambda \cdot \rho + \lambda^\alpha \sR(d_\alpha) \big) .
	\end{equation*}
	The differential contains a piece corresponding to the Koszul differential
	\begin{equation*}
	\d_K = \cD_0 = \lambda^\alpha \frac{\partial}{\partial \theta^\alpha} .
	\end{equation*}
	Clearly, taking cohomology with respect to the Koszul differential we arrive at $\cE$. Concretely let us consider homotopy data
	\begin{equation*}
	\begin{tikzcd}
	\arrow[loop left]{l}{\cD_0^\dagger}(\CC[\lambda,\theta] \otimes \cE , \cD_0 )\arrow[r, shift left, "p"] &( \cE , \, 0)\arrow[l, shift left, "i"] ,
	\end{tikzcd}
	\end{equation*}
	where the homotopy is given by $\cD_0^\dagger = \theta \frac{\partial}{\partial \lambda}$, while $i$ is the obvious inclusion and $p$ evaluates at $\lambda = \theta = 0$. It is easy to see that the induced differential is just $D$. Thus, we find homotopy data
	\begin{equation} \label{eq: hdata}
	\begin{tikzcd}
	\arrow[loop left]{l}{\cD_0'^\dagger}(\CC[\lambda,\theta] \otimes \cE , d )\arrow[r, shift left, "p'"] &( \cE , \, D )\arrow[l, shift left, "i'"] ,
	\end{tikzcd}
	\end{equation}
	providing a quasi-isomorphism of cochain complexes. The induced maps are given by
	\begin{align}
	&i' = \sum_{n=0}^{\infty} \big(\cD_0^\dagger(D + \cD_1 + \lambda \cdot \rho)\big)^n \circ i,\nonumber \\
	&p' = p \circ \sum_{n=0}^{\infty} \big(\cD_0^\dagger(D + \cD_1 + \lambda \cdot \rho)\big)^n = p,\nonumber \\
	&\cD_0'^\dagger = \cD_0^\dagger \circ \sum_{n = 0}^{\infty} \big((D + \cD_1 + \lambda \cdot \rho) \cD_0^\dagger\big)^n.	\label{htpy_transfer_in_main_thm}
	\end{align}
	By construction, $p \circ \cD_0^\dagger = 0$ and thus $p' = p$. We note that these sums are all finite by degree reasons, since the left hand side in~\eqref{eq: hdata} is concentrated in finitely many degrees with respect to the grading by polynomial degree in the $\theta$-variables, and in each expression in equation~\eqref{htpy_transfer_in_main_thm}, the operator being raised to the power $n$ within the sum raises this $\theta$-degree. Therefore for $n$ sufficiently large, all terms in the sum defining our maps vanish.
	
	We have to check that this not only provides a quasi-isomorphism of cochain complexes, but of multiplets. Therefore we transfer the module structure induced by $\sL$ to the right hand side and check that it agrees with the original module structure $\rho$ of the multiplet.
	
	Explicitly, the transferred module structure can be described by
	\begin{align*}
	\rho_\sL^{(k)} (Q,\dots,Q)={}& p \sL(Q) \big(\cD_0'^\dagger \sL(Q)\big)^{k-1} i' \\
	={}&p \sL(Q) \left(\cD_0^\dagger \sum_{n = 0}^{\infty} ((D + \cD_1 + \lambda \cdot \rho) \cD_0^\dagger)^n \sL(Q)\right)^{k-1}\\
&{}\times \sum_{j=0}^{\infty} \big(\cD_0^\dagger(D + \cD_1 + \lambda \cdot \rho)\big)^j \circ i .
	\end{align*}
	Since $p$ projects onto $\lambda = \theta = 0$, only terms of order zero in $\lambda$ and $\theta$ can contribute. It is easy to see that the only such term is
	\begin{equation*}
	\rho_\sL^{(k)} (Q,\dots,Q) = p \epsilon \frac{\partial}{\partial \theta} \left(\cD_0^\dagger \epsilon \frac{\partial}{\partial \theta}\right)^{k-1} \cD_0^\dagger \lambda \cdot \rho^{(k)} i .
	\end{equation*}
	Here we expressed $Q$ in a basis, $Q = \epsilon^\alpha d_\alpha$.
	Noting that
	\begin{equation*}
	\left\{\cD_0^\dagger , \epsilon \frac{\partial}{\partial \theta}\right\} = \epsilon \frac{\partial}{\partial \lambda} ,
	\end{equation*}
	we deduce
	\begin{equation*}
	\rho_\sL^{(k)} (Q,\dots,Q) = \rho^{(k)} (Q,\dots,Q) .
	\end{equation*}
	This shows that $A^\bu(C^\bu(\fn , \cE)) \simeq (E,D,\rho)$ as multiplets.
	
	As before, for any morphism $\psi \colon \ \cE \to \cE'$ we can realize
	\begin{equation*}
	A^\bu(C^\bu(\psi)) = C^\bullet(\psi) \otimes \id_{C^\infty(N)} .
	\end{equation*}
	So there is a commutative diagram
	\begin{equation*}
	\begin{tikzcd}
	A^\bullet(C^\bullet(\fn,\cE)) \arrow[r] \arrow[d, "C^\bullet(\psi)"] & \cE \arrow[d , "\psi"] \\
	A^\bullet(C^\bullet(\fn,\cE')) \arrow[r] & \cE'
	\end{tikzcd}
	\end{equation*}
	inducing an equivalence of hom spaces, and hence we have an equivalence of dg-categories.
\end{proof}

\begin{rmk}
	The equivalence of dg-categories automatically induces an equivalence of the underlying homotopy categories. Hence, each multiplet is perturbatively equivalent to a multiplet constructed via the derived pure spinor superfield formalism.
\end{rmk}

\subsection{Some consequences of the theorem}
Let us now discuss some consequences of the equivalences of categories.

\begin{cor}
	Let $(E,D,\rho)$ be any multiplet. $A^\bullet(C^\bullet(\fn,\cE))$ is a strictification.
\end{cor}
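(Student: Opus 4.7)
The plan is to read off the corollary as an immediate consequence of the computation done in the proof of Theorem~\ref{thm: equiv}, together with the observation that the functor $A^\bullet$ lands by construction in strict multiplets. So I need to argue two things: (i) $A^\bullet(C^\bullet(\fn,\cE))$ is a strict $\fg$-multiplet, and (ii) it is quasi-isomorphic, as a multiplet, to the original $(E,D,\rho)$, even when $\rho$ is a genuine $L_\infty$ action.

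For (i), I would simply invoke the definition of $A^\bullet$: for any $\fg_0$-equivariant $C^\bullet(\fn)$-module $(\Gamma,\d_\Gamma)$, the multiplet $A^\bullet(\Gamma)=(C^\infty(N)\otimes\Gamma,\cD)$ carries a \emph{strict} $\fg$-action via $\sL$ tensored with the trivial extension of the $\fg_0$-action on $\Gamma$. Applying this to $\Gamma=C^\bullet(\fn,\cE)$ (which, as noted in the remark after the definition of $C^\bullet(\fn,-)$, is still a strict $C^\bullet(\fn)$-module when $\rho$ is non-strict, since the higher operations are absorbed into $\lambda\cdot\rho$ inside the differential) yields a strict multiplet.

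For (ii), I would rerun the homotopy transfer argument from the second half of the proof of Theorem~\ref{thm: equiv} verbatim. The crucial point is that that proof was already written to allow $\rho$ to be non-strict: the differential on $A^\bullet(C^\bullet(\fn,\cE))$ contains the full series $\lambda\cdot\rho$, and after first contracting the $v^\mu(\rho(e_\mu)+\sR(e_\mu))$ summand onto the diagonal and then applying the Koszul homotopy $\cD_0^\dagger=\theta\partial/\partial\lambda$, the induced operations $\rho_\sL^{(k)}(Q,\dots,Q)$ on $\cE$ are computed to equal $\rho^{(k)}(Q,\dots,Q)$ for every $k$. Hence the transfer produces a quasi-isomorphism of multiplets, not merely of underlying cochain complexes.

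The only step that warrants a second look is the compatibility of the transferred $L_\infty$ action with the original $L_\infty$ action in the non-strict case, but the formula
\begin{equation*}
\rho_\sL^{(k)}(Q,\dots,Q)= p\,\epsilon\tfrac{\partial}{\partial\theta}\bigl(\cD_0^\dagger\epsilon\tfrac{\partial}{\partial\theta}\bigr)^{k-1}\cD_0^\dagger\,\lambda\cdot\rho^{(k)}\,i
\end{equation*}
from the proof of Theorem~\ref{thm: equiv} already selects exactly the degree-$k$ piece of $\lambda\cdot\rho$, so there is no genuine additional obstacle. Combining (i) and (ii) shows that $A^\bullet(C^\bullet(\fn,\cE))$ is a strict multiplet quasi-isomorphic to $(E,D,\rho)$, which is the definition of a strictification and completes the proof.
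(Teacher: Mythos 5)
Your proof is correct and takes essentially the same route as the paper, which states this corollary without a separate argument precisely because it follows immediately from Theorem~\ref{thm: equiv}: the output of $A^\bullet$ is strict by construction, and the second half of that theorem's proof already uses the full series $\lambda\cdot\rho$ and verifies $\rho_\sL^{(k)}=\rho^{(k)}$ for all $k$, so the quasi-isomorphism $A^\bullet(C^\bullet(\fn,\cE))\simeq(E,D,\rho)$ holds for arbitrary, not necessarily strict, multiplets. Your explicit check that the argument survives the non-strict case is exactly the point the paper leaves implicit.
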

Note that this does not imply that there exist a strict finitely generated component field multiplet that is equivalent to $\cE$. In particular, $\cE$ need not admit an auxiliary field formulation in the usual sense. The strictification $A^\bullet(C^\bullet(\fn,\cE))$ typically contains infinitely many component fields.

We can further use the equivalence of categories to derive some statements on the (underived) pure spinor superfield formalism, i.e., the functor $A^\bullet_{R/I}$.

\begin{cor} \label{underived_recognition_cor}
	The essential image of the functor $A^\bullet_{R/I}$ consists of those multiplets for which $\mathrm H^\bullet(\fn , \cE)$ is concentrated in a single $\ZZ$-degree after totalization of the natural $\ZZ \times \ZZ$-grading.
\end{cor}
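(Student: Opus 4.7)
The plan is to deduce the corollary from the equivalence of dg-categories in Theorem~\ref{thm: equiv}. The underived pure spinor functor $A^\bullet_{R/I}$ is precisely the restriction of $A^\bullet$ to the full subcategory of $\Mod^{\fg_0}_{C^\bullet(\fn)}$ consisting of modules concentrated in a single totalized degree. Under the equivalence, therefore, a multiplet $(E,D,\rho)$ lies in the essential image of $A^\bullet_{R/I}$ up to quasi-isomorphism precisely when $C^\bullet(\fn,\cE)$ is quasi-isomorphic, as a $C^\bullet(\fn)$-module, to a module concentrated in a single totalized degree. One direction is immediate: if $(E,D,\rho) \simeq A^\bullet_{R/I}(\Gamma)$, then the natural quasi-isomorphism $\Gamma \simeq C^\bullet(\fn, A^\bullet(\Gamma))$ from the proof of Theorem~\ref{thm: equiv} forces $\mathrm H^\bullet(\fn,\cE)$ to agree with $\Gamma$, hence to be concentrated in a single degree.

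For the nontrivial converse, suppose $\mathrm H^\bullet(\fn,\cE)$ is concentrated in a single totalized degree $k$, and set $V := \mathrm H^k(\fn,\cE)$. First I would observe that since $C^\bullet(\fn)$ itself sits in non-positive totalized degrees (the $\lambda^\alpha$ in degree $0$ and the $v^\mu$ in degree $-1$), the induced action of every $v^\mu$ on $V$ must vanish for degree reasons. Hence $V$ acquires naturally the structure of a graded $\fg_0$-equivariant $R/I$-module and lifts tautologically to an object of $\Mod^{\fg_0}_{C^\bullet(\fn)}$ concentrated in a single degree. Applying $A^\bullet$ to any quasi-isomorphism $V \simeq C^\bullet(\fn,\cE)$ of $C^\bullet(\fn)$-modules would then yield $(E,D,\rho) \simeq A^\bullet(V) = A^\bullet_{R/I}(V)$, placing $(E,D,\rho)$ in the essential image.

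The technical heart of the argument is establishing this quasi-isomorphism, which amounts to the formality statement that a $C^\bullet(\fn)$-module whose cohomology sits in a single degree is quasi-isomorphic to that cohomology. My approach is to invoke homotopy transfer to induce an $A_\infty$-module structure on $V$ over $C^\bullet(\fn)$, and then run a degree count showing that all higher operations vanish. Concretely, the transferred operation $m^{HT}_n \colon C^\bullet(\fn)^{\otimes(n-1)} \otimes V \to V$ has cohomological degree $2-n$; on inputs of total $C^\bullet(\fn)$-degree at most $0$ and a $V$-input in degree $k$, it produces an output in degree at most $k + (2-n) < k$ for $n \geq 3$, which must vanish since $V$ is concentrated in degree $k$. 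Thus for $n \geq 3$ all $m^{HT}_n$ vanish, the transferred structure is strict, and the associated $A_\infty$-quasi-isomorphism between $V$ and $C^\bullet(\fn,\cE)$ descends to the required quasi-isomorphism in the dg-category of $C^\bullet(\fn)$-modules.

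The main obstacle is this final degree-counting/formality step, although it becomes essentially automatic once set up. The crucial input is that $C^\bullet(\fn)$ is concentrated in non-positive totalized degrees, a structural feature built into the two-step nilpotent setup of the paper; without it the higher $A_\infty$-operations could survive and the cohomology would not be automatically formal.
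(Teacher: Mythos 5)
Your proposal is correct, and its overall strategy --- identify $A^\bullet_{R/I}$ as the restriction of $A^\bullet$ to modules concentrated in a single totalized degree and then transport the question through Theorem~\ref{thm: equiv} --- is exactly how the paper intends the corollary to follow; the paper in fact offers no written proof, calling the statement an immediate consequence of the equivalence. The one genuinely non-trivial point, which you correctly isolate, is the formality statement: that a $C^\bullet(\fn)$-module whose cohomology sits in a single totalized degree is quasi-isomorphic \emph{as a module} to that cohomology. The paper handles this implicitly and more computationally, via the explicit identification $C^\bullet(\fn,\cE)\simeq\bigl(R\otimes E_0^\bullet,\ \lambda\cdot\rho_{\mathrm{constants}}\bigr)$ of equation~\eqref{eq: compute C^bu}: when the cohomology is concentrated in one degree this complex is recognized as the minimal free resolution of the $R/I$-module $\mathrm H^\bullet(\fn,\cE)$, which recovers the input datum directly. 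Your route instead runs homotopy transfer to an $A_\infty$-module structure on the cohomology and kills the higher operations $m_n$, $n\ge 3$, by the degree count $2-n<0$ combined with the fact that $C^\bullet(\fn)$ is concentrated in non-positive totalized degrees; this is cleaner and more intrinsic, at the cost of two small points you gloss over: the transfer data should be chosen $\fg_0$-equivariantly (unproblematic since one can average, or since the relevant complexes split equivariantly), and the resulting $A_\infty$-quasi-isomorphism must be rectified to a zig-zag of strict morphisms to land in the dg-category $\Mod^{\fg_0}_{C^\bullet(\fn)}$ as the paper defines it. Neither issue is a gap in substance, and your easy direction (applying $C^\bullet(\fn,A^\bullet(\Gamma))\simeq\Gamma$ from the proof of the theorem) matches the paper exactly.
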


This gives a description of all multiplets which can be constructed via the pure spinor superfield formalism. In~\cite{perspectives}, it was argued that the antifield multiplet of the four-dimensional vector multiplet cannot be constructed via $A^\bullet_{R/I}$. We come back to this example in~Section~\ref{sec: vector anti} where we compute the relevant cohomology spaces and explain how the multiplet is built in the derived formalism.

\begin{cor}
	The functors $A^\bullet_{R/I}$ and $\mathrm H^\bullet(\fn , \cE)$ provide an equivalence of categories between the essential image of $A^\bullet_{R/I}$ and the category of graded $\fg_0$-equivariant $R/I$-modules.
\end{cor}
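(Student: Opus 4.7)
The plan is to derive the corollary directly from Theorem~\ref{thm: equiv} together with Corollary~\ref{underived_recognition_cor}. The key observation is that $A^\bullet_{R/I}$ is the restriction of the derived pure spinor functor $A^\bullet$ along the full embedding $\iota\colon \Mod_{R/I}^{\fg_0} \hookrightarrow \Mod_{C^\bullet(\fn)}^{\fg_0}$ that views a graded $R/I$-module as a $C^\bullet(\fn)$-module concentrated in cohomological degree zero, via the natural surjection $C^\bullet(\fn) \to R/I$ of cdgas recorded in Section~\ref{CE_module_section}.

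First I would check that $H^\bullet(\fn,-)$ takes values in $\Mod_{R/I}^{\fg_0}$ when restricted to the essential image of $A^\bullet_{R/I}$. By Corollary~\ref{underived_recognition_cor}, the cohomology $H^\bullet(\fn,\cE)$ of such a multiplet is concentrated in a single totalized degree. The induced $H^\bullet(C^\bullet(\fn))$-module structure on it can then only involve the degree-zero component $R/I = H^0(\fn)$, because $C^\bullet(\fn)$ lives in non-positive totalized degrees and any class of strictly negative degree would shift into a vanishing cohomology group. Hence $H^\bullet(\fn,\cE)$ is naturally a graded $\fg_0$-equivariant $R/I$-module.

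Next I would verify that the two functors are mutually quasi-inverse on objects. For any $\Gamma \in \Mod_{R/I}^{\fg_0}$, Theorem~\ref{thm: equiv} supplies a natural quasi-isomorphism
\begin{equation*}
C^\bullet(\fn, A^\bullet_{R/I}(\Gamma)) = C^\bullet(\fn, A^\bullet(\iota\Gamma)) \simeq \iota\Gamma
\end{equation*}
of $C^\bullet(\fn)$-modules; since $\iota\Gamma$ has zero differential and is concentrated in degree zero, taking cohomology yields $H^\bullet(\fn, A^\bullet_{R/I}(\Gamma)) \cong \Gamma$ as graded $R/I$-modules. Conversely, if $\cE \simeq A^\bullet_{R/I}(\Gamma_0)$ lies in the essential image of $A^\bullet_{R/I}$, the preceding calculation gives $H^\bullet(\fn, \cE) \cong \Gamma_0$, and hence $A^\bullet_{R/I}(H^\bullet(\fn, \cE)) \simeq A^\bullet_{R/I}(\Gamma_0) \simeq \cE$ as multiplets.

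Finally, the equivalence on morphisms follows from the dg-equivalence of Theorem~\ref{thm: equiv}. For $R/I$-modules $\Gamma,\Gamma'$ both concentrated in cohomological degree zero with zero differential, the hom-complex in $\Mod_{C^\bullet(\fn)}^{\fg_0}$ collapses to ordinary $\Hom_{R/I}(\Gamma,\Gamma')$ in degree zero: the negative-degree generators of $C^\bullet(\fn)$ act as zero on both sides for degree reasons, so any $C^\bullet(\fn)$-linear map necessarily factors through $R/I$. The only genuine bookkeeping task, and the main potential obstacle, is to verify naturality in $\Gamma$ and $\cE$ of all the quasi-isomorphisms introduced above; this follows from the manifestly functorial (homotopy transfer) construction used in the proof of Theorem~\ref{thm: equiv}, so no further technical input beyond that theorem and Corollary~\ref{underived_recognition_cor} should be needed.
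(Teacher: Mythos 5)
Your proposal is correct and follows the route the paper intends: the paper states this corollary without a separate proof, presenting it as an immediate consequence of Theorem~\ref{thm: equiv} together with the characterization of the essential image in Corollary~\ref{underived_recognition_cor}, and your argument (restricting $A^\bullet$ along the embedding of degree-zero modules, extracting the unit and counit from the theorem, and observing that the hom-complexes between modules concentrated in degree zero reduce to $\Hom_{R/I}$) is exactly the expected unwinding. The only point worth flagging is that you implicitly use that $C^\bullet(\fn,-)$ preserves quasi-isomorphisms when passing from $\cE \simeq A^\bullet_{R/I}(\Gamma_0)$ to $\mathrm H^\bullet(\fn,\cE) \cong \Gamma_0$, but this is at the same level of rigor as the paper itself.
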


Suppose we have an $R/I$-module $\Gamma$ with associated minimal multiplet $\mu A_{R/I}^\bullet(\Gamma)$. As discussed earlier, the fields of $\mu A_{R/I}^\bullet(\Gamma)$ take values in the minimal free resolution of $\Gamma$. In addition, there is a close link between the supersymmetry module structure and the resolution differential. In more detail, let us pull back $\mu A_{R/I}^\bullet(\Gamma)$ along the inclusion
\begin{equation*}
\{0\} \hookrightarrow V .
\end{equation*}
This restricts the multiplet to the fiber $\mu A^\bullet(\Gamma)_0$. The multiplet $\mu A_{R/I}^\bullet(\Gamma)_0$ carries a module structure for the odd abelian super Lie algebra $\Pi \fn_1$. This module structure coincides with the resolution differential in the following sense.

\begin{cor} \label{cor: susy trafo}
	Let $\Gamma$ be an $R/I$-module and let $(L,\d_L)$ be its minimal free resolution in $R$-modules. Let us identify the fiber
	\begin{equation*}
	\mu A_{R/I}^\bullet(\Gamma)_0 = L \otimes_R \CC.
	\end{equation*}
	The map generated over $R$ by the $\Pi \fn_1$-module structure on $\mu A_{R/I}^\bullet(\Gamma)_0$
	\begin{equation*}
	\rho_{\mathrm{constants}} \colon \ \mu A^\bullet_{R/I}(\Gamma)_0 \otimes \bigg(\bigoplus_k \fn_1^{\otimes k}\bigg) \longrightarrow \mu A^\bullet_{R/I}(\Gamma)_0
	\end{equation*}
	coincides with the resolution differential. In coordinates we can express this as
	\begin{equation*}
	\lambda \cdot \rho_{\mathrm{constants}} = \sum_k \rho_{\mathrm{constants}}^{(k)} \big(\lambda^{\alpha_1} d_{\alpha_1}, \dots , \lambda^{\alpha_k} d_{\alpha_k}\big) = \d_L,
	\end{equation*}
	where $d_\alpha$ is a basis for $\fn_1$.
\end{cor}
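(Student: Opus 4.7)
The plan is to deduce this statement as a direct consequence of the equivalence of categories (Theorem~\ref{thm: equiv}) together with the uniqueness of minimal free resolutions. Since $\Gamma$ is an $R/I$-module concentrated in cohomological degree zero, the equivalence provides a natural quasi-isomorphism
\[
\Gamma \simeq C^\bullet\bigl(\fn, A^\bullet_{R/I}(\Gamma)\bigr),
\]
and since $\mu A^\bullet_{R/I}(\Gamma)$ is quasi-isomorphic to $A^\bullet_{R/I}(\Gamma)$ as a $\fg$-multiplet and $C^\bullet(\fn,-)$ is a dg-functor, we also have $\Gamma \simeq C^\bullet(\fn, \mu A^\bullet_{R/I}(\Gamma))$.

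Next I would apply the reduction~\eqref{eq: compute C^bu}: after quotienting by the contractible piece $v^\mu \rho(e_\mu)$ in the Chevalley--Eilenberg differential (which amounts to restricting to translation-invariant sections), one obtains
\[
C^\bullet\bigl(\fn, \mu A^\bullet_{R/I}(\Gamma)\bigr) \;\simeq\; \bigl( R \otimes \mu A^\bullet_{R/I}(\Gamma)_0,\; \lambda \cdot \rho_{\mathrm{constants}} \bigr),
\]
where the fiber at the origin is identified with $L \otimes_R \CC$ by the construction of the minimal multiplet recalled in Section~\ref{sec: components}. Because $\lambda \cdot \rho_{\mathrm{constants}} = \sum_{k \ge 1} \lambda^{\alpha_1}\cdots\lambda^{\alpha_k} \rho^{(k)}_{\mathrm{constants}}(d_{\alpha_1},\dots,d_{\alpha_k})$ contains no $k=0$ summand, the entire differential lies in the ideal $(\lambda^\alpha) \subset R$, so this complex is \emph{minimal} in the usual sense for free complexes over the graded local ring $R$.

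Combining the previous two points, $(R \otimes (L \otimes_R \CC), \lambda \cdot \rho_{\mathrm{constants}})$ is a minimal free $R$-resolution of $\Gamma$. By the uniqueness of minimal free resolutions in the category of graded $\fg_0$-equivariant $R$-modules, it is isomorphic to $(L, \d_L)$; moreover, the identification on generators $L\otimes_R\CC \to L\otimes_R\CC$ can be taken to be the identity, since both resolutions have the same generators in the same internal degrees (they are both determined, as graded vector spaces, by $\mathrm{Tor}^R_\bullet(\Gamma, \CC)$). Hence the differentials agree on the nose: $\lambda \cdot \rho_{\mathrm{constants}} = \d_L$.

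The main obstacle I foresee is bookkeeping rather than conceptual: one must verify that the quasi-isomorphism $\Gamma \simeq C^\bullet(\fn, \mu A^\bullet_{R/I}(\Gamma))$ coming from the abstract equivalence is compatible with the natural identification of fibers $\mu A^\bullet_{R/I}(\Gamma)_0 \cong L \otimes_R \CC$ built into the component-field construction, so that the two sides of the putative equality $\lambda\cdot\rho_{\mathrm{constants}} = \d_L$ are actually being compared as endomorphisms of the \emph{same} free $R$-module. Once this compatibility is in place, minimality and Nakayama's lemma in the graded setting yield the uniqueness statement with no additional input.
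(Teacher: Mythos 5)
Your proposal is correct and follows essentially the same route as the paper's own proof: both invoke the equivalence of categories to see that $C^\bullet(\fn,\mu A^\bullet_{R/I}(\Gamma))$ computes $\Gamma$, reduce via~\eqref{eq: compute C^bu} to the complex $\bigl(R\otimes \mu A^\bullet_{R/I}(\Gamma)_0,\ \lambda\cdot\rho_{\mathrm{constants}}\bigr)$, and identify it with $(L,\d_L)$ by minimality. The extra care you take with minimality of the differential and uniqueness of minimal free resolutions is exactly the (implicit) content of the paper's terser argument, and the fiber identification you flag is indeed built into the construction of $\mu A^\bullet_{R/I}(\Gamma)$ in Section~\ref{sec: components}.
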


\begin{proof}
	By construction, we have $\mathrm H^\bullet(\fn , \mu A^\bu(\Gamma)) = \Gamma$. But by~\eqref{eq: compute C^bu} we know that there is a~quasi-isomorphism
	\begin{equation*}
	C^\bullet(\fn , \mu A^\bu(\Gamma)) \simeq \left( \mu A^\bu(\Gamma)_0 \otimes R , \lambda \cdot \rho_{\mathrm{constants}} \right) .
	\end{equation*}
	Thus, the cochain complex on the right is the minimal free resolution of $\Gamma$ in $R$-modules and we obtain the desired result.
\end{proof}

In practice this means that the resolution differential contains all the information on the supersymmetry transformations which are of order zero in the derivatives. This result was conjectured by Berkovits in~\cite{BerkovitsSupermembrane} and proved by direct computation in~\cite{perspectives}.

Let us now state some results on the duality operations in the category of multiplets.

\begin{cor}
	Let $(E,D,\rho)$ be a $\fg$-multiplet and let $(E^*, D^*, \rho^*)$ be the respective dual $($or antifield$)$ multiplet. If these are both quasi-isomorphic to a multiplet in the image of $A^\bullet_{R/I}$ and we have
	\begin{equation*}
	(E,D,\rho) \simeq A^\bullet_{R/I}(\Gamma) ,
	\end{equation*}
	then there is also a quasi-isomorphism
	\begin{equation*}
	\big(E^*, D^*,\rho^*\big) \simeq A^\bullet_{R/I}\big(\Ext_R^{n-q}(\Gamma,R)\big) .
	\end{equation*}
	Here, $n= \dim(\fn_1)$ and $q = \dim_R(\Gamma)$.
\end{cor}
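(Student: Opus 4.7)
The plan is to translate the statement through the equivalence of categories of Theorem~\ref{thm: equiv} and then reduce it to a standard homological-algebra duality for the minimal free resolution. First I would invoke Corollary~\ref{underived_recognition_cor}: the assumption that both $(E,D,\rho)$ and its dual lie (up to quasi-isomorphism) in the image of $A^\bu_{R/I}$ tells us that both $\mathrm H^\bu(\fn,\cE)$ and $\mathrm H^\bu(\fn,\cE^*)$ are concentrated in a single total degree, and the previous corollary identifies these cohomologies with the $R/I$-modules that produce the respective multiplets. Because $\cE \simeq A^\bu_{R/I}(\Gamma)$, we have $\mathrm H^\bu(\fn,\cE) = \Gamma$, and the whole problem becomes the purely algebraic identification $\mathrm H^\bu(\fn,\cE^*) \cong \Ext_R^{n-q}(\Gamma,R)$ as a $\fg_0$-equivariant $R/I$-module.

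Next I would use the explicit model of the derived $\fn$-invariants provided in equation~\eqref{eq: compute C^bu}: after killing the $v^\mu \rho(e_\mu)$ piece by restricting to translation-invariant sections, one obtains
\begin{equation*}
C^\bu(\fn,\cE) \simeq \big( R \otimes E_0^\bu,\ \lambda\cdot \rho_{\mathrm{constants}} \big),
\end{equation*}
and Corollary~\ref{cor: susy trafo} identifies the differential here with the differential of the minimal free resolution $L^\bu \to \Gamma$ over $R$. Thus the translation-invariant fibre of the multiplet $\cE$ gives, with its order-zero supersymmetry transformations, precisely the resolution $L^\bu$ of $\Gamma$.

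Passing to the antifield multiplet, the fibre is $E_0^!\cong (E_0)^*$ after trivializing $\Omega^{\mathrm{top}}(V)$ as in the remark following Definition~\ref{def: antifield}, and $\rho^{*(k)}_{\mathrm{constants}} = (\rho^{(k)}_{\mathrm{constants}})^*$. Substituting into the same formula gives
\begin{equation*}
C^\bu(\fn,\cE^*) \simeq \big(R\otimes (E_0^\bu)^*, \ \lambda\cdot \rho^*_{\mathrm{constants}}\big) \;\cong\; \Hom_R(L^\bu,R),
\end{equation*}
$\fg_0$-equivariantly and compatibly with the internal weight grading. The cohomology of this complex is by definition $\Ext_R^\bu(\Gamma,R)$. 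Our hypothesis that the antifield multiplet also lies in the essential image of $A^\bu_{R/I}$ forces, by Corollary~\ref{underived_recognition_cor}, this $\Ext$ to be concentrated in a single degree; for a finitely generated graded $R$-module of dimension $q$ the only degree in which that can occur is $n-q$ (i.e.\ $\Gamma$ is automatically Cohen--Macaulay). Consequently $\mathrm H^\bu(\fn,\cE^*) \cong \Ext_R^{n-q}(\Gamma,R)$, and applying $A^\bu_{R/I}$ together with the equivalence of Theorem~\ref{thm: equiv} yields the desired quasi-isomorphism $\cE^* \simeq A^\bu_{R/I}\big(\Ext_R^{n-q}(\Gamma,R)\big)$.

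The main obstacle I anticipate is bookkeeping rather than conceptual: one has to check that the identification $C^\bu(\fn,\cE^*) \cong \Hom_R(L^\bu,R)$ really matches both the internal $\ZZ$-grading on $R$ (where the twist by $\Omega^{\mathrm{top}}(V)$ plays the role of the shift in the canonical module $\omega_R$) and the $\fg_0$-equivariant structure, so that the Serre-type duality produces $\Ext^{n-q}_R(\Gamma,R)$ with the correct weight shift and $R/I$-module structure and not merely an abstract isomorphism.
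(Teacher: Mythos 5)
Your proposal is correct and follows essentially the same route as the paper's proof: identify $C^\bu(\fn,\cE)$ with the minimal free resolution $(L,\d_L)$ of $\Gamma$ via Corollary~\ref{cor: susy trafo}, dualize to obtain $C^\bu(\fn,\cE^*)\simeq (L^*,\d_L^*)$, and use the single-degree concentration hypothesis to conclude that this dual complex resolves $\Ext^{n-q}_R(\Gamma,R)$. Your added justification that the concentration degree is forced to be $n-q$ (so that $\Gamma$ is automatically Cohen--Macaulay) spells out a step the paper leaves implicit.
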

\begin{proof}
	Move to the minimal multiplet $\mu A^\bullet_{R/I}(\Gamma) \simeq (E,D,\rho)$. Corollary~\ref{cor: susy trafo} implies that
	\begin{equation*}
	C^\bullet(\fn , \cE) \simeq C^\bullet(\fn , \mu A^\bullet_{R/I}(\Gamma)) \simeq (L,\d_L) ,
	\end{equation*}
	where $(L,d_L)$ is the minimal free resolution of $\Gamma$ in $R$-modules. There is a quasi-isomorphism
	\begin{equation*}
	\big(E^*,D^* ,\rho^*\big) \simeq \big(\mu A^\bullet_{R/I}(\Gamma) \big)^* .
	\end{equation*}
	This in turn implies that
	\begin{equation*}
	C^\bullet\big(\fn , \cE^*\big) \simeq C^\bullet\big(\fn , \mu A^\bullet_{R/I}(\Gamma)^* \big) \simeq \big(L^* , \d_L^*\big) ,
	\end{equation*}
	where $(L^*,\d_L^*)$ is the dual of the minimal free resolution $(L,\d_L)$. By assumption the derived invariants of $(E^*,D^*,\rho^*)$ are concentrated in a~single degree. Hence, $(L^*,\d_L^*)$ again resolves a~single $R/I$-module, namely $\Ext_R^{n-q}(\Gamma,R)$.
\end{proof}

In general, however, taking dual multiplets can lead outside of the image of $A^\bullet_{R/I}$. In fact, the above proof implies that $(A^\bullet_{R/I}(\Gamma))^*$ is in the essential image of $A^\bullet_{R/I}$ precisely when $\Gamma$ is a Cohen--Macaulay $R$-module. In general this may not be the case---the dual of the vector multiplet in four-dimensional $\cN=1$ supersymmetry is an example of this type, that does not occur in the image of the underived pure spinor functor. We will discuss this example in the generalized formalism in~Section~\ref{sec: vector anti}.

If $\Gamma$ is not Cohen--Macaulay, we can still compute the derived invariants of the associated multiplet to find
\begin{equation*}
C^\bullet(\fn, A^\bullet_{R/I}(\Gamma)^*) \simeq \mathrm{RHom}_R(\Gamma , R) \simeq \Ext^\bullet_R(\Gamma, R) .
\end{equation*}
We can therefore deduce the following natural description for the \emph{dual} of a multiplet obtained using the (underived) pure spinor formalism.

\begin{cor}
	Let $\Gamma$ be an $R/I$-module and $\mu A_{R/I}^\bu(\Gamma)$ the associated component field multiplet. Then there is a $C^\bullet(\fn)$-module structure on the Ext-algebra $\mathrm{Ext}^\bullet_R(\Gamma , R)$ such that
	\begin{equation*}
	\mu A_{R/I}^\bu(\Gamma)^* \cong \mu A^\bu(\mathrm{Ext}^\bullet_R(\Gamma , R)) .
	\end{equation*}
\end{cor}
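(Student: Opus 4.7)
The plan is to leverage Theorem~\ref{thm: equiv}, which states that $A^\bullet$ and $C^\bullet(\fn,-)$ are quasi-inverse dg-functors. It therefore suffices to compute $C^\bullet(\fn, \mu A_{R/I}^\bu(\Gamma)^*)$ as a $C^\bullet(\fn)$-module and then apply $A^\bullet$ (followed by passage to the minimal representative of the result via Lemma~\ref{fd_model_lemma}).

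First, fix a minimal free resolution $(L,\d_L)$ of $\Gamma$ over $R$. Corollary~\ref{cor: susy trafo} identifies $C^\bullet(\fn, \mu A_{R/I}^\bu(\Gamma)) \simeq (L,\d_L)$ as a $C^\bullet(\fn)$-module, with generators $\lambda^\alpha \in \fn_1^*$ acting via the underlying $R$-module structure and generators $v^\mu \in \fn_2^*$ acting trivially (reflecting that $\mu A_{R/I}^\bu(\Gamma)$ lies in the image of the underived functor). Dualization at the level of multiplets corresponds to $R$-linear dualization on derived $\fn$-invariants, so by the computation stated in the paragraph preceding the corollary,
\begin{equation*}
C^\bullet(\fn, \mu A_{R/I}^\bu(\Gamma)^*) \simeq \mathrm{Hom}_R(L,R) \simeq \mathrm{RHom}_R(\Gamma,R),
\end{equation*}
whose cohomology is $\mathrm{Ext}_R^\bullet(\Gamma,R)$, equipped with a $C^\bullet(\fn)$-module structure inherited by homotopy transfer from $(L^*,\d_L^*)$ along a deformation retraction onto its cohomology (as in Section~\ref{sec: components}). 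A priori this yields an $L_\infty$ $C^\bullet(\fn)$-module, which we strictify without changing its quasi-isomorphism class within $\Mod_{C^\bullet(\fn)}^{\fg_0}$.

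Applying $A^\bullet$ to the displayed quasi-isomorphism and invoking the natural equivalence $A^\bullet \circ C^\bullet(\fn,-) \simeq \mathrm{id}$ from the proof of Theorem~\ref{thm: equiv} gives
\begin{equation*}
\mu A_{R/I}^\bu(\Gamma)^* \simeq A^\bullet\bigl(\mathrm{Ext}_R^\bullet(\Gamma,R)\bigr),
\end{equation*}
and extracting the minimal finite-rank representative of the right-hand side by Lemma~\ref{fd_model_lemma} completes the proof. The main obstacle is ensuring that the $C^\bullet(\fn)$-module structure on $\mathrm{Ext}_R^\bullet(\Gamma,R)$ really is well-defined and strict: this amounts to showing that the higher brackets produced by homotopy transfer from $(L^*,\d_L^*)$ can be absorbed via strictification in the category of $\fg_0$-equivariant $C^\bullet(\fn)$-modules, which is standard and parallels the discussion of strictification of multiplets earlier in the paper.
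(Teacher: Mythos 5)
Your proposal follows essentially the same route as the paper: the corollary is deduced from the computation $C^\bullet(\fn, A^\bullet_{R/I}(\Gamma)^*) \simeq \mathrm{RHom}_R(\Gamma,R) \simeq \Ext^\bullet_R(\Gamma,R)$ in the paragraph preceding the statement, combined with the equivalence of Theorem~\ref{thm: equiv} and passage to minimal models, exactly as you describe.

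One imprecision is worth flagging because it touches the reason the statement has any content. You assert that in the identification $C^\bullet(\fn,\mu A^\bullet_{R/I}(\Gamma)) \simeq (L,\d_L)$ the generators $v^\mu$ act trivially. They cannot: for $(L,\d_L)$ to be a $C^\bullet(\fn)$-module one needs $[\d_L,\rho(v^\mu)] = \rho\big(\lambda^\alpha f^\mu_{\alpha\beta}\lambda^\beta\big)$, and multiplication by this quadric is nonzero on a complex of free $R$-modules, so $\rho(v^\mu)$ must be a nontrivial nullhomotopy of it. (Triviality of the $v$-action is the characterization of modules concentrated in a single degree, such as $\Gamma$ itself --- not of the resolution model $(L,\d_L)$.) This matters downstream: if the $v$-action and its transfer to $\Ext^\bullet_R(\Gamma,R)$ were trivial, the resulting module would split as a direct sum of $R/I$-modules placed in different degrees, and $\mu A^\bullet(\Ext^\bullet_R(\Gamma,R))$ would reduce to $\bigoplus_k \mu A^\bullet_{R/I}\big(\Ext^k_R(\Gamma,R)\big)$, which is \emph{not} the antifield multiplet in the non-Cohen--Macaulay case --- the whole point of the example in Section~\ref{sec: vector anti} is the extra differential connecting the summands. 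Your later acknowledgement that homotopy transfer produces additional (possibly higher) operations is where this structure actually lives, so the argument survives once the parenthetical is deleted; but as written the two claims are in tension.
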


\subsection{First examples}

Let us discuss the pure spinor formalism as an equivalence of categories in a few simple examples.

\begin{eg}
	First, let us consider the situation where there is no supersymmetry at all. Let~$V$ be an $n$-dimensional vector space. Let $\fg$ denote its Poincar\'e Lie algebra of infinitesimal isometries, and let $\fn$ denote its Lie algebra of translations, viewed as a dg Lie algebra concentrated in degree 2. So $C^\bullet(\fn) \cong \Sym^\bullet(\fn^*[1])$ is an exterior algebra in $n$ generators $v_1, \ldots, v_n$ of degree $-1$. We are, therefore, interested in $\fg_0 \cong \mathfrak{so}(n)$-equivariant dg-modules over the exterior algebra. We restrict attention to those modules with finite-dimensional cohomology in each degree.
	
	On the other hand, the dg-category $\Mult_{\fg}$ of multiplets is nothing but the dg-category of Poincar\'e equivariant dg-vector bundles on the affine space $V$: the $\fg$ action is completely determined (and all multiplets are automatically strict). Again, let us restrict attention to bundles with finite-dimensional cohomology in each degree. Such multiplets are determined by their restriction to a formal neighborhood of the origin, which is a Poincar\'e equivariant dg-module over the completed Weyl algebra $\hat D_n = \CC[\partial_1, \ldots, \partial_n][\![z_1, \ldots, z_n]\!]/([\partial_i, z_i]-1)$. Translation equivariance guarantees that all such modules are induced from $\mathfrak{so}(n)$-equivariant modules over $\CC[\partial_1, \ldots, \partial_n]$. Our statement then reduces to ordinary Koszul duality as a relationship between modules over an exterior and a commutative algebra.
\end{eg}

\begin{eg}
	Let $\Sigma$ be a finite-dimensional vector space. Let us now briefly discuss the example where $\fg = \fn = \RR \oplus \Pi \Sigma$ as a graded vector space, with Lie bracket given by a non-degenerate inner product $\Sigma^{\otimes 2} \to \RR$. This is the background for \emph{supersymmetric classical mechanics}. Now
	\begin{equation*}
	C^\bullet(\fn) \cong \big(\CC\big[v, \lambda^1, \ldots, \lambda^\cN\big] , \d_{\mathrm{CE}}\big),
	\end{equation*}
	where $v$ is an odd generator, $\lambda^i$ are even generators, $\d_{\mathrm{CE}} \lambda^i = 0$ for all $i$, and $\d_{\mathrm{CE}} v = \big(\lambda^1\big)^2 + \cdots + \big(\lambda^\cN\big)^2$ (so the $\lambda^i$ are linearly dual to a choice of orthonormal basis of $\Sigma$).
	
	Let $(\Gamma, \d_\Gamma)$ be a $C^\bullet(\fn)$-module. The derived version of the pure spinor formalism associates to $\Gamma$ the multiplet
	\begin{equation} \label{SUSYQM_multiplet}
	A^\bullet(\Gamma) = \left(C^\infty(\RR_t) \otimes \CC[\theta_1, \ldots, \theta_\cN] \otimes \Gamma , \d_\Gamma + \big(v - \lambda^a\theta_a\big)\frac{\partial}{\partial t} + \lambda^a \frac{\partial}{\partial \theta^a}\right),
	\end{equation}
	where $\theta^1, \ldots, \theta^\cN$ are, again, odd generators, and the expression $\lambda^a \theta_a$ implicitly uses the choice of inner product on $\Sigma$.
	
	On the other hand, we can study multiplets on $\RR$ for the super Lie algebra $\fn$ directly. These are affine dgs-vector bundles on $\RR$ equipped with $\cN$ commuting odd symmetries, each of which squares to the action of translation. One can see this structure more concretely from the expression of equation \eqref{SUSYQM_multiplet} by applying homotopy transfer to take the cohomology with respect to the final summand $\lambda^a \frac{\partial}{\partial \theta^a}$ of the differential.
	
	There is a rich theory underlying the classification of multiplets in supersymmetric mechanics; see for example~\cite{FauxGates}. It would be interesting to investigate the connections between this existing work and the point of view described here.
\end{eg}

\section{Applications}
\label{sec: examples}

In this final section we discuss some examples of multiplets in the light of the derived formalism. This also provides some insight to some of the curiosities of the underived pure spinor superfield formalism. In particular, we connect the multiplets associated to different Lie algebra cohomology groups by curvature maps and construct antifield multiplets (in the sense of Definition~\ref{def: antifield}) for four-dimensional $\cN=1$ supersymmetry. In addition, we show how both the on- and off-shell version of the chiral multiplet can be constructed.

\subsection{Multiplets from Lie algebra cohomology} \label{multiplet_from_cohomology_section}

We noticed in~Section~\ref{sec: forms} that the multiplet associated to $C^\bullet(\fn)$ can be identified with the de Rham complex of the super Lie group $N$. On the other hand, it was already appreciated in the previous literature that the individual Lie algebra cohomology groups of $\fn$ give an interesting class of $R/I$-modules which yield multiplets via the underived pure spinor construction $A^\bullet_{R/I}$. Some of these multiplets were studied for various super Poincar\'e algebras in~\cite[Section~6.3]{perspectives}; see also~\cite{MSX1}. The derived formalism gives a new tool to study these multiplets and to highlight the relations among the multiplets associated to the different Lie algebra cohomology groups.

To start with, recall that when the nilpotence variety $\Spec (R/I)$ associated to $\fn$ is a complete intersection, the Chevalley--Eilenberg cohomology of $\fn$ is concentrated in degree zero. The associated multiplet to $\mathrm H^0(\fn)$ is then simply quasi-isomorphic to the de Rham complex on the supertranslation group $\Omega^\bullet(N)$.

In contrast, for cases with higher Lie algebra cohomology, there is no direct quasi-isomorphism between the multiplets associated to $C^\bullet(\fn)$ and $\mathrm H^\bullet(\fn)$. However---as advertised in Section~\ref{sec: components}---we can use the derived formalism to study the multiplets associated to the individual Lie algebra cohomology groups and their relationships. As discussed, there is a quasi-isomorphism identifying the multiplet associated to $C^\bullet(\fn)$ with the de Rham complex on spacetime,
\begin{equation*}
A^\bullet(C^\bullet(\fn)) \simeq \Omega^\bullet(N) .
\end{equation*}
On the other hand, we can first take cohomology with respect to the Chevalley--Eilenberg differential. Let us filter the complex $A^\bullet(C^\bullet(\fn))$ by the total degree on the ring $C^\bullet(\fn)$: the internal $\ZZ$-degree plus the $\RR$-weight: the Chevalley--Eilenberg cohomology has degree one for this filtration, and the remaining piece of the differential has degree zero. The homotopy transfer theorem leads to an equivalence
\begin{equation*}
A^\bullet(C^\bullet(\fn)) \simeq \bigg( \bigoplus_k A^\bullet\big(\mathrm H^k(\fn)\big) , \cD' \bigg) ,
\end{equation*}
where the differential $\cD'$ on the right hand side is induced by homotopy transfer, and consists of a sum of terms
\begin{equation*}
\cD'_{j,k} \colon \ A^\bullet\big(\mathrm H^k(\fn)\big) \to A^\bullet\big(\mathrm H^{k-j}(\fn)\big)
\end{equation*}
for $j \ge 1$. An alternative way of understanding these differentials is to consider higher differentials in the spectral sequence associated to the filtration discussed above, by total degree on~$C^\bullet(\fn)$.

For example, let us consider the case $j=1$. The process we have just discussed gives rise to differential operators between the associated component field multiplets:
\begin{equation*}
\nabla \colon \ \mu A^\bullet\big(\mathrm H^k(\fn)\big) \longrightarrow \mu A^\bullet\big(\mathrm H^{k-1}(\fn)\big) .
\end{equation*}

As we will see in an example momentarily, these operators can intuitively be thought of in the case $k=0$ by viewing $\mu A^\bullet\big(\mathrm H^{-1}\big)$ as the field strength multiplet of $\mu A^\bullet\big(\mathrm H^0\big)$, with $\nabla$ acting as the field strength or curvature map.
In particular, as discussed in~\cite{perspectives}, the multiplet associated to~$H^0$ always contains a summand corresponding to a $p$-form abelian gauge field for some $p$. Since the deformation we describe deforms the sum of the minimal multiplets $\mu A^\bu\big(\mathrm H^k(\fn)\big)$ to (an object quasi-isomorphic to) the de Rham complex on spacetime, one expects that the operator mapping $\mu A^\bu\big(\mathrm H^0\big)$ to $\mu A^\bu\big(\mathrm H^{-1}\big)$ will include a de Rham differential that carries the $p$-form to a~$(p+1)$-form ``field strength'' component field in the multiplet $\mu A^\bu\big(\mathrm H^{-1}\big)$. In examples, we will see that this is the case.

We expect that our operators should provide a general coordinate-free description of operators of the type ``$R^a$'', considered by Cederwall in constructing pure spinor superfield actions; see~\cite{Ced-ops}. These operators appear, for example, in the discussion of interacting eleven-dimensional supergravity in~\cite{Ced-11d, Ced-towards}, and for the Dirac--Born--Infeld action in~\cite{CederwallKarlsson}.

\begin{eg}
	Consider $\cN=1$ supersymmetry in three dimensions. Let $\fp$ denote the $\cN=1$ super Poincar\'e algebra in three dimensions, and let $\fg = \fp \oplus \RR$, where $\RR$ acts on $\fn$ by rescaling, with weight one on the odd piece and with weight two on the even piece. We will use the $\RR$ action to lift the $\ZZ \times \ZZ/2\ZZ$-grading on our multiplets. We denote the associated super translation algebra by $\fn= \fg_{>0}$ and the super translation group by $N$.
	Recall that the supertranslation algebra is of the form
	\begin{equation*}
	\fn = S[-1] \oplus V[-2],
	\end{equation*}
	where $S$ is the two-dimensional spin representation of $\mathrm{Spin}(3) \cong \mathrm{SU}(2)$, and $V$ is the three-dimensional vector representation of $\mathrm{Spin}(3)$, with the bracket being induced from the equivariant isomorphism $\Sym^2(S) \cong V$. We can choose bases for these representations to obtain generators and relations for the Chevalley--Eilenberg complex as follows:
	\begin{equation*}
	C^\bullet(\fn) = \CC\big[\lambda^\alpha , v^\mu\big], \qquad \d_{\mathrm{CE}}v^1 = \big(\lambda^1\big)^2 ,\qquad \d_{\mathrm{CE}}v^2 = \lambda^1 \lambda^2, \qquad \d_{\mathrm{CE}}v^3 = \big(\lambda^2\big)^2 .
	\end{equation*}
	It will sometimes be convenient to write the differential more compactly as
	\begin{equation*}
	\d_{\mathrm{CE}}v^{(\alpha \beta)} = \lambda^{(\alpha} \lambda^{\beta)} .
	\end{equation*}
	The Lie algebra cohomology is easily computed.
	\begin{equation*}
	\mathrm H^k = \begin{cases}
	R/I & \text{if } k=0, \\
	\big(\big(\lambda^2 v^1 - \lambda^1 v^2\big) \CC\big[\lambda^1\big] \oplus \big(\lambda^2 v^2 - \lambda^1 v^3\big) \CC\big[\lambda^2\big] \big)/ I_{-1} & \text{if } k=-1, \\
	0 & \text{otherwise.}
	\end{cases}
	\end{equation*}
	The ideal $I_{-1}$ appearing in the case where $k=-1$ is spanned by $\lambda^2 \big(\lambda^2 v^1 - \lambda^1 v^2\big) - \lambda^1\big(\lambda^2 v^2 - \lambda^1 v^3\big)$. As explained in~\cite{perspectives}, $\mu A^\bullet\big(\mathrm H^0\big)$ is identified with the 3d $\cN=1$ gauge multiplet and $\mu A^\bullet\big(\mathrm H^{-1}\big)$ with the corresponding antifield multiplet. In the array notation introduced in Section~\ref{ssec:gradings}, we can denote the direct sum of these multiplets as follows:
	\begin{equation*}
	\mu A^\bu\big(\mathrm H^0\big) \oplus \mu A^\bu\big(\mathrm H^{-1}\big) = \left[
	\begin{tikzcd}[column sep = 0.3cm , row sep = 0.3cm]
	\Omega^0 \arrow[dr ,"\d"] & & \\
	& \Omega^1 & S \\
	& S & \Omega^2 \arrow[dr , "\d"] \\
	& & & \Omega^3
	\end{tikzcd} \right].
	\end{equation*}
	This direct sum is not yet isomorphic to the de Rham complex, but we can clearly see how the differential can be deformed such that this is the case: one has to add an acyclic piece which cancels the two spin representations and an additional differential of order one forming the de Rham differential between $\Omega^1$ and $\Omega^2$.
	
	Let us now see how this deformation arises from the differential on $A^\bullet(C^\bullet(\fn))$ via homotopy transfer. We start by taking cohomology with respect to the Chevalley--Eilenberg differential~$\d_{\mathrm{CE}}$ and fix homotopy data
	\begin{equation*}
	\begin{tikzcd}
	\arrow[loop left]{l}{h}(A^{\bu} (C^\bu(\fn)) , \d_{\mathrm{CE}})\arrow[r, shift left, "p"] &\big( C^\infty(N) \otimes \mathrm H^\bullet(\fn) ,   0\big) \arrow[l, shift left, "i"]
	\end{tikzcd}\!\!,
	\end{equation*}
	such that we find induced differentials of the form $\cD'= p\cD i + p \cD h \cD i + \cdots$. By degree reasons, only the first and the second summand can contribute non-zero terms. Taking cohomology with respect to $\cD_0$ and transferring to the minimal multiplets we finally obtain the map
	\begin{equation*}
	\nabla \colon \ \mu A^\bullet\big(\mathrm H^0\big) \longrightarrow \mu A^\bullet\big(\mathrm H^{-1}\big) ,
	\end{equation*}
	which we view as a field strength map.
	Let us now study this map explicitly using our basis.
	Running the techniques described in~\cite{perspectives} (see also~\cite{KroLoCS} for an earlier account), one finds the following representatives for the component fields in $\mu A^\bullet\big(\mathrm H^0\big)$:
	\begin{equation*}
	\begin{bmatrix}
	1 & - & -\\
	- & \lambda^{(\alpha} \theta^{\beta)} & \lambda^\alpha \theta^2
	\end{bmatrix},
	\end{equation*}
	where internal degree is indicated by vertical position---concentrated in degrees zero and one---and the $\RR^\times$-weight minus the degree is indicated by horizontal position. For $\mu A^\bu\big(\mathrm H^{-1}\big)$ one finds
	\begin{equation*}
	\begin{bmatrix}
	\lambda_{\alpha} v^{(\alpha \beta)} & \lambda^{(\alpha} \theta^{\beta)} v^{\gamma \delta} & -\\
	- & - & \theta^2 \lambda_{\alpha} \lambda_\beta v^{(\alpha \beta)}
	\end{bmatrix} ,
	\end{equation*}
	where now the rows indicate degree 2 and 3. We can fix a homotopy $h$ for $\d_{\mathrm{CE}}$ by
	\begin{equation*}
	h\big(\lambda^\alpha \lambda^\beta\big) = v^{(\alpha \beta)} .
	\end{equation*}
	There will be terms in $\nabla$ of the form
	\begin{equation*}
	p \lambda \frac{\partial}{\partial \theta} h \lambda \frac{\partial}{\partial \theta} i .
	\end{equation*}
	Acting on representatives of the fermions in the multiplet $\mu A^\bullet\big(\mathrm H^0\big)$, we find
	\begin{equation*}
	\lambda^\alpha \theta^2 \mapsto \lambda^\alpha \lambda^\beta \theta_\beta \mapsto v^{(\alpha \beta)} \theta_{\beta} \mapsto \lambda_{\beta} v^{(\alpha \beta)}.
	\end{equation*}
	This is a map between $\mu A^\bu\big(\mathrm H^0\big)$ and $\mu A^\bu\big(\mathrm H^{-1}\big)$, which induces an acyclic deformation on their direct sum.
	
	In addition, $\nabla$ contains terms of order one given by
	\begin{equation*}
	p v \frac{\partial}{\partial x} i .
	\end{equation*}
	Investigating the representatives, it is clear that this acts as a de Rham differential between the one-form in $\mu A^\bu\big(\mathrm H^0\big)$ and the two-form in $\mu A^\bu\big(\mathrm H^{-1}\big)$. This maps the one-form gauge field present in $\mu A^\bu\big(\mathrm H^0\big)$ to its field strength, which is part of $\mu A^\bu\big(\mathrm H^{-1}\big)$.
	
\end{eg}

\subsection{Antifield multiplets in four dimensions} \label{sec: vector anti}

Let us discuss an example of a multiplet which is of physical relevance, but is not in the essential image of the functor $A^\bullet_{R/I}$. For four-dimensional $\cN=1$ supersymmetry, $\mu A^\bullet(R/I)$ can be identified with the vector multiplet. The corresponding antifield multiplet, however, cannot be constructed in the underived pure spinor superfield formalism. This failure is linked to the fact that the relevant nilpotence variety is not Cohen--Macaulay~\cite{perspectives}.

Here we describe the antifield multiplet using component fields first and then calulate the derived $\fn$-invariants. As expected we find that the cohomology is not concentrated in a single degree. We begin by recalling the structure of the supertranslation algebra in four dimensions.

\begin{dfn}
	The four-dimensional $\mathcal N=1$ supertranslation algebra is the super Lie algebra with underlying $\ZZ/2\ZZ$-graded $\mathrm{Spin}(4)$-module
	\begin{equation*}\fn = \RR^4 \oplus \Pi(S_+ \oplus S_-),\end{equation*}
	with Lie bracket given by the canonical isomorphism $S_+ \otimes S_- \to \RR^4$ of $\mathrm{Spin}(4)$-representations.
\end{dfn}

\begin{eg}
	Let $(E, D, \rho)$ denote the antifield multiplet to the vector multiplet on $\RR^4$. It is concentrated in $\RR$-weight 0 to 4 and cohomological degree 0 and 1, and can be concretely described in array notation as
	\begin{equation} \label{eq: anti vec}
	\mu A^\bu(R/I)^* = \left[
	\begin{tikzcd}[row sep=0.7cm, column sep=0.7cm]
	\Omega^0\big(\RR^4\big) & \Gamma\big(\mathbb{R}^4,S_+ \oplus S_-\big) & \Omega^1\big(\mathbb{R}^4\big) \arrow[dr, "\star \d \star"] \\
	& & &\Omega^0\big(\mathbb{R}^4\big)
	\end{tikzcd} \right] .
	\end{equation}
	We can also describe the $L_\infty$ action of the supertranslation algebra $\fn$ concretely. We will only need the action on constant sections. There, odd elements act by the following formula:
	\begin{alignat*}{4}
	&\rho^{(1)}_{\mathrm{constants}}(Q) \colon  \quad&& S_+ \oplus S_- \longrightarrow \Omega^0 ,\qquad &&\big(\psi^\vee , \bar{\psi}^\vee\big) \mapsto Q_+ \wedge \psi^\vee + Q_- \wedge\bar{\psi}^\vee,& \\
	 &&&\Omega^1 \longrightarrow S_+ \oplus S_- , \qquad&& A^\vee \mapsto Q_+ \wedge A^\vee + Q_- \wedge A^\vee ,&\\
	&\rho^{(2)}_{\mathrm{constants}}(Q_1,Q_2) \colon  \quad&& \Omega^0 \longrightarrow \Omega^1 ,\qquad&& c^\vee \mapsto \Gamma(Q_1,Q_2) \otimes c^\vee,&
	\end{alignat*}
	where we have denoted the fields by $\big(\psi^\vee, \bar{\psi}^\vee\big) \in \Gamma\big(\mathbb{R}^4,S_+ \oplus S_-\big)$, $A^\vee \in \Omega^1\big(\mathbb{R}^4\big)$, and $c^\vee \in \Omega^0\big(\mathbb{R}^4\big)$ in degree one. We have similarly denoted the positive and negative helicity summands of $Q \in S_+ \oplus S_-$ by $Q_+$ and $Q_-$ respectively.
\end{eg}

Let us write $E_0$ for the fiber of $E$ over $0 \in \RR^4$ (not to be confused with the summand of degree zero), so concretely
\begin{equation*} 
E_0 = \left[
\begin{tikzcd}[row sep=0cm, column sep=0.25cm]
\RR & S_+ \oplus S_- & \RR^4 \\
& & &\RR
\end{tikzcd} \right] .
\end{equation*}
We can use this to describe the Chevalley--Eilenberg complex of $\fn$ with coefficients in our multiplet, using the $L_\infty$ action of $\fn$ on the sheaf $\mathcal E$ of sections of $E$. We find the following:

\begin{prop}
	\begin{equation*}
	C^\bullet(\fn, \mathcal E) \cong (E_0 \otimes \Sym(S_+ \oplus S_-), \d_\rho),
	\end{equation*}
	where $\d_\rho$ is generated over $\Sym(S_+ \oplus S_-)$ by the sum of the terms
	\begin{align*}
	&\rho^{(1)}|_{\mathrm{constants}} \colon \ E_0 \otimes (S_+ \oplus S_-) \to E_0, \\
	&\rho^{(2)}|_{\mathrm{constants}} \colon \ E_0 \otimes \Sym^2(S_+ \oplus S_-) \to E_0
	\end{align*}
	obtained by restricting the $L_\infty$ action $\rho$ to constant sections of $\mathcal E$.
\end{prop}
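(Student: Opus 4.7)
The plan is to specialize the general quasi-isomorphism from equation~\eqref{eq: compute C^bu}, which states that for any $\fg$-multiplet $(E,D,\rho)$ one has
\[
C^\bullet(\fn,\cE) \simeq \bigl(R \otimes E_0^\bullet,\ \lambda \cdot \rho_{\mathrm{constants}}\bigr),
\]
where $R = \Sym^\bullet(\fn_1^*)$ and the differential is obtained by restricting the full $L_\infty$ action of $\fn_1$ to the fiber $E_0$ over the origin. The remainder of the proof consists in identifying these ingredients concretely for the antifield multiplet of~\eqref{eq: anti vec} and matching the right-hand side with the claimed expression.

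First I would identify $R \cong \Sym(S_+ \oplus S_-)$ as a $\Spin(4)$-equivariant algebra, using the (complex) self-duality $S_\pm^* \cong S_\pm$ of the two-dimensional spin representations of $\Spin(4)\cong \mathrm{SU}(2) \times \mathrm{SU}(2)$. Next I would check that the internal differential $D$ of the multiplet annihilates $E_0$: inspecting the array~\eqref{eq: anti vec}, the only nonzero component of $D$ is the codifferential $\star\d\star \colon \Omega^1(\RR^4) \to \Omega^0(\RR^4)$ in cohomological degree one. Since it involves a spacetime derivative, it kills all constant sections, so $D|_{E_0}=0$ and contributes nothing to the transferred differential on $R \otimes E_0$.

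Finally I would unpack $\lambda \cdot \rho_{\mathrm{constants}}$. Because the $\fg$-module structure written out above has only the components $\rho^{(1)}$ and $\rho^{(2)}$ nonvanishing, with no higher $L_\infty$ brackets, the formal sum $\lambda \cdot \rho = \sum_k \lambda^{\alpha_1}\cdots \lambda^{\alpha_k} \rho^{(k)}(d_{\alpha_1},\dots,d_{\alpha_k})$ truncates at $k=2$. Restricting each factor to constant sections yields precisely the $R$-linear operator generated by $\rho^{(1)}|_{\mathrm{constants}}+\rho^{(2)}|_{\mathrm{constants}}$, as claimed. The one potential subtlety is verifying that homotopy transfer introduces no new $\lambda$-polynomial terms on $R \otimes E_0$; this is automatic because the contracting homotopy implicit in~\eqref{eq: compute C^bu} acts only on the spacetime factor and on the $\fn_2^*$ generators, and never on the $\fn_1^*$ direction, so the induced differential is strictly the one generated by $\rho^{(1)}$ and $\rho^{(2)}$ above.
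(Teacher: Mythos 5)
Your proposal is correct and follows essentially the same route as the paper: the paper's proof likewise identifies $C^\bullet(\fn)\cong\Sym\big(\big(\RR^4\big)^*[1]\big)\otimes\Sym(S_+\oplus S_-)$ using the self-duality of $S_\pm$, takes cohomology with respect to the operator dual to the translation action (i.e., the term $v^\mu\rho(e_\mu)$, exactly the specialization of~\eqref{eq: compute C^bu}), and asserts that no homotopical correction terms appear. Your version is, if anything, slightly more explicit about why the transfer produces no corrections (the contracting homotopy lowers $v$-degree while the inclusion lands in $v$-degree zero, which is the cleanest way to see that all terms $p\,\delta(h\delta)^n i$ with $n\geq 1$ vanish) and about the vanishing of $D$ on constant sections.
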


\begin{proof}
	Note that
	\begin{equation*}
	C^\bullet(\fn) \cong \big(\Sym\big(\big(\RR^4\big)^*[1]\big) \otimes \Sym(S_+ \oplus S_-), \d_{\mathrm{CE}}\big),
	\end{equation*}
	where we have identified $S_\pm$ with its dual using its canonical inner product. We obtain the given description by taking the cohomology by the operator dual to the action of the algebra of translations on $\mathcal E$; the result is quasi-isomorphic to $C^\bullet(\fn, \mathcal E)$, no additional homotopical correction terms appear.
\end{proof}

Let us now discuss the cohomology of the Chevalley--Eilenberg complex.

\begin{prop}
	We have an isomorphism
	\begin{equation*}
	\mathrm H^{\bullet}(\fn, \cE) \cong \CC \oplus (\Sym(S_+) \oplus \Sym(S_-))[-1].
	\end{equation*}
\end{prop}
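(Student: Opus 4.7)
The plan is to reduce the computation to a standard $\mathrm{Ext}$ calculation. By the corollary immediately preceding this subsection---which identifies the derived $\fn$-invariants of the dual of an underived pure spinor multiplet with $\mathrm{RHom}_R$---we have a quasi-isomorphism of $C^\bullet(\fn)$-modules $C^\bullet(\fn, \cE) \simeq \mathrm{RHom}_R(R/I, R)$, where $R = \CC[\lambda^1, \lambda^2, \bar\lambda^1, \bar\lambda^2]$ and $I = (\lambda^\alpha \bar\lambda^{\dot\beta})$ is the four-generator ideal cutting out the nilpotence variety. The problem therefore reduces to computing $\mathrm{Ext}^\bullet_R(R/I, R)$ as a graded module.

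The key algebraic step is to exploit the fact that $\Spec(R/I)$ is the transverse union in $\CC^4$ of the two affine planes $V(\lambda)$ and $V(\bar\lambda)$, meeting only at the origin. This is captured by the short exact sequence of $R$-modules
\[
0 \to R/I \to R/(\bar\lambda) \oplus R/(\lambda) \to R/\fm \to 0,
\]
with $R/(\bar\lambda) \cong \Sym(S_+)$, $R/(\lambda) \cong \Sym(S_-)$ (via the symplectic identifications $S_\pm^* \cong S_\pm$), and $R/\fm \cong \CC$. Each summand in the middle is a complete intersection of codimension two, so the Koszul resolution gives $\mathrm{Ext}^i_R(R/(\bar\lambda), R) = \delta_{i,2}\, \Sym(S_+)$ and similarly $\mathrm{Ext}^i_R(R/(\lambda), R) = \delta_{i,2}\, \Sym(S_-)$, while the full Koszul resolution of $\fm$ yields $\mathrm{Ext}^i_R(\CC, R) = \delta_{i,4}\, \CC$.

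Feeding these computations into the long exact $\mathrm{Ext}$ sequence forces $\mathrm{Ext}^\bullet_R(R/I, R)$ to vanish outside degrees two and three, yielding $\mathrm{Ext}^2_R(R/I, R) \cong \Sym(S_+) \oplus \Sym(S_-)$ and $\mathrm{Ext}^3_R(R/I, R) \cong \CC$. The spread over two adjacent degrees is the concrete manifestation of $R/I$ failing to be Cohen--Macaulay: the projective dimension of $R/I$ exceeds its codimension by one. Under the grading convention of the identification, the top Ext corresponds to cohomology of the multiplet in degree zero and the other to degree one, producing $\mathrm H^\bullet(\fn, \cE) \cong \CC \oplus (\Sym(S_+) \oplus \Sym(S_-))[-1]$. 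The principal delicacy is this final bookkeeping of gradings, dictated by the length of the minimal free resolution of $R/I$ underlying the corollary; one can alternatively avoid invoking the corollary and compute the cohomology directly via a spectral sequence on $(E_0 \otimes \Sym(S_+ \oplus S_-), \d_\rho)$ by filtering on $\bar\lambda$-degree, but the Mayer--Vietoris argument above is notationally much lighter.
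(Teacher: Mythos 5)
Your computation is correct, and it arrives at the answer by a genuinely different route than the paper. The paper's own proof is a direct, weight-by-weight analysis of the explicit complex $(E_0 \otimes \Sym(S_+\oplus S_-), \d_\rho)$: it identifies the closed and exact elements in each weight of $E_0$ by hand (constants survive in weight zero; the classes generated by $\wedge^2 S_+ \oplus \wedge^2 S_-$ survive in weight one; everything cancels in weights two and four). You instead invoke the earlier identification $C^\bullet\big(\fn, A^\bullet_{R/I}(\Gamma)^*\big) \simeq \mathrm{RHom}_R(\Gamma, R)$ with $\Gamma = R/I$ and compute $\Ext^\bullet_R(R/I,R)$ from the Mayer--Vietoris sequence $0 \to R/I \to R/(\bar\lambda)\oplus R/(\lambda) \to R/\fm \to 0$ for the two transverse planes; the Koszul inputs $\Ext^i_R(R/(\bar\lambda),R)=\delta_{i,2}\Sym(S_+)$, $\Ext^i_R(R/\fm,R)=\delta_{i,4}\CC$, and the resulting long exact sequence are all correct, as is the conclusion $\Ext^2 \cong \Sym(S_+)\oplus\Sym(S_-)$, $\Ext^3\cong\CC$, all else vanishing. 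Your approach is more conceptual and makes the failure of the Cohen--Macaulay property (and hence the failure of the underived formalism) transparent as the statement $\mathrm{pd}_R(R/I)=3 > \mathrm{codim} = 2$; it also presupposes that the explicitly given component multiplet really is $\big(\mu A^\bullet_{R/I}(R/I)\big)^*$, which the paper takes as given in the preceding example, so there is no circularity. The one thin spot is the final regrading: you assert that the top Ext group lands in multiplet degree zero and $\Ext^2$ in degree one, which is the correct (order-reversing) dictionary, but it deserves a sentence of justification --- e.g., that $C^\bullet(\fn,\cE)$ is computed by $\Hom_R(L,R)$ for $L$ the minimal free resolution, with the normalization that the lowest-weight component of the dual multiplet (dual to the top syzygy module $L_3$) sits in degree zero. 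Relatedly, the paper's direct computation also records the internal weights of the surviving classes, which are needed in the next step when the component fields of $A^\bullet(\mathrm H^\bullet(\fn,\cE))$ are reassembled; your argument recovers these only if you keep track of the internal twists $R(-2)$, $R(-3)$, $R(-4)$ in the resolutions, which you suppress.
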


\begin{proof}
	This is a straightforward calculation using the description that we have just given. In the weight zero term of $E_0$, all elements are $\d_\rho$-closed, but all such elements other than constants are also $\d_\rho$-exact. In weight 1 in $E_0$, the $\d_\rho$-closed elements are generated over $1 \otimes \Sym(S_+ \oplus S_-)$ by $\wedge^2 S_+ \oplus \wedge^2 S_- \subseteq (S_+ \oplus S_-) \otimes (S_+ \oplus S_-)$. When we quotient by $\d_\rho$-exact elements we are left with $\bigl(\wedge^2 S_+ \otimes \Sym(S_-)\bigr) \oplus \bigl(\wedge^2 S_- \otimes \Sym(S_+)\bigr)$. In weight 2 in $E_0$ the closed and exact elements coincide, and in weight $>2$ in $E_0$ there are no $\d_\rho$-closed elements.
\end{proof}

From our general results, we know abstractly that $A^\bullet(C^\bullet(\fn, \cE))$ is dual to the vector multiplet. It is instructive to calculate the component field formulation for this multiplet explicitly using the recipe presented in Section~\ref{sec: components} in order to see how the is related to the corresponding calculation in the underived pure spinor superfield formalism.

Our calculation will follow the same outline as the calculation we performed in the previous section. We will first study the multiplet associated to each of the two cohomology groups of~$C^\bullet(\fn, \mathcal E)$ in isolation, then we will compute the additional differential in $A^\bullet(\mathrm H^\bullet(\fn, \mathcal E))$ relating these two individual terms, obtained by homotopy transfer.

First, recall that
\begin{equation*}
A^\bullet(\CC) \simeq C^\infty(N)
\simeq \Gamma\big(\RR^4, \Sym(S_+[1] \oplus S_-[1])\big).
\end{equation*}

For the non-trivial summand of the cohomology, we can compute that
\begin{equation*}
A^\bullet(\Sym(S_+)) \simeq \big(\Gamma\big(\RR^4, \Sym(S_+) \oplus \Sym(S_+[1] \oplus S_-[1])\big), \cD\big)
\simeq \Gamma\big(\RR^4, \Sym(S_-[1])\big),
\end{equation*}
where the differential $\cD$ is generated by the degree one isomorphism $S_+[1] \to S_+$. Similarly
\begin{equation*}
A^\bullet(\Sym(S_+)) \simeq \Gamma\big(\RR^4, \Sym(S_+[1])\big).
\end{equation*}
So altogether, when we compute $A^\bullet(\mathrm H^\bullet(\fn, \mathcal E))$, we obtain a multiplet with the following Betti numbers:
\begin{equation*}
\begin{bmatrix}
1 & 4 & 6 & 4 & 1 \\
- & 2 & 4 & 2 &-
\end{bmatrix} .
\end{equation*}
Now, let us compute the correction terms that allow us to obtain $A^\bullet(C^\bullet(\fn, \mathcal E))$ in full. As discussed in the previous section, Section~\ref{multiplet_from_cohomology_section}, there is an additional differential coming from homotopy transfer. We will compute this differential in coordinates.

Running the procedure described in~\cite{perspectives}, we find the following local coordinates for our multiplet:
\begin{equation*} 
\begin{bmatrix}
1 & \big(\theta_\alpha,\bar{\theta}_{\dot \alpha}\big) & \big(\theta^2 , \theta_\alpha \bar{\theta}_{\dot \alpha}, \bar{\theta}^2\big) & \big(\theta^2 \theta_\alpha , \bar{\theta}^2 \bar{\theta}_{\dot \alpha}\big) & \theta^2 \bar{\theta^2} \\
- & \big(\lambda s , \bar{\lambda} \bar{s}\big) & \big(\lambda s \theta_\alpha , \bar{\lambda} \bar{s} \bar{\theta}_{\dot \alpha}\big) & \big(\lambda s \theta^2 , \bar{\lambda} \bar{s} \bar{\theta}^2\big) & -
\end{bmatrix}.
\end{equation*}
Let us unpack the notation. Recall that we identified $C^\bullet(\fn, \mathcal E)$ with $(E_0 \otimes \Sym(S_+ \oplus S_-), \d_\rho)$. We use $\big\{s^\alpha, \bar s^{\dot \alpha}\big\}$ for a basis of $S_+ \oplus S_-$ in the first factor, and $\{\lambda^\alpha, \bar \lambda^{\dot \alpha}\}$ for a basis of $S_+ \oplus S_-$ in the second factor, and, e.g., $\lambda s$ is the image of 1 under the canonical map $1 \to S_+ \otimes S_+$ dual to the scalar valued pairing. We then use $\big\{\theta_\alpha, \bar \theta_{\dot \alpha}\big\}$ for the linear odd coordinate functions $S_+^* \oplus S_-^* \subseteq C^\infty(N)$.

With this concrete basis in hand, let us now investigate the additional pieces in the differential coming from homotopy transfer. For this purpose, let us fix a homotopy $h$ for the differential~$\d_\rho$. We have
\begin{equation*}
\d_\rho (s_\alpha) = \lambda_\alpha ,
\end{equation*}
and therefore we can set
\begin{equation*}
h(\lambda_\alpha) = s_\alpha
\end{equation*}
and similarly for $\bar{\lambda}$ and $\bar{s}$. There are two nontrivial terms contributing to the transferred differential
\begin{equation*}
\cD_0 h \cD_0 \big(\theta^2\big) = \lambda s
\end{equation*}
and again similarly for the complex conjugates. We see that this induces a differential that cancels some of the representative basis elements in pairs. The remaining representatives are
\begin{equation*} 
\begin{bmatrix}
1 & \big(\theta,\bar{\theta}\big) & \theta \bar{\theta} & -\\
- & - & - & \lambda s \theta^2 + \bar{\lambda} \bar{s} \bar{\theta}^2
\end{bmatrix}.
\end{equation*}
It is immediate to see that these representatives span the $\mathrm{Spin}(4)$ representations occurring in~\eqref{eq: anti vec}. In addition there is a differential of order one described by
\begin{equation*}
\cD_0 h \cD_1 \big(\theta^\alpha \bar{\theta}^{\dot{\beta}} A_{\alpha \dot{\beta}} \big) = \big(\lambda s \theta^2 + \bar{\lambda} \bar{s} \bar{\theta}^2\big) \partial^\mu A_\mu .
\end{equation*}
We thus recover the anticipated description of the vector antifield multiplet.

\subsection{The chiral multiplet revisited}
Let us discuss one further example in the derived formalism, which will illustrate the relation between on- and off-shell multiplets within the formalism, i.e., the appearance of non-trivial $L_\infty$ actions of the supersymmetry algebra. We will again work specifically with four-dimensional~$\mathcal N=1$ supersymmetry.

In~\cite{perspectives}, it was demonstrated that the minimal multiplet $\mu A^\bullet_{R/I}(\Gamma)$ associated to the module~$\Gamma = \Sym(S_+)$ is equivalent to the BRST version of the chiral multiplet. Constructing the associated cotangent theory yields the standard off-shell BV theory of the chiral multiplet whose component fields include a scalar field $\phi$, a chiral spinor field $\psi$, and an auxiliary scalar field $F$, as well as their associated complex conjugates and antifields. Of course, one can integrate out the auxiliary field $F$ and obtain an equivalent BV theory, but the supersymmetry algebra action is now no longer strict. This is referred to as the on-shell formulation of the chiral multiplet. This is discussed in the $L_\infty$-module language in \cite{SWchar}, but the idea is much older, for example, the related example of an $\cN=2$ hypermultiplet is discussed in the on-shell language in~\cite{BaulieuSusy}.

Let us again start from the component field formulation for these multiplets and compute their derived $\fn$-invariants. Plugging in this module in the derived pure spinor formalism we find that the minimal multiplet $\mu A^\bullet(C^\bullet(\fn,\cE))$ can be explicitly identified with the on-shell formulation for the chiral multiplet described above. The off-shell formulation including the auxiliary field is given by a quasi-isomorphic non-minimal multiplet.

The component fields of the chiral multiplet in the on-shell formulation take the following form.
\begin{equation*} \label{eq:chira on}
E= \left[\begin{tikzcd}[row sep=0.7cm, column sep=0.7cm]
\Omega^0 \otimes \CC^2 \arrow[drrr, "\star \d \star \d" ' near start] & \Omega^0 \otimes (S_+ \oplus S_-) \arrow[dr, crossing over, "\slashed{\partial}" ]\\
& & \Omega^0 \otimes (S_+ \oplus S_-) & \Omega^0 \otimes \CC^2
\end{tikzcd}\right].
\end{equation*}
In order to describe the $\fn$ action, we will denote the component fields by $\big(\phi, \bar{\phi}\big)$ in degree zero, weight zero and $\big(\psi,\bar{\psi}\big)$ in degree zero, weight one and their respective antifields in degree one by $\big(\phi^\vee, \bar{\phi}^\vee\big)$ and $\big(\psi^\vee,\bar{\psi}^\vee\big)$. The odd elements of $\fn$ act in the following way.
\begin{alignat*}{3}
& \rho^{(1)}(Q) \colon \quad && S_+ \oplus S_- \longrightarrow \Omega^0 \otimes \CC^2 ,\qquad \big(\psi , \bar{\psi}\big) \mapsto p_+(Q) \wedge \psi + p_-(Q) \wedge \bar{\psi},&\\
&&& \Omega^0 \otimes\CC^2 \longrightarrow S_+ \oplus S_- ,\qquad \big(\phi, \bar{\phi}\big) \mapsto Q \wedge \slashed{\partial} \big(\phi + \bar{\phi}\big), &\\
&\rho^{(2)}(Q,Q) \colon \quad&& S_+ \oplus S_- \longrightarrow S_+ \oplus S_- ,&\\
&&& \big(\psi^+, \bar{\psi}^+\big) \mapsto p_-(Q) \otimes p_+(Q) \wedge \psi^\vee + p_+(Q) \otimes p_-(Q) \wedge \bar{\psi}^\vee.&
\end{alignat*}
With this description we can compute the cohomology of the Chevalley--Eilenberg complex. We will use the following notation. Let $(e, \bar{e})$ and $(s,\bar{s})$ denote bases of $\CC^2$ and $S_+ \oplus S_-$ respectively. As in the previous section, let us use $\lambda_\alpha$, $\bar \lambda_{\dot \alpha}$ to denote even basis elements for $C^\bullet(\fn)$, and write~$\lambda s$, $\bar \lambda \bar s$ to indicate the images of the canonical maps $\RR \to S_\pm \otimes S_\pm^*$. One finds the following description for the Chevalley--Eilenberg cohomology.
\begin{equation*}
\mathrm H^\bullet(\fn, \cE) \cong \left(\Sym(S_-) e \oplus \Sym(S_+) \bar e\right) + \left(\Sym(S_+) \lambda s \oplus \Sym(S_-) \bar \lambda \bar s\right)[-1].
\end{equation*}
Again, we see that the cohomology is not concentrated in a single degree.

We can calculate the multiplet associated to this module following a method similar to the previous section, beginning with the multiplets associated to the summands of the cohomology described above, and then computing the correction terms associated to homotopy transfer. We obtain a quasi-isomorphic multiplet
\begin{equation*}
\left( \bigoplus_k A^\bullet\big(\mathrm H^k(\fn, \cE)\big) , \cD' , \rho' \right) ,
\end{equation*}
where the new differential $\cD'$ contains terms induced via homotopy transfer. Individually, both $\mu A^\bullet\big(\mathrm H^0(\fn, \cE)\big)$ and $\mu A^\bullet\big(\mathrm H^1(\fn, \cE)\big)$ contain the field content of a chiral and an antichiral BRST multiplet. We have the following Betti numbers for the sum of the multiplets induced from the individual cohomology groups:
\begin{equation*}
\begin{bmatrix}
2 & 4 & 2 & - \\
- & 2 & 4 & 2
\end{bmatrix}.
\end{equation*}
There are explicit elements representing the cohomology which take the forrm
\begin{equation*}
\begin{bmatrix}
\big(e,\bar{e}\big) & \big(\theta_\alpha e ,\bar{\theta}_{\dot \alpha} \bar{e}\big) & \big(\theta^2 e, \bar{\theta}^2 ,\bar{e}\big) & -\\
- & \big(\lambda s , \bar{\lambda} \bar{s}\big) & \big(\lambda s \bar{\theta}_{\dot \alpha} , \bar{\lambda} \bar{s} \theta\big)_\alpha & \big(\lambda s \bar{\theta}^2 , \bar{\lambda} \bar{s} \theta^2\big)
\end{bmatrix} .
\end{equation*}

Now let's take a look at the induced differentials under homotopy transfer. For an explicit homotopy $h$ we can choose
\begin{equation*}
h(\lambda_\alpha e) = s_\alpha, \qquad h \big(\bar{\lambda}_{\dot \alpha} \bar{e}\big) = \bar{s}_{\dot \alpha} .
\end{equation*}
Applying this to the representatives, we find
\begin{gather*}
p \cD_0 h \cD_0 i (F) = p \cD_0 h \cD_0 \big(F \theta^2 e\big) = p((\lambda s) F) = F, \\
p \cD_1 h \cD_0 i (\psi) = p \cD_1 h \cD_0 (\psi \theta e) = p\big((\lambda s) \bar{\theta} \slashed{\partial} \psi\big) = \slashed{\partial} \psi,\\
p \cD_1 h \cD_1 i (\phi) = p \cD_1 h \cD_1 (\phi e) = p\big((\lambda s) \bar{\theta}^2 \partial^2 \phi\big) = \partial^2 \phi .
\end{gather*}
Similar results hold for the complex conjugate fields. Thus, we find that there are induced differentials of order zero, one and two. With respect to the weight grading, these operators alter the weight grading by minus one, one and three respectively. We can summarize the multiplet in the following diagram.
\begin{equation*}
\left[
\begin{tikzcd}[row sep=0.7cm, column sep=0.7cm]
\Omega^0 \otimes \CC^2 \arrow[drrr, "\star \d \star \d" ' near start] & \Omega^0 \otimes (S_+ \oplus S_-) \arrow[dr, "\slashed{\partial}" , near start ] &\Omega^0 \otimes \CC^2 \arrow[dl , "\mathrm{id}" ' near start, crossing over]\\
& \Omega^0 \otimes \CC^2 & \Omega^0 \otimes (S_+ \oplus S_-) & \Omega^0 \otimes \CC^2
\end{tikzcd} \right].
\end{equation*}
This is precisely the off-shell BV model for the chiral multiplet. Taking cohomology with respect to the acyclic differential we obtain the minimal multiplet $\mu A^\bullet(C^\bullet(\fn,\cE))$ which is precisely the original on-shell formulation that we started with.

\subsection*{Acknowledgements}
We would like to give special thanks to R.~Eager, J.~Walcher, and B.~R.~Williams for conversations and collaboration on related projects. We also gratefully acknowledge conversations with I.~Brunner, M.~Cederwall, I.~Contreras, O.~Gwilliam, J.~Huerta, J.~Palmkvist, and S.~Noja. In addition, we would like to thank the anonymous referees for useful suggestions. This work is funded by the Deutsche Forschungsgemeinschaft (DFG, German Research Foundation) under Germany’s Excellence Strategy EXC 2181/1 -- 390900948 (the Heidelberg STRUCTURES Excellence Cluster). I.S.\ is supported by the Free State of Bavaria.

\pdfbookmark[1]{References}{ref}
\LastPageEnding

\end{document}